\newtcolorbox{probbox}{arc=6pt,
                      colback=white!100,
                      colframe=black!50,
                      before skip=6pt,
                      after skip=6pt,
                      boxsep=1pt,
                      left=6pt,
                      right=6pt,
                      top=4pt,
                      bottom=4pt}
\newcommand{\decprob}[3]{
   \begin{center}%  
    \begin{minipage}{0.96\linewidth}%
      \begin{probbox}
      \textsc{#1}\\[0.2ex]
      \textbf{Input:} #2\\[0.2ex]
      \textbf{Question:} #3
      \end{probbox}
    \end{minipage}%
  \end{center}
}
\newcommand{\searchprob}[3]{
   \begin{center}%  
    \begin{minipage}{0.96\linewidth}%
      \begin{probbox}
      \textsc{#1}\\[0.2ex]
      \textbf{Input:} #2\\[0.2ex]
      \textbf{Goal:} #3
      \end{probbox}
    \end{minipage}%
  \end{center}
}
\theoremstyle{plain}
\newtheorem{Th}{Theorem}
\newtheorem{Claim}[Th]{Claim}
\newtheorem{Cor}[Th]{Corollary}
\theoremstyle{definition}
\newtheorem{Rem}[Th]{Remark}
\newtheorem{?}[Th]{Problem}
\newenvironment{claimproof}[1][\proofname]
  {%
    \proof[#1]%
  }
  {%
    \endproof%
  }
\DeclareMathOperator\cost{cost}
\DeclareMathOperator\OPT{OPT}
\DeclareMathOperator*{\argmin}{arg\,min}
\def\final{0}  % set this to 1 to get a comment-free version
\def\iflong{\iffalse}
\newcommand{\knote}[1]{{\color{red}[{\tiny \textbf{Krist\'of:} \bf #1}]\marginpar{\color{red}*}}}
\newcommand{\gnote}[1]{{\color{blue}[{\tiny \textbf{Gergő:} \bf #1}]\marginpar{\color{blue}*}}}
\newcommand{\tnote}[1]{{\color{green}[{\tiny \textbf{Tam\'as:} \bf #1}]\marginpar{\color{green}*}}}
\newcommand{\knote}[1]{}
\newcommand{\gnote}[1]{}
\newcommand{\tnote}[1]{}
\title{Manipulating the outcome of\\ stable matching and roommates problems}
\author{
Krist\'of B\'erczi\thanks{MTA-ELTE Momentum Matroid Optimization Research Group and MTA-ELTE Egerv\'ary Research Group, Department of Operations Research, E\"otv\"os Lor\'and University, Budapest, Hungary. Email: \texttt{kristof.berczi@ttk.elte.hu}.}
\and
Gergely Cs\'aji\thanks{MTA-ELTE Momentum Matroid Optimization Research Group, Department of Operations Research, E\"otv\"os Lor\'and University, Budapest, Hungary. Email: \texttt{csajigergely@student.elte.hu}.}
\and
Tam\'as Kir\'aly\thanks{MTA-ELTE Momentum Matroid Optimization Research Group and MTA-ELTE Egerv\'ary Research Group, Department of Operations Research, E\"otv\"os Lor\'and University, Budapest, Hungary. Email: \texttt{tamas.kiraly@ttk.elte.hu}.}
}
\begin{document}
\maketitle
%%%%%%%%%%%%%%%%%%%%%%%%

\begin{abstract}
The stable marriage and stable roommates problems have been extensively studied due to their high applicability in various real-world scenarios. However, it might happen that no stable solution exists, or stable solutions do not meet certain requirements. In such cases, one might be interested in modifying the instance so that the existence of a stable outcome with the desired properties is ensured.

We focus on three different modifications. In stable roommates problems with all capacities being one, we give a simpler proof to show that removing an agent from each odd cycle of a stable partition is optimal. 
We further show that the problem becomes NP-complete if the capacities are greater than one, or the deleted agents must belong to a fixed subset of vertices. 

Motivated by inverse optimization problems, we investigate how to modify the preferences of the agents as little as possible so that a given matching becomes stable. The deviation of the new preferences from the original ones can be measured in various ways; here we concentrate on the $\ell_1$-norm. We show that, assuming the Unique Games Conjecture, the problem does not admit a better than $2$ approximation. By relying on bipartite-submodular functions, we give a polynomial-time algorithm for the bipartite case. We also show that a similar approach leads to a 2-approximation for general graphs.

Last, we consider problems where the preferences of agents are not fully prescribed, and the goal is to decide whether the preference lists can be extended so that a stable matching exists. We settle the complexity of several variants, including cases when some of the edges are required to be included or excluded from the solution. 

\medskip

\noindent \textbf{Keywords:} Inverse optimization, Preference extension, Stable matching, Stable roommates problem, Vertex deletion
\end{abstract}

%%%%%%%%%%%%%%%%
\section{Introduction}
\label{sec:intro}
%%%%%%%%%%%%%%%%

The stable marriage problem was introduced by Gale and Shapley in 1962 in their seminal paper~\cite{gale1962college}. Since then, an enormous amount of research has been done in the field; see for example~\cite{manlove2013algorithmics} for a survey.

Our work belongs to the line of research that studies modifications of the preference system in order to get a stable solution with certain properties. Preference modifications in stable matching problems have been widely studied, mainly from a strategic point of view, where a given agent or a coalition of agents wants to submit false preferences in order to achieve better results. This setting is also closely related to inverse optimization problems, where we are given a feasible solution to an underlying optimization problem together with a linear weight function, and the goal is to modify the weights as little as possible so that the input solution becomes optimal; for further details on inverse optimization, see e.g. \cite{demange2010introduction,richter2021inverse}.

\paragraph{Previous work.}

Concerning edge modifications, Abraham et al.~\cite{abraham2005almost} considered the problem of finding a matching in the roommates problem that has a minimum number of blocking edges. They showed that the problem is hard and is also hard to approximate. Bir\'o et al.~\cite{biro2010size} studied a similar question for the marriage problem, where the goal is to find a maximum matching with minimum number of blocking edges. Tan~\cite{tan1990maximum} considered the problem of finding a maximum number of disjoint pairs of persons such that these pairs are stable among themselves, and gave an algorithm for determining an optimum solution.

Roth~\cite{roth1982economics} showed that there is no matching mechanism that produces a stable matching and is strategy-proof for every agent, but the Gale-Shapley algorithm is at least strategy-proof for one side. Coalitional manipulations were studied by Shen et al.~\cite{shen2018coalition}. Hosseini et al.~\cite{hosseini2022two} investigated cases where the coalition can contain both men and women. Aziz et al.~\cite{aziz2015susceptibility} investigated the case where the agents who want to manipulate the outcome have capacity larger than one, like firms or hospitals. Coalitional manipulations where a fixed subset of edges must be contained in the output of the Gale-Shapley algorithm has been studied by Kobayashi and Matsui~\cite{kobayashi2010cheating} and by Gupta and Roy~\cite{gupta2018stable}.

Modifying the preferences by `bribing' the agents in order to have a stable matching satisfying certain conditions was studied by Boehmer et al.~\cite{boehmer2021bribery}. They examined several variants in terms of manipulative actions and manipulation goals in bipartite preference systems. Later, Eiben et al.~\cite{eiben2021preference} extended their work to the capacitated case. Finding a matching that is robust with respect to small modifications of the preferences have been studied by Chen et al.~\cite{chen2021matchings}.

Stable matchings with forced and forbidden edges were considered by several papers, see Dias et al.~\cite{dias2003stable}, Cseh and Manlove~\cite{cseh2016stable}, Cseh and Heeger~\cite{cseh2020stable}. 

\paragraph{Our results.}

One of the difficulties in markets that can be described as stable roommates problems is that they might not have a stable solution at all. To overcome this difficulty, a natural idea is to exclude certain agents or contracts from the market in order to guarantee the existence of a stable outcome. Both cases have been studied already, we concentrate on the deletion of agents. The latter problem was previously solved by Tan~\cite{tan1990maximum} who gave an algorithm based on Irving's algorithm~\cite{irving1985efficient}. We give a significantly simpler approach to show that deleting an agent from each odd cycle in a stable partition is optimal. We also study the case when the cost of excluding someone might differ from agent to agent, or when only agents of some fixed subset are allowed to be removed, and show that these problems are NP-hard as well as when agents have nonnegative integer capacities.

In certain applications, it might happen that the central authority already has a most preferred outcome and it wants to make this outcome acceptable to the agents, too, by possibly compensating them. We concentrate on the problem where to goal is to bribe the agents in order to make a given outcome stable. We consider a more general framework than the one in~\cite{boehmer2021bribery} and~\cite{eiben2021preference}. In our framework, the preferences are given with real numbers, and instead of swap distance we work with the $\ell_1$-norm. This model allows to express that the differences between two adjacent agents in a preference list might differ a lot, whereas in swap distance interchanging any two of them would count as only one swap. In addition, this approach also makes possible to have ties among agents. We show that the problem is solvable in bipartite graphs in the capacitated setting, even when upper and lower bounds are given on the changes and the compensation cost differs for each agent. We also consider the roommate case which has not yet been investigated before. We show that the problem is NP-hard, but it admits a 2-approximation.

Finally, we investigate problems where the preferences are manipulated by the agents themselves in order to force or avoid certain contracts to form. While previous papers only considered manipulations with respect to the outcome of the Gale-Shapley algorithm, we investigate more general cases. Our goal is to ensure that there exists a stable matching with respect to the modified preferences.
% that satisfies the given conditions. 
This can be practically important, for example when a stable matching is chosen randomly, not by the Gale-Shapley algorithm. We also study cases where there are forbidden edges instead of forced ones, or some places in the preferences are fixed and cannot be manipulated. We show that almost all problems arising this way are NP-complete even in rather restricted settings. %For example, if the preferences are not complete and there are forbidden edges only, then deciding if there is a manipulation of the preferences such that there will be a stable matching avoiding all forbidden edges is NP-complete even if all men and women are willing to participate in the manipulation. 
On the positive side, we give an algorithm for the case when certain places in the preference lists are not allowed to be manipulated by the agents. 
\medskip

The rest of the paper is organized as follows. Basic notation and definitions are introduced in Section~\ref{sec:prelim}. Section~\ref{sec:del} focuses on problems where the goal is to ensure the existence of a stable matching through the deletion of a subset of vertices. Preference modifications in order to make a given outcome stable are discussed in Section~\ref{sec:opt}. Finally, in Section~\ref{sec:ext}, we investigate preference extension problems.

%%%%%%%%%%%%%%%%
\section{Preliminaries}
\label{sec:prelim}
%%%%%%%%%%%%%%%%

\paragraph{Basic notations.}

We denote the sets of \emph{real}, \emph{nonnegative real}, \emph{integer}, and \emph{nonnegative integer} numbers by $\mathbb{R}$, $\mathbb{R}_+$, $\mathbb{Z}$, and $\mathbb{Z}_+$, respectively. For a positive integer $k$, we use $[k]\coloneqq \{1,\dots,k\}$. 

Given a ground set $S$ and subsets $X,Y\subseteq S$, the \emph{difference} of $X$ and $Y$ is denoted by $X\setminus Y$. If $Y$ consists of a single element $y$, then $X\setminus\{y\}$ and $X\cup\{y\}$ are 
abbreviated by $X-y$ and $X+y$, respectively. We call a $3$-element subset a \emph{$3$-set} for short. Some hardness proofs in Section~\ref{sec:ext} will be based on the following NP-complete problems, see~\cite{garey1979computers}. 

\decprob{x3c}{
A family $\mathcal{C}=\{C_1,\dots ,C_{3n}\}$ of 3-sets of a ground set $a_1,\dots ,a_{3n}$.
}{
Does there exist $n$ 3-sets in $\mathcal{C}$ that partition $\{ a_1,\dots ,a_{3n}\}$?
}

\decprob{3sat}{
A Boolean formula $\Phi$ in conjunctive normal form where each clause is limited to at most three literals. 
}{
Does there exist a truth assignment to the variables for which $\Phi$ is true? 
}

Let $G=(V,E)$ be an undirected graph, $S\subseteq V$ be a subset of vertices, and $F\subseteq E$ be a subset of edges. The graphs obtained by \emph{deleting the vertices in $S$} (together with the edges incident to them) or \emph{the edges in $F$} are denoted by $G-S$ and $G-F$, respectively. We call $S$ a \emph{vertex cover} if $G-S$ has no edges. For a vertex $u\in V$, the \emph{set of neighbors of $u$} is denoted by $N(u)=\{v\in V\mid uv\in E\}$. The \emph{set of edges in $F$ incident to a vertex} $v\in V$ is denoted by $F(v)$. The following is a well-known NP-hard problem.

\searchprob{vertex-cover}{
A graph $G=(V,E)$. 
}{
Find a vertex cover of minimum size. 
}

\paragraph{Submodular and bipartite-submodular set functions.} 

A set function $f\colon 2^V\to R$ is called \emph{submodular} if $f(X)+f(Y)\ge f(X\cup Y)+f(X\cap Y)$ holds for every $X,Y\subseteq V$. It is known that this is equivalent to $f(X+a)+f(X+b)\geq f(X+a+b)+f(X)$ for every $X\subseteq V$ and distinct $a,b\in V\setminus X$. A function $f$ defined on the vertex set of a bipartite graph $G =(A,B; E)$ is \emph{bipartite-submodular} if there exist two submodular functions $f_A$ and $f_B$ defined on $A$ and $B$, respectively, such that $f(X)=f_A(X\cap A)+f_B(X\cap B)$ for every $X\subseteq A\cup B$. The following result appeared in~\cite{hochbaum2018complexity}.

\begin{Th}[Hochbaum]
\label{thm:submod}
Given a bipartite graph $G=(A,B;E)$ and a polynomial-time computable submodular cost function $c\colon 2^{A\cup B}\to E$, the minimum cost vertex cover problem can be 2-approximated in polynomial time. Furthermore, if $c$ is also bipartite-submodular, then a minimum cost vertex cover can be found in polynomial time.
\end{Th}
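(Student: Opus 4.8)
The plan is to route both parts through the standard continuous relaxation of minimum cost vertex cover built from the \emph{Lov\'asz extension}. Subtracting the constant $c(\emptyset)$ changes no optimum, so I may assume $c(\emptyset)=0$, and since $c$ is a cost function I also use $c\ge 0$. Let $\hat c\colon[0,1]^{A\cup B}\to\mathbb R$ be the Lov\'asz extension of $c$; it is convex because $c$ is submodular, and $\hat c(x)=\int_0^1 c(\{u:x_u\ge t\})\,dt$. The relaxation is
\[
 \min\bigl\{\hat c(x)\ \big|\ x_u+x_v\ge 1\text{ for all }uv\in E,\ x\in[0,1]^{A\cup B}\bigr\},
\]
and since $\1_S$ is feasible with $\hat c(\1_S)=c(S)$ for every vertex cover $S$, the optimum of the relaxation is at most $\OPT$. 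I would solve it in polynomial time by the ellipsoid method: the feasible polytope has an obvious separation oracle, and a value together with a subgradient of $\hat c$ at any point is obtained by sorting the coordinates and making $|A\cup B|$ calls to the value oracle of $c$ (Edmonds' greedy formula).

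For the $2$-approximation I would then apply \emph{threshold rounding} to an optimal fractional solution $x^\ast$. For $t\in(0,\tfrac12]$ put $S_t\coloneqq\{u:x^\ast_u\ge t\}$; as every edge $uv$ satisfies $x^\ast_u+x^\ast_v\ge 1$, one of its endpoints has $x^\ast$-value at least $\tfrac12\ge t$, so each $S_t$ is a vertex cover. Averaging over $t$ chosen uniformly on $(0,\tfrac12]$, the integral formula together with $c\ge 0$ gives $\E_t\bigl[c(S_t)\bigr]=2\int_0^{1/2}c(\{u:x^\ast_u\ge t\})\,dt\le 2\,\hat c(x^\ast)\le 2\,\OPT$, using $\int_0^{1/2}\le\int_0^1$. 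Since the sets $S_t$ take at most $|A\cup B|$ distinct values, scanning all of them and returning the cheapest one outputs a vertex cover of cost at most $2\,\OPT$. This proves the first statement.

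For the second statement I would bypass the relaxation and argue combinatorially, exploiting the separable form $c(X)=c_A(X\cap A)+c_B(X\cap B)$. In a bipartite graph a vertex cover $S$ is encoded by the pair $(X,Z)\coloneqq(S\cap A,\,B\setminus S)$; a pair $(X,Z)\in 2^A\times 2^B$ arises from a vertex cover exactly when $N(Z)\subseteq X$, and then $c(S)=c_A(X)+c_B(B\setminus Z)$. Writing $W\coloneqq A\setminus X$, the condition $N(Z)\subseteq X$ becomes ``no edge joins $W$ and $Z$'', so
\[
 \OPT=\min\bigl\{c_A(A\setminus W)+c_B(B\setminus Z)\ \big|\ W\subseteq A,\ Z\subseteq B,\ E(W,Z)=\emptyset\bigr\}.
\]
I would observe that the pairs $(W,Z)$ with $E(W,Z)=\emptyset$ are closed under $(W,Z)\vee(W',Z')\coloneqq(W\cup W',\,Z\cap Z')$ and $(W,Z)\wedge(W',Z')\coloneqq(W\cap W',\,Z\cup Z')$ — any $W$--$Z$ edge in a meet or a join is already present in one of the two given pairs — so they form a distributive lattice, namely a ring family on $A\cup B$ after replacing the $Z$-coordinate by its complement. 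On this lattice the map $(W,Z)\mapsto c_A(A\setminus W)+c_B(B\setminus Z)$ is submodular, because $W\mapsto c_A(A\setminus W)$ and $Z\mapsto c_B(B\setminus Z)$ are reflections of submodular functions and hence submodular. Minimising a submodular function over a ring family reduces to ordinary submodular function minimisation (equivalently, one adds a large cut penalty and minimises an unconstrained submodular function), which is polynomial-time solvable; this yields the exact algorithm.

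I expect the first statement to be routine: it is the classical LP-rounding proof of the $2$-approximability of vertex cover, with the Lov\'asz extension providing the convex relaxation and Edmonds' formula the oracle, the only genuine ingredient being that $c\ge 0$ turns the truncated integral into a valid lower bound. The main obstacle is the second statement — recognising that the \emph{difference-constrained} feasible region together with a \emph{separable} submodular objective carries exactly the lattice (equivalently, min-cut/closure) structure that makes an exact algorithm possible. This is precisely where bipartiteness and bipartite-submodularity enter, and the argument collapses for a non-bipartite graph or a non-separable submodular $c$, which is consistent with the paper using only the $2$-approximation in the general roommates setting.
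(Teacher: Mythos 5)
The paper does not actually prove this statement: it is imported as Theorem~\ref{thm:submod} from Hochbaum's work and used as a black box, so there is no internal proof to compare against. Your argument is a correct, self-contained derivation along the standard lines. For the factor-$2$ part, minimizing the Lov\'asz extension over the vertex-cover constraints and then threshold-rounding at a threshold in $(0,\tfrac12]$, with the averaging computation $\E_t[c(S_t)]\le 2\hat c(x^*)\le 2\OPT$, is exactly the Iwata--Nagano/Hochbaum-style argument. For the exact part, encoding a vertex cover $S$ by $(W,Z)=(A\setminus S,\,B\setminus S)$, noting that the constraint ``no edge between $W$ and $Z$'' is a closure condition with respect to edges directed from $A$ to $B$ (hence the feasible sets form a ring family), and that the separable objective $c_A(A\setminus W)+c_B(B\setminus Z)$ is submodular there, correctly reduces the problem to submodular function minimization over a ring family, which is polynomial; this is precisely where bipartite-submodularity is used, and it is the same closure/project-selection structure that underlies Hochbaum's result, so the route is the standard one rather than a genuinely new decomposition. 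Two small points to tighten: (i) after subtracting $c(\emptyset)$ the function need not stay nonnegative, and nonnegativity of $c(\{u:x^*_u\ge t\})$ for $t>\tfrac12$ is what lets you drop the tail of the integral, so it is cleaner to state the (harmless, and satisfied in this paper's application, where $c=\sum_v f_v$ with $f_v\ge 0$ and $f_v(\emptyset)=0$) hypothesis $c\ge 0$, $c(\emptyset)=0$ rather than derive it; (ii) the ellipsoid method only returns an $\varepsilon$-optimal fractional point, so either choose $\varepsilon$ against the input precision and accept $2\OPT+2\varepsilon$ with a final cleanup, or invoke half-integrality of this relaxation (equivalently, solve it exactly as a submodular minimization instance) to get the clean factor $2$. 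Neither issue affects the soundness of the approach.
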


\paragraph{Stable matchings.}

Let $G=(V,E)$ be an undirected simple graph, and let $<_u$ be a linear order on $N(u)$ for each $u \in V$. The pair $(G,>)$ is called a \emph{stable roommates instance}. If $G$ is bipartite, then $(G,>)$ is a \emph{stable marriage instance}. If ties are allowed, i.e., $<_u$ are weak orders, then $(G,>)$ is a \emph{stable roommates instance with ties}. For a stable roommates instance $(G=(V,E),>)$ and a matching $M\subseteq E$, an edge $uv \in E\setminus M$ is \emph{blocking} if $u$ prefers $v$ to its partner in $M$ (or has no partner in $M$) and $v$ prefers $u$ to its partner in $M$ (or has no partner in $M$). If no blocking edge exists, then $M$ is a \emph{stable matching}.

Let $(G,>)$ be a stable roommates instance. A \emph{stable partition of $(G,>)$} is a permutation $\pi\colon V\to V$ such that for each $u\in V$, (i) if $\pi(u)\ne \pi^{-1}(u)$, then $u\pi (u),u\pi^{-1}(u)\in E$ and $\pi (u)>_u\pi^{-1}(u)$, and (ii) for each $v$ adjacent to $u$, if $\pi (u)=u$ or $v>_u\pi^{-1}(u)$, then $\pi^{-1}(v)>_v u$. The cycles of the permutation $\pi$ partition the agents into \emph{odd} and \emph{even cycles} depending on the parity of their lengths.  We call an agent $u$ a \emph{singleton in $\pi$} if $\pi (u)=u$. The following fundamental result is due to Tan~\cite{tan1991necessary}. 

\begin{Th}[Tan]\label{thm:tan}
Any stable roomates instance $(G,>)$ admits a stable partition $\pi$, and such a partition can be found in $\mathcal{O}(|V|^2)$ time. Furthermore, each stable partition has the same set of singleton agents and the same set of odd cycles. There exists a stable matching if and only if there exist a stable partition without odd cycles of length $\ge 3$.
\end{Th}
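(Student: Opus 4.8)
This is a classical theorem of Tan, and I would organize a proof around its three assertions: existence together with the running-time bound, invariance of the singleton set and of the family of odd cycles, and the characterization of solvable instances. For existence I would run an Irving-type \emph{proposal phase}: each agent proposes down its list, an agent currently holding a proposal from $v$ rejects every later proposal that is worse than $v$, and each rejection of $u$ by $v$ deletes the pair $uv$ from both lists. This phase costs $\mathcal{O}(|E|)=\mathcal{O}(|V|^2)$ and produces reduced lists with the \emph{table property}: $v$ is first on $u$'s list if and only if $u$ is last on $v$'s list. I would then repeatedly locate a cyclic sequence $u_0,\dots,u_{k-1}$ (a \emph{rotation}) in which $u_{i+1}$ is the second entry of $u_i$'s current list and $u_i$ is the last entry of $u_{i+1}$'s list. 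In the bipartite case every rotation has even length and can be \emph{eliminated} by deleting the pairs $u_iu_{i+1}$, exactly as in Irving's second phase; the non-bipartite phenomenon is that a rotation may be \emph{odd}, in which case I would instead commit $u_0,\dots,u_{k-1}$ as an odd cycle of $\pi$ (with $\pi(u_i)=u_{i+1}$) and delete these agents from the table. Iterating, all lists shrink to length one; the length-one lists form a matching declared to be the $2$-cycles of $\pi$, agents left with empty lists become singletons, and the committed odd rotations are the odd cycles. Charging each deletion to a distinct pair keeps the total cost $\mathcal{O}(|V|^2)$. (For a quick but non-constructive existence proof one may instead use that the fractional stable matching polytope of $(G,>)$ is nonempty and half-integral, with the $\tfrac12$-valued edges forming vertex-disjoint odd cycles, which rounds directly to a stable partition.)

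The substantive part is verifying that the permutation $\pi$ produced this way satisfies clauses (i) and (ii) of the definition. I would do this by maintaining, through the reduction, the invariant that no deleted pair is ever ``usable'': a pair removed during the proposal phase is witnessed by one endpoint permanently holding a strictly preferred partner, and whenever a rotation is eliminated or an odd cycle committed, the second-choice/last-choice structure forces that (a) the successor/predecessor assignment we fix satisfies $\pi(u)>_u\pi^{-1}(u)$, which is clause (i), and (b) any neighbour $v$ of a committed agent $u$ with $v>_u\pi^{-1}(u)$ (or with $u$ a singleton) already holds someone it prefers to $u$, which is clause (ii). Propagating these local facts across the deletions is the delicate, though essentially bookkeeping, heart of the proof.

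For the invariance of singletons and odd cycles I would compare two stable partitions $\pi,\pi'$ by an exchange argument: following successors and predecessors under $\pi$ and under $\pi'$ alternately and applying clause (ii) of both partitions at each step, one shows that the preferences of consecutive agents cannot all drift in the same direction along such an alternating structure, and this rigidity forces $\pi$ and $\pi'$ to agree on the singletons and, more precisely, pairs their odd cycles vertex by vertex. This is the familiar alternating-path/lattice argument from stable matching theory, the only new point being that the relevant alternating structures need not be even, which is exactly why the odd cycles, rather than a matching, are the invariant. The characterization is then short. If $M$ is a stable matching, then $M$ read as a permutation (its edges as $2$-cycles, unmatched agents as fixed points) is a stable partition with no odd cycle of length $\ge 3$, since for this permutation clause (i) is vacuous and clause (ii) is literally the statement that $M$ has no blocking edge. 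Conversely, if $\pi$ has no odd cycle of length $\ge 3$ then every non-singleton cycle is even, so $2$-colouring each such cycle and taking the alternating edges yields a perfect matching $M$ on the non-singletons with $M(u)\ge_u\pi^{-1}(u)$ for every $u$; feeding this inequality into clause (ii) of $\pi$ shows that $M$ has no blocking edge. Together with the invariance this also yields the sharper dichotomy that either all stable partitions avoid long odd cycles or none does.

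The main obstacle, in my view, is the construction together with its correctness verification: getting the odd-rotation extraction right in the non-bipartite setting and proving that its output satisfies both clauses of the definition. Once that is in place, the $\mathcal{O}(|V|^2)$ bound is Irving's accounting, the characterization is a short translation between ``blocking edge'' and clause (ii), and the invariance, while it needs care, follows the standard template.
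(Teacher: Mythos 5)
First, a point of comparison: the paper does not prove Theorem~\ref{thm:tan} at all — it is imported as a black box from Tan~\cite{tan1991necessary} — so there is no in-paper argument to measure you against; your outline is essentially a reconstruction of Tan's original algorithmic proof (Irving's proposal phase, rotation processing, committing odd parties, then translating between matchings and partitions). The part you do carry out in full, the equivalence ``stable matching exists iff some (equivalently, every) stable partition has no odd cycle of length $\ge 3$,'' is correct: reading a stable matching as a permutation makes clause (i) vacuous and clause (ii) literally the no-blocking condition, and conversely pairing up each even cycle gives a matching $M$ with $M(u)\ge_u\pi^{-1}(u)$ for every matched $u$, after which clause (ii) of $\pi$ kills any would-be blocking edge. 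That translation matches exactly how the present paper later uses the theorem (e.g.\ in Algorithm~\ref{alg:del-vert}).

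The gap is in the construction itself, which is where the real content of Tan's theorem lies. Concretely: (a) your definition of a rotation is internally inconsistent — after the proposal phase the reduced table has the reciprocity you state (``$v$ first on $u$'s list iff $u$ last on $v$'s list''), so requiring ``$u_i$ is last on $u_{i+1}$'s list'' already forces $u_{i+1}$ to be \emph{first} on $u_i$'s list, contradicting your requirement that it be \emph{second}; the correct object is a cyclic sequence of \emph{pairs} $(x_i,y_i)$ with $y_i$ first and $y_{i+1}$ second on $x_i$'s list, and in the roommates setting the $x$'s and $y$'s need not be disjoint or coincide. (b) Because of this, your commit/eliminate rule ``even cycle $\Rightarrow$ eliminate, odd cycle $\Rightarrow$ commit as an odd party'' is not a correct criterion as stated: the rotations that genuinely cannot be eliminated and give rise to odd parties are the self-dual (singular) ones, not simply those of odd length, and nothing in your sketch excludes committing a rotation inside an instance that in fact admits a stable matching — which, by the invariance clause of the very theorem you are proving, would mean your output is not a stable partition. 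Proving that the committed cycles, the surviving length-one lists, and the singletons jointly satisfy clauses (i) and (ii) — in particular clause (ii) for neighbours of committed agents whose own processing happens later — is the heart of Tan's proof, and calling it ``bookkeeping'' leaves it unproven. (c) The invariance of the singleton set and of the odd cycles is only asserted via a generic exchange/alternating-structure argument; since your final ``dichotomy'' remark relies on it, it too needs an actual proof. So the skeleton is the right one, but the two load-bearing steps (correctness of the commit step and invariance) are missing rather than merely compressed.
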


We will also need the notion of weak stability. Let $(G=(V,E),>)$ be a stable roommates instance with ties, and let $M\subseteq E$ be a matching. An edge $uv \in E\setminus M$ \emph{strongly blocks $M$} if $u$ strictly prefers $v$ to its partner in $M$ (or has no partner in $M$) and $v$ strictly prefers $u$ to its partner in $M$ (or has no partner in $M$). If no strongly blocking edge exists, then $M$ is a \emph{weakly stable matching}.

We also consider cases where the vertices of $G$ have nonnegative integer capacities $q(v)$. In such an instance, we call $M\subseteq E$ a \textit{$q$-matching} if $|M(v)|\le q(v)$ for each $v\in V$.
%, where $M(v):=\{ u \mid uv\in M\}$. 
A vertex is \emph{unsaturated} in $M$ if $|M(v)|<q(v)$.
We say that an edge $uv\in E\setminus M$ \textit{blocks} $M$ if both $u$ and $v$ are either unsaturated by $M$ or strictly prefer the other to some of their partners in $M$. The $q$-matching $M$ is called a \textit{stable $q$-matching} if it is not blocked by any edge. We also say that an edge $uv\in E\setminus M$ is \textit{dominated} at $u$ if $|M(u)|=q(u)$ and for all $xu\in M(u)$, we have $x >_u v$. Clearly a $q$-matching $M$ is stable if and only if each $uv\in E\setminus M$ is dominated at either $u$ or $v$.

The hardness of the following problem, proved by Manlove et al.~\cite{Manlove_etal2002}, is a useful tool for proving NP-completeness of various stable matching problems.

\decprob{com-smti}{An instance $(G=(A,B;E),>)$ of the stable marriage problem in which ties only occur at some vertices in $B$ with exactly two neighbors.}{ Is there a perfect weakly stable matching?}

%\begin{Th}[Manlove et al.] \label{thm:com-smti}
%\textsc{com-smti} is NP-complete even if ties only occur at some vertices in $B$ with exactly two %neighbors. 
%\end{Th}

The following two versions of the stable marriage with ties problem was proven to be NP-complete by Cseh and Heeger \cite{cseh2020stable}.

\decprob{1-forced-smti}{
An instance $(G=(A,B;E),>)$ of the stable marriage problem with ties and a forced edge $e\in E$.
}
{
Is there a weakly stable matching containing $e$?
}

\decprob{1-forbidden-smti}{
An instance $(G=(A,B;E),>)$ of the stable marriage problem with ties and a forbidden edge $f\in E$.
}
{
Is there a weakly stable matching not containing $f$?
}

%\begin{Th}{(Cseh and Heeger)}
%\label{thm:1-smti}
%\textsc{1-forced-smti} and \textsc{1-forbidden-smti} are NP-complete.
%\end{Th}

A \emph{one-sided preference model} involves a bipatite graph $G=(A,B;E)$ and linear orders $>_a$ on $N(a)$ for every $a \in A$. A matching $M$ is \emph{Pareto-optimal} if there is no matching $M'$ that is better than $M$ for some agent in $A$, and at least as good as $M$ for the other agents in $A$. Some hardness proofs in Section~\ref{sec:ext} will be based on the following NP-complete problem, introduced by Abraham et al.~\cite{abraham2004pareto}. 

\decprob{min-pom}{
A bipartite graph $G=(A,B;E)$ with preferences for each $a\in A$ and a non-negative integer $k$.
}{
Does there exist a Pareto-optimal matching of size at most $k$?
}

Given a bipartite graph $G=(A,B;E)$ with preferences for each $a\in A$ and a matching $M$, a \emph{coalition} is a cycle $(a_1,b_1,a_2,\dots ,b_r)$, such that $a_ib_i\in M$ for each $i\in [r]$ and each agent prefers $b_{i+1} $ to $b_i$. The matching is \emph{non-wasteful} if there is no element of $B$ that is not covered by $M$, but is preferred by an agent in $A$ to his current partner. Abraham et al. verified the following.

\begin{Th}[Abraham et al.] \label{thm:pareto}
A matching $M$ (in the one sided preference model) is Pareto-optimal if and only if it is maximal, non-wasteful, and coalition-free.
\end{Th}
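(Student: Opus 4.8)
The plan is to prove the two implications separately, both by contraposition. For the ``only if'' direction I would show that whenever $M$ fails one of the three properties, a small local change produces a matching that is at least as good as $M$ for every agent of $A$ and strictly better for at least one of them, contradicting Pareto-optimality. Concretely: if $M$ is not maximal, then adding an edge between two unmatched endpoints helps one agent of $A$ and changes nobody else; if $M$ is wasteful, then some uncovered $b\in B$ is preferred by an agent $a$ to $a$'s partner in $M$, and rematching $a$ to $b$ (dropping $a$'s old edge, if any) improves $a$ while leaving every other agent of $A$ untouched; and if $M$ admits a coalition $(a_1,b_1,\dots,b_r)$, then the rotation that replaces each $a_ib_i$ by $a_ib_{i+1}$ (indices modulo $r$) is again a matching, in which every $a_i$ strictly improves and all other agents are unaffected. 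Each of these is a one-line construction, so this direction is routine.

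For the ``if'' direction I would argue the contrapositive: assume $M$ is not Pareto-optimal and fix a matching $M'$ that is at least as good as $M$ for all of $A$ and strictly better for some $a_0\in A$; the goal is to exhibit an uncovered vertex witnessing wastefulness, or a coalition. Starting from $a_0$, I would follow its $M'$-edge $a_0b_1$ inside the symmetric difference $M\triangle M'$ and then alternately follow $M$- and $M'$-edges. The key point is that every agent of $A$ met after $a_0$ is entered along an edge of $M$, hence is matched by $M$; by the domination hypothesis (using the convention that an unmatched agent is worse off than being matched to any acceptable partner) such an agent is then matched by $M'$ as well, to a different partner, and strictly prefers that $M'$-partner to its $M$-partner --- so in particular the walk can always be continued at an agent. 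Since the graph is finite, the walk therefore either terminates at a vertex of $B$ that is uncovered by $M$ --- and its predecessor in the walk strictly prefers it to its own $M$-partner, so $M$ is wasteful --- or it closes up, so that the visited vertices form an alternating cycle through $a_0$. In the latter case, reading off the $M$-edges of the cycle and invoking the strict-preference observation yields a cyclic sequence of $M$-matched agents, each preferring the $M$-partner of its successor to its own, which is exactly a coalition of $M$. Either way the hypotheses are contradicted.

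I expect the bookkeeping of the ``if'' direction to be the only genuine obstacle: one has to check that the alternating walk never gets stuck on the $A$-side (this is where being matched by both $M$ and $M'$ is used), that an edge belonging to both $M$ and $M'$ can never lie on the walk (so the extracted cycle has length at least two and the resulting coalition is legitimate), and that unmatched agents --- in particular possibly $a_0$ itself --- are handled uniformly by the ``unmatched is worst'' convention. It is also worth noting that non-wastefulness already implies maximality, since an uncovered vertex adjacent to an unmatched agent would itself witness wastefulness; hence the maximality hypothesis is not actually needed in the ``if'' direction, though I would keep all three properties in the statement to match the intended characterization.
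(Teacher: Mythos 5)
The paper does not actually prove this statement: it is quoted from Abraham et al.~\cite{abraham2004pareto} without proof, so there is no in-paper argument to compare against. Your proposal is essentially the standard proof from that source: the easy direction by three local improvements, and the converse by taking a Pareto improvement $M'$ and walking through $M\triangle M'$ (equivalently, analysing its path/cycle components), extracting an augmenting edge, a trade-in/wasteful pair, or a coalition. The walk argument is sound: at each $A$-vertex after $a_0$ the agent is $M$-matched, hence $M'$-matched to a strictly better and different partner (the $M$-edge just used cannot lie in $M'$ because its $B$-endpoint's $M'$-partner is the previous agent), so the walk never stalls, never repeats an edge of $M\cap M'$, and by the unique-predecessor property can only close up at $a_0$ or end at a $B$-vertex uncovered by $M$. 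The one point to make explicit is the terminating case when $a_0$ itself is unmatched in $M$: under the paper's definitions, an uncovered $b\in B$ acceptable to an $M$-unmatched agent witnesses non-maximality, not wastefulness (non-wastefulness as defined only speaks of agents with a current partner), so your remark that non-wastefulness subsumes maximality depends on your ``unmatched is worst'' convention and should either be dropped or the convention stated up front; with that case routed to the maximality clause, all three hypotheses are used exactly as in the statement and the proof goes through.
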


%%%%%%%%%%%%%%%%
\section{Vertex deletion}
\label{sec:del}
%%%%%%%%%%%%%%%%

In this section, we study the problem of removing vertices so that the remaining graph has a stable matching. Two variants are considered: in the first one, the aim is to remove the minimum number of vertices, while in the second one, an arbitrary number of vertices can be removed, but only from a given vertex subset $T \subseteq V$.  

Let $G=(V,E)$ be a graph with strict preferences $>_v$ for each vertex $v \in V$. A vertex set $S\subseteq V$ is \emph{removable} if $G-S$ admits a stable matching. When vertex capacities $q(v)$ are considered, a vertex set $S\subseteq V$ is \textit{removable} if $G-S$ admits a stable $q$-matching.

\decprob{vertex-del-sr}{
A graph $G=(V,E)$ with strict preferences $>_v$ on the vertices and an integer $k\in\mathbb{Z}_+$.
}{
Does there exist a removable set $S\subseteq V$ such that $|S|\leq k$?
}

\decprob{vertex-subset-del-sr}{
A graph $G=(V,E)$ with strict preferences $>_v$ on the vertices and a vertex subset $T\subseteq V$.
}{
Does there exist a removable set $S\subseteq T$?
}

\decprob{vertex-del-sqm}{
A graph $G=(V,E)$ with strict preferences $>_v$ and capacities $q(v)$ on the vertices, and an integer $k\in\mathbb{Z}_+$.
}{
Does there exist a removable set $S\subseteq V$ such that $|S|\leq k$?
}

The analogous edge-deletion problems are also of interest, and have already been studied in the literature. The minimization version is equivalent to finding a matching that has the smallest possible number of blocking pairs. This problem was shown to be NP-hard by Abraham et al.~\cite{abraham2005almost}, even for complete preference lists. The subset version is equivalent to the following: given an edge subset $F\subseteq E$, is there a matching in $G$ for which every blocking edge is in $F$? This problem is called \emph{stable roommates with free edges}, and was shown to be NP-complete by Cechl{\'a}rov{\'a} and Fleiner~\cite{egres-09-01}.

Among the vertex deletion problems, \textsc{vertex-del-sr} was considered by Tan~\cite{tan1990maximum} who gave an algorithm based on Irving's algorithm \cite{irving1985efficient}. In the following, we give a significantly simpler algorithm for \textsc{vertex-del-sr}. We also show that \textsc{vertex-subset-del-sr} is NP-complete. The latter implies that finding a minimum weight removable set is NP-complete even for vertex weights in $\{0,1\}$, and it is inapproximable, since it is hard to decide if there is a removable set of weight 0.

%\decprob{del-vertex-srt}{
%A graph $G=(V,E)$ with weak preferences $>_v$ on the vertices and $k\in\mathbb{Z}_+$.
%}{
%Does there exist a set $S\subseteq V$ such that $|S|\leq k$ and $G-S$ admits a weakly stable matching?
%}

%\decprob{del-edge-sr}{
%A graph $G=(V,E)$ with strict preferences $>_v$ on the vertices and $k\in\mathbb{Z}_+$.
%}{
%Does there exist a set $F\subseteq E$ such that $|F|\leq k$ and $G-F$ admits a stable matching?
%}

We show that it is very easy to determine a minimum cardinality removable vertex set using a stable partition. The algorithm is presented as Algorithm~\ref{alg:del-vert}; it chooses an arbitrary vertex from each odd cycle in the stable partition. Note, however, that a removable set does not necessarily contain a vertex from each odd cycle, so optimality is not obvious. We show the correctness of the algorithm in the proof of the theorem below.

\begin{algorithm}[h!]
  \caption{Algorithm for \textsc{vertex-del-sr}.} \label{alg:del-vert}
  \begin{algorithmic}[1]
    \Statex \textbf{Input:} A graph $G=(V,E)$ with strict preferences $>_v$ on the vertices.
    \Statex \textbf{Output:} A smallest removable set $S\subseteq V$ and a stable matching $M$ in $G-S$. 
    \State Set $S\leftarrow \emptyset$ and $M\leftarrow\emptyset$.
    \State Find a stable partition $\pi$ using Theorem~\ref{thm:tan}.  
    \State Let $C_1,\dots ,C_k$ be the set of odd cycles of $\pi$ of length $\ge 3$, and let $C_{k+1},\dots ,C_{l}$ be the set of even cycles of $\pi$. Let $\{ v^i_0,\dots,v^i_{j_i}\}$ denote the vertices of $C_i$, where $\pi(v^i_{j_i})=v^i_0$, and $\pi(v^i_j)=v^i_{j+1}$ otherwise.
    \For{$i=1,\dots,k$}
        \State Set $S\leftarrow S+v^i_0$ and $M\leftarrow M\cup\{v_{2j-1}v_{2j}\mid 1\leq j\leq j_i/2\}$.
    \EndFor
    \For{$i=k+1,\dots ,l$}
        \State Set $M\leftarrow M\cup\{v_{2j-2}v_{2j-1}\mid 1\leq j\leq j_i/2\}$.
    \EndFor
    \State \textbf{return} $S$, $M$ \label{st:end}
  \end{algorithmic}
\end{algorithm}

\begin{Th}
Algorithm~\ref{alg:del-vert} runs in in polynomial time and finds a smallest removable set.
\end{Th}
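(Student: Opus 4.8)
The running time is immediate: by Theorem~\ref{thm:tan} a stable partition $\pi$ can be found in $\mathcal{O}(|V|^2)$ time, and the two \textbf{for} loops merely traverse the cycles of $\pi$, which takes linear time. So the real content is optimality, i.e. that the output set $S$ — one vertex from each odd cycle of length $\ge 3$ — is a minimum-cardinality removable set.

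The plan splits into two halves. First, I would show that $S$ \emph{is} removable, by checking that the matching $M$ constructed by the algorithm is stable in $G-S$. The idea is that $M$ restricted to each even cycle $C_i$ (resp. to each odd cycle with its chosen vertex $v^i_0$ deleted) is exactly the "shift by one" matching along the cycle, and deleting $v^i_0$ from an odd cycle turns it into an even path that is perfectly matched by $M$. One then verifies directly from the defining properties (i)–(ii) of a stable partition that no edge $uv$ can block $M$: if $uv\in E$ with both endpoints in $G-S$, then by the "stable partition" inequalities either $u$ weakly prefers its $\pi$-successor (hence its $M$-partner, which is at least as good) to $v$, or $v$ weakly prefers its $M$-partner to $u$; since preferences are strict and $M$ matches every vertex of $G-S$, this rules out a blocking pair. (Here it is convenient to observe that $\pi$ restricted to $G-S$ is still a stable partition of $G-S$ with no odd cycles of length $\ge 3$, and then invoke the "if" direction of Theorem~\ref{thm:tan}.)

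Second — and this is the step I expect to be the crux — I would show no smaller removable set exists, i.e. that \emph{every} removable set $S'$ has $|S'|\ge k$, the number of odd cycles of $\pi$ of length $\ge 3$. The natural argument: let $S'$ be removable, so $G-S'$ has a stable matching $M'$. Take a stable partition $\pi'$ of $G-S'$ (Theorem~\ref{thm:tan}); since $G-S'$ has a stable matching, $\pi'$ has no odd cycles of length $\ge 3$, so every agent of $G-S'$ is either a singleton of $\pi'$ or lies on an even cycle. Now the key fact I would lean on is the structural rigidity from Theorem~\ref{thm:tan}: the singletons and odd cycles of a stable partition are invariant. One must be careful because $\pi'$ is a stable partition of the \emph{modified} instance $G-S'$, not of $G$; the claim I need is that deleting a single vertex can "destroy" at most one odd cycle, in the sense that the odd cycles of $G-S'$'s stable partition are obtained from those of $\pi$ by removing those that meet $S'$ (each contributing at least one deleted vertex) while no new odd cycle of length $\ge 3$ appears among the surviving agents. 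The cleanest way to see this is: restrict $\pi$ to $V\setminus S'$ where possible, repair it into a stable partition of $G-S'$, and use uniqueness of odd cycles to conclude every odd cycle of $\pi$ of length $\ge 3$ must contain a vertex of $S'$, forcing $|S'|\ge k$. If a direct "repair" argument is delicate, an alternative is to argue one vertex at a time: deleting one vertex from $G$ decreases the number of odd cycles of length $\ge3$ by at most one (since a stable partition of $G-v$ agrees with one of $G$ except near $v$), and iterate.

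Hence $|S|=k\le|S'|$ for every removable $S'$, so $S$ is optimal; combined with the first half, $S$ is a minimum-cardinality removable set and $M$ is a stable matching in $G-S$, as claimed. $\qed$
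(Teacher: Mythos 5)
Your first half (feasibility of the output) is fine and follows the same route as the paper: the shift-by-one matching along each cycle, with blocking edges excluded via the defining inequalities of a stable partition. The problem is the second half, which you yourself identify as the crux. Your ``cleanest way'' ends with the conclusion that every odd cycle of $\pi$ of length $\ge 3$ must contain a vertex of $S'$. That statement is false, and the paper warns about exactly this right before the theorem: a removable set does \emph{not} necessarily contain a vertex from each odd cycle. An odd cycle can be neutralized without touching it, e.g.\ by deletions elsewhere that free up an outside agent whom some cycle vertex prefers, so that in the stable matching of $G-S'$ that cycle vertex is matched outside the cycle. Relatedly, the invariance of odd cycles in Theorem~\ref{thm:tan} holds only among stable partitions of one fixed instance; restricting $\pi$ to $V\setminus S'$ is in general not a stable partition of $G-S'$, and ``repairing'' it can change the cycle structure far away from $S'$, so no comparison of odd cycles between $\pi$ and a stable partition of $G-S'$ comes for free.

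Your fallback lemma --- deleting a single vertex decreases the number of odd cycles of length $\ge 3$ by at most one --- is true, but it is essentially equivalent to the optimality statement you are trying to prove (it follows from the theorem, not the other way around), and the justification you give for it, that a stable partition of $G-v$ ``agrees with one of $G$ except near $v$,'' is not valid: a single deletion can trigger global restructuring of the stable partition. The paper closes this gap with a genuinely different, purely counting argument: relative to an arbitrary removable set $S^*$ and stable matching $M^*$ of $G-S^*$, it defines \emph{out-dominated} vertices (matched in $M^*$ to a partner better than their $\pi^{-1}$-partner, other than their $\pi$-partner) and their partners, the \emph{in-dominators}, shows that on each odd cycle the number of in-dominators is at most the number of out-dominated plus deleted vertices minus one (on even cycles, without the minus one; singletons are never in-dominators), and then uses the global equality of the numbers of in-dominators and out-dominated vertices to conclude $|S^*|\ge k$. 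Without such an argument (or an honest proof of your one-vertex-at-a-time lemma), your lower-bound half does not go through.
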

\begin{proof}
By Theorem~\ref{thm:tan}, the algorithm has polynomial running time. We prove the correctness in two steps. First, we show that Algorithm~\ref{alg:del-vert} outputs a removable set.

\begin{Claim}\label{cl:feas}
Algorithm~\ref{alg:del-vert} outputs a set $S\subseteq V$ and a matching $M$ that is stable in $G-S$.
\end{Claim}
\begin{claimproof}
Let $uv$ be an edge of $G-S$ not in $M$; we show that $uv$ is not a blocking edge. Recall that $\pi$ denotes the stable partition of $(G,>)$ obtained by applying Tan's algorithm. 

Assume first that at least one of $u$ and $v$ is a singleton in $\pi$, say, $\pi(u)=u$. Then for any of its neighbors $w\in N(u)$, we have $\pi(w)\neq w$, $\pi(w)>_w u$, and $\pi^{-1}(w)>_w u$ since $\pi$ is a stable partition. In particular, this holds for $v$. Since $v$ is matched to either $\pi(v)$ or $\pi^{-1}(v)$ in $M$, $uv$ is not a blocking edge. 

Consider the case when $\pi(u)\neq u$ and $\pi(v)\neq v$. Since every vertex in $V\setminus S$ that is not a singleton in $\pi$ gets a partner in $M$, both $u$ and $v$ are matched by the algorithm. If $u\neq\pi(v)$ and $v\neq\pi(u)$, then $\pi(v)>_v u,\pi^{-1}(v)>_v u$ or $\pi(u)>_u v,\pi^{-1}(u)>_u v$ hold as $\pi$ is a stable partition. As $u$ is matched to one of the vertices $\pi(u),\pi^{-1}(u)$ and $v$ is matched to one of the vertices $\pi(v),\pi^{-1}(v)$, $uv$ is not a blocking edge. Therefore we may assume that $\pi(u)=v$. However, as $u$ and $v$ are not in $S$ and $uv$ is not in $M$, $v$ is matched to $\pi(v)$ in $M$. By $\pi(v)>_v u$, $uv$ is not a blocking edge. 
\end{claimproof}

It remains to show that the size of a removable set cannot be smaller than the number of odd cycles in a stable partition. 

\begin{Claim}\label{cl:opt}
The minimum size of a removable set is equal to the number of odd cycles of length at least $3$ in any stable partition $\pi$.
\end{Claim}
\begin{claimproof}
Let $S^*$ be a removable set, and let $M^*$ be a stable matching in $G-S^*$. Since $\pi$ is a stable partition, every odd cycle of length at least $3$ in $\pi$ must either have a vertex in $S^*$ or a vertex $u$ matched in $M^*$ to a partner better than $\pi^{-1}(u)$. Indeed, if this does not hold for a cycle, then there is a vertex $v$ in the cycle that is unmatched or matched to a worse partner than $v\pi (v)$ and $v\pi^{-1}(v)$. But then $\pi^{-1}(v)v$ blocks, since $\pi^{-1}(v)$ is not matched to anyone that is better than $v$ by our assumption. We call a vertex $v \in V\setminus S^*$ \emph{out-dominated} if it is matched in $M^*$ to a vertex $u\ne \pi (v)$ with $u>_v\pi^{-1}(v)$. A vertex $u$ is an \emph{in-dominator} if it is the partner in $M^*$ of an out-dominated vertex $v$. Note that $u$ prefers $\pi(u)$ and $\pi^{-1}(u)$ to $v$ because $\pi$ is a stable partition.

Let $C=\{ v_1,\dots,v_t\}$ be an arbitrary cycle of length at least 2 in $\pi$ such that $\pi (v_i)=v_{i+1}$, and let $v_{i_1},\dots,v_{i_z}$ be the in-dominator vertices in $C$. Observe that these vertices are pairwise non-adjacent on the cycle $C$ as an adjacent pair would block $M^*$. Also, each of $v_{i_1-1},\dots,v_{i_z-1}$ must be either in $S^*$ or out-dominated. Indeed, if for some $1\leq j\leq z$ the vertex $v_{i_j-1}$ is not in $S^*$ and also not out-dominated, then the edge $v_{i_j-1}v_{i_j}$ blocks as $v_{i_j}$ is an in-dominator.

Suppose now that $t$ is odd. We claim that there is at least one vertex in $C\setminus\{v_{i_1-1},\dots,v_{i_z-1}\}$ that is either in $S^*$ or is out-dominated. To see this, consider the paths obtained by removing the vertices $\{v_{i_1-1},v_{i_1},\dots,v_{i_z-1},v_{i_z}\}$ from $C$. Since $t$ is odd, one of the paths contains an odd number of vertices; we may assume that this path is $v_1,\dots ,v_p$ for some odd $p$. If none of these vertices is deleted or out-dominated, then at least one them, say $v_i$, must be unmatched or matched in $M^*$ to a vertex worse than $\pi^{-1}(v_i)$. If $i\ne 1$, then $v_{i-1}v_i$ blocks $M^*$, a contradiction. If $i=1$, then $v_t=\pi^{-1}(v_1)$ is an in-dominator as $v_1$ is the first vertex of the path. This implies that $v_tv_1$ blocks $M^*$, a contradiction again. That is, at least one of the vertices $v_1,\dots ,v_p$ is deleted or out-dominated. 

We conclude that the number of in-dominators in each odd cycle $C$ is at most the sum of the numbers of out-dominated and deleted vertices in $C$ minus one. If $C$ is an even cycle, then the number of in-dominators is at most the sum of the numbers of out-dominated and deleted vertices. Finally, a singleton in $\pi$ cannot be an in-dominator. It follows from the definitions that the number of in-dominators and the number of out-dominated vertices are the same. By combining these observations, we get that $|S^*|$ is at least the number of odd cycles in $\pi$. 
\end{claimproof}

The theorem follows by Claims~\ref{cl:feas} and~\ref{cl:opt}.
\end{proof}

We now show that the subset version of the problem is hard.

\begin{Th}\label{thm:subset-del}
\textsc{vertex-subset-del-sr} is NP-complete.
\end{Th}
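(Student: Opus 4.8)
The plan is to reduce from a convenient NP-complete problem; given the structure of the earlier excerpt, I would reduce from \textsc{vertex-cover}, or more precisely from the decision question ``is there a vertex cover of size $\le k$'' — but actually, since \textsc{vertex-subset-del-sr} has no cardinality bound (any subset of $T$ is allowed), the natural source is a problem whose ``yes'' instances correspond to the mere \emph{existence} of a removable subset inside a restricted set. A clean choice is to reduce from a problem about stable roommates instances that have no stable matching, combined with a gadget that forces the removable set to avoid certain vertices. Concretely, I would start from a stable roommates instance $(G_0,>_0)$ and designate $T$ to be only part of the vertex set; the task ``make it stable by deleting vertices of $T$'' is then a constrained version of \textsc{vertex-del-sr}. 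To get hardness I need a source problem where this constrained deletion encodes a combinatorial choice. I expect the cleanest route is a reduction from \textsc{x3c} or from \textsc{3sat}: build a graph in which each odd cycle of the (essentially forced) stable partition can be ``broken'' only in ways that correspond to selecting a set of $3$-sets / a truth assignment, and arrange that the vertices whose deletion would trivially kill an odd cycle are kept \emph{outside} $T$, so that the only legal way to destroy all bad odd cycles simultaneously is via a consistent global choice.

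The key steps, in order, would be: (1) Fix the source problem — I would try \textsc{x3c} first. For each element $a_i$ build an ``element gadget'' and for each $3$-set $C_j$ a ``set gadget,'' connected so that in any stable partition the element gadgets sit on odd cycles of length $\ge 3$ unless exactly one incident set gadget has been ``activated.'' (2) Encode ``activation'' as vertex deletion: put into $T$ only the handful of vertices per set gadget whose removal activates that set, and keep all element-gadget vertices out of $T$, so deletions cannot cheat by directly hitting element cycles. (3) Ensure a consistency/exclusivity constraint: design the set gadget so that activating it (deleting its $T$-vertices) forces a small stable sub-configuration that is incompatible with activating a second set gadget covering the same element — i.e.\ two activated sets sharing an element create a new blocking edge or a new odd cycle that cannot be removed. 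This is what makes ``removable subset of $T$ exists'' equivalent to ``exact cover exists.'' (4) Verify both directions: from an exact cover, delete the corresponding $T$-vertices and exhibit an explicit stable matching on the rest (using Tan's theorem / a direct construction); conversely, from any removable $S\subseteq T$, argue each element is covered exactly once, using the odd-cycle obstruction of Theorem~\ref{thm:tan}. (5) Note membership in NP: a certificate is the set $S$ together with a stable matching of $G-S$, checkable in polynomial time.

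The main obstacle I anticipate is step~(3): making the exclusivity constraint tight. Stable roommates gadgets are delicate — it is easy to build a gadget where deleting the ``activation'' vertices removes one odd cycle but silently creates another, or where a partial/overlapping selection still happens to admit a stable matching, breaking the reduction in the ``only if'' direction. I would spend most of the effort pinning down, via Tan's characterization, exactly which stable partitions survive each deletion pattern, probably by first analysing a single element gadget wired to two set gadgets (the smallest nontrivial conflict) and checking all $2^2$ activation patterns by hand before scaling up. A secondary concern is keeping the construction small enough that the stable partition is essentially unique on the relevant part of the graph, since non-uniqueness of stable partitions elsewhere is harmless (Theorem~\ref{thm:tan} guarantees the set of odd cycles is invariant) but non-uniqueness in the gadget interactions could muddy the equivalence; I would use rigid ``$2$-cycle padding'' (pairs of vertices that prefer each other above everything) to lock down the non-gadget structure. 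If \textsc{x3c} proves awkward, the fallback is a direct reduction from \textsc{3sat} with variable gadgets (two activation choices, true/false) and clause gadgets that become an unremovable odd cycle unless some literal's gadget is activated consistently.
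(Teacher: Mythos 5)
There is a genuine gap: what you have written is a plan, not a proof. The entire burden of the theorem lies in steps (1)--(3) of your outline --- the concrete element and set gadgets, their preference lists, and above all the ``exclusivity'' mechanism that makes overlapping activations fail --- and none of these is actually constructed. You explicitly flag step (3) as the main obstacle and defer it (``I would spend most of the effort pinning down\dots''), which is precisely the point at which such reductions typically break: in the roommates setting it is very easy for a deletion pattern that does not correspond to an exact cover to nevertheless admit a stable matching, since Tan's theorem only constrains the odd cycles of a stable partition, and deletions can reroute the partition in ways your sketch does not control. Without explicit preference lists and a verified ``only if'' direction, the claimed equivalence with \textsc{x3c} is unsubstantiated, so the NP-hardness argument is not established. (Membership in NP, your step (5), is fine.)

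For comparison, the paper avoids this gadget-engineering entirely by reducing from \textsc{com-smti} rather than \textsc{x3c}. Each man $a_i$ is given a cyclic triple $\{a_i',x_i,x_i'\}$ (an odd ``roommate cycle'') whose only escape is that $a_i'$ be matched to a woman, which forces every $a_i'$ to be matched in any stable matching of the surviving graph; each woman $b_i$ whose list is a tie between two men is split into two copies $b_i',b_i''$ plus a helper $y_i$, and the deletable set $T$ consists exactly of the $b$-copies and helpers. Deleting one copy of $b_i$ then amounts to resolving the tie, so a removable subset of $T$ exists iff the \textsc{com-smti} instance has a perfect weakly stable matching. This choice of source problem makes both directions short and verifiable, whereas your route would require you to design and analyze nontrivial interaction gadgets whose correctness is exactly what remains unproved.
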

\begin{proof}
The problem is clearly in NP. To show hardness, we reduce from \textsc{com-smti}, see the definition in Section~\ref{sec:prelim}.

Consider an instance $(G=(A,B;E), >)$ of \textsc{com-smti}, where $A=\{a_1,\dots,a_n\}$, $B=\{b_1,\dots,b_n\}$, and ties are restricted to a subset $b_1,\dots,b_{\ell}$, each of which have two neighbors. We create an instance of \textsc{vertex-subset-del-sr} as follows, see Figure~\ref{fig:vertex_del2} for an example. For each $a_i\in A$, we add three agents $a'_i$, $x_i$ and $x_i'$. For each $i\in [\ell]$, we add three agents $b_i'$, $b_i''$ and $y_i$. Finally, for $i=\ell+1,\dots,n$, we add an agent $b_i'$. Let $T\coloneqq \{b_i'\mid i \in [n]\}\cup \{b_i''\mid i \in [\ell] \}\cup \{y_i\mid i \in [\ell]\}$.

The preference lists of the vertices $a'_i$ for $i\in [n]$ and $b'_i$ for $i=\ell+1,\dots,n$ are inherited from the original instance. We further extend the preferences as follows:
\begin{itemize} \itemsep0em
    \item we add $x_i>x_i'$ to the end of each $a_i'$'s preference lists,
    \item $x_i$ has preference $x_i'>a_i'$ and $x_i'$ has preference $a_i'>x_i$,
    \item $y_i$ has preference $b_i''>b_i'$,
    \item if $i \in [\ell]$ and $b_i$'s two neighbors were $a_j$ and $a_k$ with $j<k$, then $b_i'$'s preference is $y_i>b_i''>a_j'$ and $b_i''$'s preference is $b_i'>y_i>a_k'$.
\end{itemize}

\begin{figure}[t!]
\centering
\begin{subfigure}[t]{\textwidth}
  \centering
  \includegraphics[width=.25\linewidth]{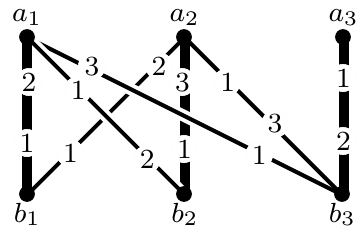}
  \caption{An instance of \textsc{com-smti} with ties at $b_1$. Thick edges form a perfect weakly stable matching $M$.}
  \label{fig:vertex_del1}
\end{subfigure}
\vspace{5pt}

\begin{subfigure}[t]{0.49\textwidth}
  \centering
  \includegraphics[width=.7\linewidth]{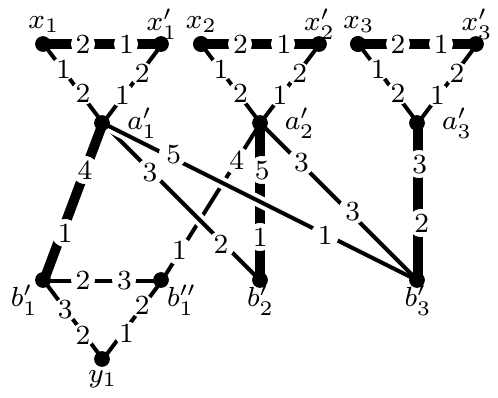}
  \caption{The corresponding \textsc{vertex-subset-del-sr} instance with $T=\{b_1',b_1'',y_1,b_2'\}$. Thick edges form a stable matching $M'$ after deleting $y_1$ and $b_1''$.}
  \label{fig:vertex_del2}
\end{subfigure}\hfill
\begin{subfigure}[t]{0.49\textwidth}
  \centering
  \includegraphics[width=.7\linewidth]{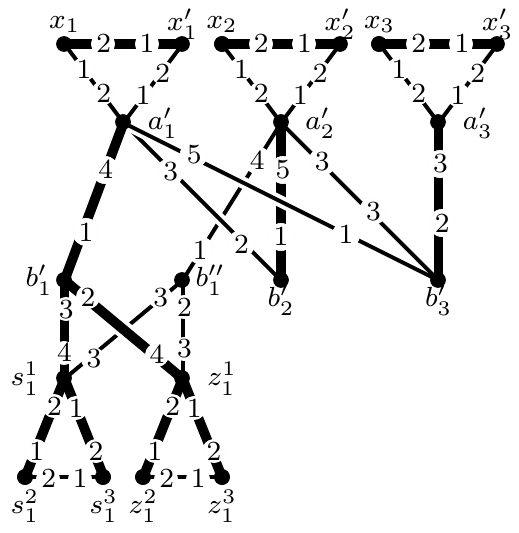}
  \caption{The corresponding \textsc{vertex-del-sqm} instance with capacity $3$ for vertices $b_1'$, $b_1''$, $s^1_1$ and $z^1_1$, and $1$ otherwise. Thick edges form a stable $q$-matching $M'$ after deleting $b_1''$.}
  \label{fig:vertex_del4}
\end{subfigure}
\caption{Illustrations of Theorems~\ref{thm:subset-del} and~\ref{thm:cap-del}. Higher values correspond to higher preferences.}
\label{fig:vertex_del}
\end{figure}

Let $(G',>',T)$ denote the obtained instance of \textsc{vertex-subset-del-sr}. We claim that there exists a perfect weakly stable matching in $(G,>)$ if and only if there exists a removable set $S \subseteq T$ in $(G',>',T)$. First suppose there is a perfect weakly stable matching $M$. We create a matching $M'$ in $G'$ by adding $a_i'b_j'$ or $a_i'b_j''$ for each $a_ib_j\in M$ and $x_ix_i'$ for $i\in[n]$. Then, for every $i \in [\ell]$, we delete $y_i$ and the unmatched copy of $b_i$. Suppose there is a blocking edge to $M'$. It cannot contain $x_i$ or $x_i'$, since each $a_i'$ is with a better partner. On the other hand, if there is a blocking edge of type $a_i'b_j'$, then $j>\ell$, since $b_j'$ is matched with the best remaining partner if $j \leq \ell$. But by the definition of $>'$, this means that $a_ib_j$ is blocking $M$, a contradiction.

For the other direction, suppose there is a set $S \subseteq T$ such that there is a stable matching $M'$ in $G'-S$. Each $a_i$ must be matched to some $b_j'$ or $b_j''$, since otherwise there would be no stable matching because of the cycle $\{ a_i',x_i,x_i'\}$. Also, for any $j \in [\ell]$, at most one of $b_j'$ and $b_j''$ can be matched to some $a_i'$, since otherwise $b_j'b_j''$ would be a blocking edge. Therefore, $M'$ induces a perfect matching $M$ in the original instance. If there is an edge blocking $M$, then the copy of the same edge is blocking $M'$, contradiction. 
\end{proof}

\begin{Rem}
\textsc{vertex-subset-del-sr} is NP-complete even in the special case when $G$ is complete. In the reduction above, we can add the remaining edges to the end of the agents' preference lists, such that they are strictly worse than any original edge. 
\end{Rem}

\begin{Rem}
It also follows from Theorem~\ref{thm:subset-del} that the weighted version of \textsc{vertex-del-sr} is NP-complete too. This can be seen from the fact that if we let the vertices of $T$ have weight 0 and other vertices have weight 1, then there is a removable set of weight 0 if and only if there is a removable set $S\subseteq T$.
\end{Rem}

Finally, we investigate \textsc{vertex-min-del-sqm}. The only difference now is that each vertex $v$ has a nonnegative integer capacity $q(v)$, and the aim is to achieve the existence of a stable $q$-matching by removing a minimum number of vertices. Note that if the aim was to decrease the vertex capacities by a minimum total amount, then the problem would be polynomial-time solvable. Indeed, it could be reduced to \textsc{vertex-del-sr} by making $q(v)$ copies of each vertex. However, if we look for a minimal size removable set $S\subseteq V$, then the problem becomes NP-hard. 

\begin{Th}\label{thm:cap-del}
\textsc{vertex-del-sqm} is NP complete even if the capacities are at most $3$.
\end{Th}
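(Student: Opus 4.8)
The plan is to reduce from \textsc{com-smti} again, reusing the gadget philosophy of Theorem~\ref{thm:subset-del} but replacing the ``forbidden-to-delete'' mechanism of the subset version with a ``too-expensive-to-delete'' mechanism encoded by capacities. The key idea is that the agents $a_i'$, $x_i$, $x_i'$ forming the $3$-cycle must all survive (deleting any of them would be ``wasteful'' relative to the budget $k$), so that $k$ exactly counts how many of the $\ell$ copy-pairs $\{b_i',b_i''\}$ we are allowed to break. Concretely, I would set $k=\ell$ and give each of $b_i'$, $b_i''$ (and some auxiliary vertices, as hinted by Figure~\ref{fig:vertex_del4}) capacity $3$, while all other vertices keep capacity $1$; the capacity-$3$ vertices are attached to enough pendant gadgets that a stable $q$-matching on the ``surviving'' side always exists, and the only obstruction to global stability is, as before, a blocking edge inherited from the original \textsc{com-smti} instance.

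**Construction and the two directions.**
For each $i\in[\ell]$ I would keep a vertex $b_i'$ of capacity $3$ and introduce, as in the figure, auxiliary vertices $s_i^1$ and $z_i^1$ of capacity $3$ together with pendant vertices of capacity $1$, arranged so that (i) there is always a stable way to saturate the high-capacity vertices among themselves, and (ii) if $b_i'$ is \emph{not} saturated by an $a_j'$-edge, the corresponding original neighbor $a_j$ is effectively ``free'' to create a blocking pair. The $3$-cycle gadgets $\{a_i',x_i,x_i'\}$ are kept verbatim from the proof of Theorem~\ref{thm:subset-del}: since these have capacity $1$, any stable $q$-matching of $G'-S$ with $a_i'\notin S$ forces all three to survive, hence forces $a_i'$ to be matched to some copy of a $b$-vertex. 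For the forward direction, given a perfect weakly stable matching $M$ in the \textsc{com-smti} instance, I delete one vertex from each pair $\{b_i',b_i''\}$ exactly as before (that is $\ell=k$ deletions), saturate the remaining high-capacity vertices via their pendant gadgets, and argue no blocking edge survives. For the reverse direction, from a removable set $S$ with $|S|\le k=\ell$ and a stable $q$-matching $M'$ in $G'-S$, I first argue $S$ must be disjoint from all the $3$-cycle gadgets and from the pendant structure (each such deletion would be ``wasted'' and still leave an unstable instance, or would itself destroy stability), so $S$ consists of $\le\ell$ vertices among $\{b_i',b_i''\mid i\in[\ell]\}$; then I extract a perfect matching on the original graph and check that stability of $M'$ forces weak stability of the induced matching.

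**The main obstacle.**
The delicate point is designing the capacity-$3$ pendant gadgets so that they are simultaneously (a) always internally satisfiable by a stable $q$-matching regardless of which of $b_i'$, $b_i''$ survives, and (b) ``rigid'' in the sense that deleting any pendant vertex neither helps create a smaller removable set nor accidentally removes a blocking edge needed for the reduction. In the capacitated setting the notion of blocking is via domination ($uv$ blocks unless it is dominated at $u$ or at $v$), so I must be careful with the preference orders on the capacity-$3$ vertices: the three slots of $b_i'$ must be arranged so that an $a_j'$-edge is dominated there precisely when the two pendant partners are present, and the preference of $a_j'$ must place its pendant option below all original options so that a blocking pair in $G$ translates faithfully. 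I expect the bulk of the write-up to be the (routine but fiddly) verification that, after fixing these preference orders, the only way a stable $q$-matching can fail to exist in $G'-S$ is an inherited blocking edge, mirroring the last paragraph of the proof of Theorem~\ref{thm:subset-del}; membership in NP is immediate since a stable $q$-matching is a polynomial-size certificate checkable in polynomial time.
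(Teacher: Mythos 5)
Your overall frame (reduce from \textsc{com-smti}, keep the $\{a_i',x_i,x_i'\}$ triangles, set $k=\ell$, use capacity-$3$ copies $b_i',b_i''$ with auxiliary gadgets) matches the paper, but your stated design goal for the gadgets is the \emph{opposite} of what the argument needs, and this breaks the reverse direction. You ask for pendant gadgets that are ``always internally satisfiable by a stable $q$-matching'' so that the only obstruction to stability is an inherited blocking edge, and you then argue that deletions inside the triangles or the pendant structure would be ``wasted,'' so $S$ lies among the $b$-copies. But nothing in your construction \emph{forces} any deletion at all: if the gadget attached to the pair $\{b_i',b_i''\}$ is stable with both copies present and each copy retains a free slot for its designated neighbor ($a_{i_1}'$ resp.\ $a_{i_2}'$), then both $a_{i_1}'$ and $a_{i_2}'$ can simultaneously be parked at copies of the same original woman $b_i$. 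Then $S=\emptyset$ (or some small $S$) can be removable even when the \textsc{com-smti} instance has no perfect weakly stable matching, and the map back to the original instance does not yield a matching at all (two men assigned to one woman). In the uncapacitated Theorem~\ref{thm:subset-del} this was prevented by the blocking edge $b_i'b_i''$, but with capacity $3$ that edge can simply be added to the $q$-matching, so some replacement mechanism is mandatory; your proposal contains none, and your budget-counting claim (``each such deletion would be wasted'') is unsupported precisely because no deletions are compelled in the first place.

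The paper's key idea, which is missing from your sketch, is to make each gadget $G_i$ have \emph{no} stable $q$-matching when both $b_i'$ and $b_i''$ are present, and to admit one after deleting a single vertex only if that vertex is $b_i'$ or $b_i''$ (Claim~\ref{claim_1}). Concretely, $b_i',b_i'',s_i^1,z_i^1$ all have capacity $3$ and rank each other highest, so the edges $b_i's_i^1,b_i'z_i^1,b_i''s_i^1,b_i''z_i^1$ lie in every stable $q$-matching; this leaves $s_i^1$ (and $z_i^1$) with only one slot for a rotating triangle $s_i^1,s_i^2,s_i^3$ (resp.\ $z_i^1,z_i^2,z_i^3$), which then has no stable outcome, and deleting an $s$- or $z$-vertex still leaves the other triangle unstable. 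This forces at least one deletion per gadget from $\{b_i',b_i''\}$; with budget $k=\ell$ it forces \emph{exactly} one per pair and nothing else, which simultaneously (i) protects the triangles $\{a_i',x_i,x_i'\}$, (ii) guarantees that at most one copy of each $b_i$ can absorb an $a$-partner so a genuine perfect matching of the original instance can be extracted, and (iii) makes the surviving copy saturated by $s_i^1,z_i^1$ plus one original partner, so blocking edges transfer faithfully in both directions. Without an ``unstable unless one copy is deleted'' gadget of this kind, your reduction does not establish hardness.
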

\begin{proof}
The problem is clearly in NP. To show hardness, we reduce from \textsc{com-smti}, see the definition in Section~\ref{sec:prelim}.

Consider an instance $(G=(A,B;E), >)$ of \textsc{com-smti}, where $A=\{a_1,\dots,a_n\}$, $B=\{b_1,\dots,b_n\}$, and ties are restricted to a subset $b_1,\dots,b_{\ell}$, each of which have two neighbors. We create an instance of \textsc{vertex-del-sqm} as follows, see Figure~\ref{fig:vertex_del4} for an example.
For each $a_i\in A$, we add three agents $a_i'$, $x_i$ and $x_i'$ with $q(a_i')=q(x_i)=q(x_i')=1$.
For each $i\in [\ell]$, we add a gadget $G_i$ with eight agents $b_i'$, $b_i''$, $s_i^1$, $s_i^2$, $s_i^3$ and $z_i^1$, $z_i^2$, $z_i^3$. Among these agents, $q(b_i')=q(b_i'')=q(s_i^1)=q(z_i^1)=3$ and $q(s_i^2)=q(s_i^3)=q(z_i^2)=q(z_i^3)=1$. Finally, for $i=\ell+1,\dots,n$, we add an agent $b_i'$ with $q(b_i')=1$. For $i\in [\ell ]$, let $a_{i_1}'$ and $a_{i_2}'$ denote the two neighbors of $b_i$ where $i_1<i_2$.

The preference lists of the vertices $a_i'$ for $i\in [n]$ and $b'_i$ for $i=\ell+1,\dots,n$ are inherited from the original instance. We further extend the preferences as follows:
\begin{itemize} \itemsep0em
    \item we add $x_i>x_i'$ to the end of each $a_i'$'s preference lists,
    \item $x_i$ has preference $x_i'>a_i'$ and $x_i'$ has preference $a_i'>x_i$,
    \item for $i\in [\ell ]$, the agents have the following preferences: \\
    \begin{minipage}{0.3\textwidth}
    \vspace{-0.3cm}
    \begin{align*}
    b_i'\colon{}&{} s_i^1>z_i^1>a_{i_1}' \\
    b_i''\colon{}&{} s_i^1>z_i^1>a_{i_2}' 
    \end{align*}
    \end{minipage}
    \hfill
    \begin{minipage}{0.3\textwidth}
    \vspace{-0.3cm} 
    \begin{align*}
    s_i^1\colon{}&{}  b_i'>b_i''>s_i^2>s_i^3\\
    s_i^2\colon{}&{} s_i^3>s_i^1\\
    s_i^3\colon{}&{} s_i^1>s_i^2
    \end{align*}
    \end{minipage}
    \hfill
    \begin{minipage}{0.3\textwidth}
    \vspace{-0.3cm}
    \begin{align*}
    z_i^1\colon{}&{} b_i'>b_i''>z_i^2>z_i^3\\
    z_i^2\colon{}&{} z_i^3>z_i^1\\
    z_i^3\colon{}&{} z_i^1>z_i^2.    
    \end{align*}
    \end{minipage}
\end{itemize}
Let $(G'=(V',E'),>',q)$ denote the obtained instance of \textsc{vertex-del-sqm} and set $k=\ell$. We claim that there is a perfect weakly stable matching $M$ in $(G,>)$ if and only if there is a removable set $S\subseteq V'$ with $|S|\le k$ in $(G',>',q)$. We start with an important observation.

\begin{Claim}
\label{claim_1}
The gadget $G_i$ has no stable $q$-matching for $i\in [\ell ]$. Furthermore, if $v\in V(G_i)$ is a vertex such that $G_i-v$ admits a stable $q$-matching, then $v\in \{ b_i',b_i''\}$.
\end{Claim}
\begin{claimproof}
Let $G_i$ be one of the gadgets. To see that $G_i$ does not admit a stable $q$-matching, observe that since $q(b_i')=q(b_i'')=q(s_i^1)=q(z_i^1)=3$, all of them can get their two best partners, so the edges $b_i's_i^1$, $b_i'z_i^1$, $b_i''s_i^1$ and $b_i''z_i^1$ must be contained in any stable $q$-matching. But then $s_i^1$ has only one remaining capacity for $s_i^2$ and $s_i^3$. Thus there exists no stable $q$-matching as one of $s_i^3s_i^1$, $s_i^1s_i^2$ and $s_i^2s_i^3$ blocks.

Suppose now that $v\in V(G_i)$ is such that $G_i-v$ admits a stable $q$-matching. If $v\in \{ s_i^1,s_i^2,s_i^3\}$, then, since $z_i^1$ is the best remaining choice for both $b_i'$ and $b_i''$, and $b_i'$ and $b_i''$ are the best two choices for $z_i^1$, the edges $b_i'z_i^1$ and $b_i''z_i^1$ are contained in any stable $q$-matching. But then $z_i^1$ has only one remaining capacity for $z_i^2$ and $z_i^3$, thus there exists no stable $q$-matching. 

The case when $v\in \{ z_i^1,z_i^2,z_i^3\}$ can be proved analogously. 
\end{claimproof}

Suppose there is a perfect weakly stable matching $M$. We create a $q$-matching $M'$ in $G'$ by adding $a_i'b_j'$ or $a_i'b_j''$ for each $a_ib_j\in M$ and $x_ix_i'$ for $i\in[n]$. Then, for every $i \in [\ell]$, we delete the copy of $b_i$ not matched to $A'=\{ a_i'\mid i\in [n]\}$. Then, if $b_i'$ got deleted, we add the edges of $N_i'=\{ b_i''s_i^1,b_i''z_i^1,s_i^1s_i^2,s_i^1s_i^3,z_i^1z_i^2,z_i^1z_i^3\}$ to $M'$ and otherwise we add $N_i''=\{ b_i's_i^1,b_i'z_i^1,s_i^1s_i^2,s_i^1s_i^3,z_i^1z_i^2,z_i^1z_i^3\}$. The number of deleted vertices is then exactly $k$. Suppose there is a blocking edge to $M'$. It cannot contain $x_i$ or $x_i'$, since each $a_i$ is with a better partner. If there is a blocking edge of type $a_i'b_j'$, then $j>\ell$, since $b_j'$ (or $b_j'')$ is matched with all of its partners if $j \leq \ell$. But by the definition of $>'$, this means that $a_ib_j$ is blocking $M$, contradiction. Finally, the edges $s_j^2s_j^3$ and $z_j^2z_j^3$ also do not block $M'$, because $s_j^3$ and $z_j^3$ are matched with their best partner in $M'$ for each $j\le \ell$.

Now assume that there is a set $S \subseteq V'$ with $|S|\le k$ such that there is a stable $q$-matching $M'$ in $G'-S$. Each $a_i'$ must be matched to some $b_j'$ or $b_j''$, since otherwise there would be no stable matching because of the cycle $\{ a_i',x_i,x_i'\}$. Also, for any $j \in [\ell]$, at most one of $b_j'$ and $b_j''$ can be matched to some $a_i'$, since by claim~\ref{claim_1} one of them has to be deleted, otherwise there would be a gadget $G_i$ where at least two vertices are deleted (it is also clear that deleting vertices outside $V(G_i)$ cannot create a stable $q$-matching in $G_i$), contradicting $|S|\le k$. Therefore, $M'$ induces a perfect matching $M$ in the original instance. If there is an edge blocking $M$, then the copy of the same edge is blocking $M'$, contradiction. 
\end{proof}

%%%%%%%%%%%%%%%%
\section{Preference modifications}
\label{sec:opt}
%%%%%%%%%%%%%%%%

In this section, we consider problems where the goal is to make a given matching stable by bribing the agents to change their preferences. Each agent $v$ assigns a nonnegative real number $p_v(u)$ to each acceptable partner $u$. By abuse of notation, we extend this notation to edges incident to $v$ by setting $p_v(uv)\coloneqq p_v(u)$, and call this the \emph{value} of the edge $uv$ for $v$. 
%For a matching $M$, we use $\delta_M(p_v)\coloneqq \min \{ p_v(u) \mid uv\in M\}$. 
The weak preference list of $v$ is derived from these values by declaring $u\ge_vu'$ if and only if $p_v(u)\ge p_v(u')$. The agents might also have positive integer capacities, denoted by $q(v)$ for agent $v$. 

The cost of modifying the preferences can be measured in various ways; here, we concentrate on the case of $\ell_1$-norm. That is, changing the value of $p_v(u)$ from $a$ to $b$ has cost $|a-b|$. If $p'$ denotes the modified preference values, then the \emph{cost of $p'$ (with respect to $p$)} is $\cost(p')\coloneqq \sum_{v\in V}\sum_{u\in N(v)}|p_v(u)-p'_v(u)|$. The goal is to find a new preference matrix $p'$ of minimum cost such that a given b-matching $M$ becomes weakly stable with respect to $p'$.

\searchprob{$\ell_1$-min-pm}
{
A maximal $q$-matching $M$ and a matrix $p$ describing the preferences of the agents.
}
{
Preferences $p'$ of minimum cost such that $M$ is weakly stable with respect to $p'$.
}

First we show that the problem is not only NP-hard in general, but it is also hard to approximate within a factor better than 2.

\begin{Th}\label{thm:hardness}
$\ell_1$-\textsc{min-pm} is NP-hard even if each capacity is one. Furthermore, assuming the Unique Games Conjecture, it cannot be approximated within a factor of $2-\varepsilon$ for any $\varepsilon >0$.
\end{Th}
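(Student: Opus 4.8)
The plan is to reduce from \textsc{vertex-cover}, which is both NP-hard and (assuming UGC) hard to approximate below factor $2$ by the Khot--Regev result, so a factor-preserving reduction will yield both statements at once. Given a graph $H=(W,F)$ in which we seek a minimum vertex cover, I would build a stable marriage instance $(G,>)$ together with preference values $p$ and a target matching $M$ so that the cheapest way to make $M$ weakly stable corresponds exactly to choosing a vertex cover of $H$, with cost equal (up to an affine normalization) to the size of the cover. The rough idea: for each vertex $w\in W$ create an agent-gadget, for each edge $f=ww'\in F$ create a pair of agents forming an edge of $M$ whose current preferences make it blocked, and wire the gadgets so that the only local repairs available are (a) lowering the value that $w$'s gadget assigns to the edge toward $f$, or (b) lowering the value that $w'$'s gadget assigns, each at unit cost, and so that performing repair (a) simultaneously de-blocks every edge incident to $w$. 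Then a set of unit-cost modifications de-blocks all edges iff the corresponding vertex set is a vertex cover.

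The key steps, in order, would be: (1) fix the gadget for an edge $f=ww'$ of $H$ — put a matched pair $u_f v_f\in M$, let $u_f$'s list rank some ``$w$-port'' agent above $v_f$ and $v_f$'s list rank some ``$w'$-port'' agent above $u_f$, with the gap values chosen so that bringing $u_fv_f$ out of the blocked state costs exactly $1$ no matter which side you fix; (2) design the port agent for vertex $w$ so that its preference value can be shifted once, at total cost $1$, in a way that simultaneously removes the ``above $v_f$'' status from $u_f$ for every edge $f$ incident to $w$ — this is what forces the correspondence with vertex covers rather than edge covers; (3) verify no blocking pair is created among the newly introduced agents regardless of which modifications are applied, i.e.\ the gadgets are ``inert'' except through the intended interface; (4) check the two directions of the equivalence and the exact cost bookkeeping — a cover $C$ gives a modification of cost $|C|$, and conversely an optimal modification can be assumed (by an exchange/rounding argument, since all relevant values are integers or half-integers chosen from a small set) to use only the canonical unit moves, hence induces a cover of the same size; (5) conclude that an $\alpha$-approximation for $\ell_1$-\textsc{min-pm} yields an $\alpha$-approximation for \textsc{vertex-cover}, so NP-hardness follows from NP-hardness of \textsc{vertex-cover} and the $2-\varepsilon$ inapproximability follows from the UGC-based hardness of \textsc{vertex-cover}.

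The main obstacle I expect is step (2) together with step (3): getting a single port-agent whose one cheap modification de-blocks all incident edges at once, \emph{without} that same agent (or the auxiliary agents needed to make its list long enough) becoming half of a new blocking pair in some of the exponentially many modification patterns. This typically requires padding each port with a small private ``absorber'' structure (a few extra matched agents whose top choices are each other, so they never block) and choosing the numeric values carefully — e.g.\ all original values large, all gaps equal to $1$, and the port value placed strictly between two consecutive integers so that exactly one unit move crosses the threshold. One also has to be careful that $M$ is maximal and that capacities are all $1$, as the theorem demands, which constrains how many auxiliary agents each gadget may contain and whom they may be matched to.

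A secondary subtlety is the exactness needed for the $2-\varepsilon$ bound: the reduction must be \emph{strictly} cost-preserving (or at worst $\cost = |C| + c_0$ for a fixed constant $c_0$ that is $o(|C|)$, e.g.\ $c_0=0$), since any multiplicative slack of the form $\cost = |C| + \Theta(|F|)$ would destroy the tight factor. I would therefore aim for a reduction in which every edge-gadget contributes cost exactly $0$ when ``correctly'' repaired and exactly $1$ per port used, so that $\OPT(\ell_1\text{-}\textsc{min-pm}) = \OPT(\textsc{vertex-cover})$ on the nose; then the UGC-hardness transfers verbatim.
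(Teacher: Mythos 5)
Your high-level plan -- a strictly cost-preserving reduction from \textsc{vertex-cover} so that Khot--Regev's UGC-hardness transfers verbatim -- is exactly the route the paper takes, but the heart of the argument (your steps (1)--(3)) is not actually constructed, and the gadget you sketch does not encode vertex cover. If each edge $f=ww'$ of $H$ gets its own matched pair $u_fv_f$ with $u_f$ pointing to a $w$-port and $v_f$ pointing to a $w'$-port, then $f$ produces \emph{two} potential blocking pairs, one at each end, and stability forces you to repair \emph{both} of them; the choice is only whether each repair is paid locally (at $u_f$ or $v_f$) or shared at a port. That is a conjunction per edge, not the disjunction ``at least one endpoint is selected,'' and the resulting optimum is $\min_{C}\bigl(|C|+\sum_{f}(2-|f\cap C|)\bigr)$ rather than the vertex cover number, so the reduction breaks before one even reaches your worries about absorbers and exponentially many modification patterns. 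What is needed is one blocking pair per original edge, killable at either endpoint -- i.e.\ the blocking pair should live directly between the two vertex gadgets, making the per-edge matched pair superfluous.

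The paper's construction is precisely this minimal version and it dissolves your anticipated obstacles. Each vertex $v$ of $H$ becomes a matched pair $v'v''$ with $p_{v'}(v'')=p_{v''}(v')=0$, each edge $uv$ of $H$ becomes an edge $u'v'$ with value $1$ at both ends, and $M=\{v'v''\mid v\in V\}$. The single enabling observation is that with unit capacities one may assume an optimal solution changes values only on edges of $M$: lowering a non-matching edge at $v$ costs the same as raising $v$'s matching edge, and the latter dominates every non-matching edge at $v$ simultaneously -- this is exactly the ``one cheap move de-blocks all incident edges'' effect you were trying to engineer with ports, and it also replaces your proposed exchange/rounding bookkeeping. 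With it, each $v'$ optimally sets $p'_{v'}(v'')$ to $0$ or $1$, every copied edge needs at least one endpoint raised to $1$, and the optimum equals the minimum vertex cover size on the nose, giving both NP-hardness and the $2-\varepsilon$ UGC bound. So: right reduction target and right exactness requirement, but the gadgetry you leave open is both the crux and, as sketched, incorrect.
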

\begin{proof}
Observe that if each capacity is 1, then there always exists an optimal solution that changes the values only on the edges of $M$. Indeed, lowering the value of a non-matching edge $uv$ to become dominated at vertex $v$ has the same cost as increasing the value of the edge in $M(v)$ for $v$ to dominate $uv$. By doing the latter, we are surely better off as other edges may become dominated too at $v$, while in the first case we do not make any progress on dominating the other edges.

The problem is clearly in NP. To show NP-hardness, we reduce from \textsc{vertex-cover}. Let $G=(V,E)$ be an instance of the minimum vertex cover problem. We construct an instance $(G'=(V',E'),p,M)$ of \textsc{$\ell_1$-min-pm}. Let $V'=\{ v',v''\mid v\in V\}$ and $E'=\{ u'v'\mid  uv\in E\} \cup \{ v'v''\mid v\in V\}$. Furthermore, set $M=\{ v'v''\mid v\in V\}$.  For each vertex $v\in V$, we define $p_{v'}(v'')=p_{v''}(v')=0$ and set all the other $p_{v'}(u')$ values to be 1, see Figure~\ref{fig:vertex_cover}.

\begin{figure}[t!]
\centering
\begin{subfigure}[t]{0.49\textwidth}
  \centering
  \includegraphics[width=.75\linewidth]{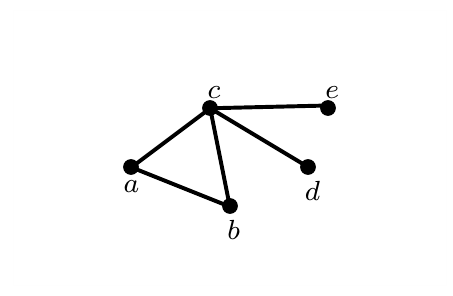}
  \caption{An instance of \textsc{vertex-cover}. Note that $\{a,c\}$ is a vertex cover of minimum size.}
  \label{fig:vertex_cover1}
\end{subfigure}\hfill
\begin{subfigure}[t]{0.49\textwidth}
  \centering
  \includegraphics[width=.75\linewidth]{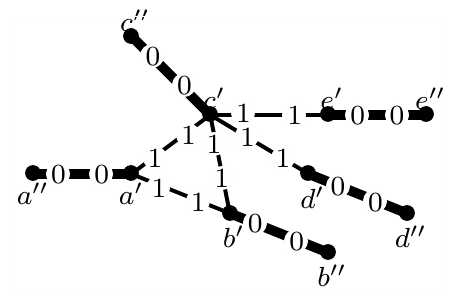}
  \caption{The corresponding \textsc{$\ell_1$-min-pm} instance. If the values of $p_{a'}(a'a'')$ and $p_{c'}(c'c'')$ are increased to $1$, then thick edges form a weakly stable matching.}
  \label{fig:vertex_cover2}
\end{subfigure}
\caption{An illustration of Theorem~\ref{thm:hardness}. Higher values correspond to higher preferences.}
\label{fig:vertex_cover}
\end{figure}

Let $p'$ be a minimum cost solution of the $\ell_1$-\textsc{min-pm} instance thus obtained. By the observation above, we may assume that $p'$ differs from $p$ only on edges of $M$. Since vertices in $\{ v''\mid  v\in V\}$ are matched with their only partner, their preference values do not change. Clearly, for each vertex $v'$, we have either $p'_{v'}(v'')=0$ or $p'_{v'}(v'')=1$. Furthermore, at least one of the end vertices of each original edge increases the value of the incident matching edge to $1$, hence the set of such vertices forms a vertex cover. Vice versa, if we increase the values of the matching edges on the vertices of a vertex cover, then no original edge of $G$ blocks $M$.

By the above, \textsc{minimum vertex cover} in $G$ is equivalent to $\ell_1$-\textsc{min-pm} in $(G',p,M)$, concluding the proof of NP-hardness. Assuming the Unique Games Conjecture, \textsc{vertex-cover} cannot be approximated within a factor better than $2$ by a result of Khoth and Regev~\cite{khot2008vertex}, implying the second half of the theorem.
\end{proof}

As a positive result, we show that $\ell_1$-\textsc{min-pm} is polynomial-time solvable in bipartite graphs. Interestingly, our solution relies on Hochbaum's algorithm for the minimum cost vertex cover problem under bipartite-submodular functions. 

\begin{Th}\label{thm:bipartite}
$\ell_1$-\textsc{min-pm} can be solved in polynomial time in bipartite graphs. 
\end{Th}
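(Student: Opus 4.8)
\emph{Overall strategy.} The plan is to put the problem into a normal form and then reduce it to a minimum cost vertex cover instance with a bipartite-submodular cost function, so that Theorem~\ref{thm:submod} can be applied. Write $(x)_+:=\max\{x,0\}$.

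\emph{Normalization.} Fix a bipartite instance $(G=(A,B;E),p,M)$. First I would argue that some optimal modification $p'$ only increases the values of matching edges and only decreases the values of non-matching edges (lowering a matching edge or raising a non-matching one can never help, since the value of an edge $vw$ for $v$ influences only the blocking status of $vw$ and, via the new minimum $\mu'_v$ of $v$'s matched values, of $v$'s non-matching edges). Consequently $p'$ is described by a threshold $\tau_v\ge\mu_v:=\min_{w\in M(v)}p_v(w)$ for each saturated vertex $v$ --- the new smallest value among $v$'s matched partners --- at cost $g_v(\tau_v):=\sum_{w\in M(v)}(\tau_v-p_v(w))_+$, a nondecreasing, convex, piecewise-linear function with $g_v(\mu_v)=0$, together with, for each blocking non-matching edge $uv$ that we decide to ``kill at $u$'', a decrease of $p'_u(v)$ to $\min\{p_u(v),\tau_u\}$ at extra cost $(p_u(v)-\tau_u)_+$. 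Weak stability of $M$ w.r.t. $p'$ then says exactly that every blocking non-matching edge is killed at (at least) one of its endpoints; since $M$ is maximal, every non-matching edge has a saturated endpoint, so this is always achievable, and $\OPT$ equals the minimum over all choices of thresholds $\tau$ and of an assignment of each blocking non-matching edge to one saturated endpoint of $\sum_v g_v(\tau_v)$ plus the total extra cost.

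\emph{Reduction to vertex cover.} Let $G'=(A,B;E\setminus M)$. I would build an auxiliary bipartite graph $H$ by replacing each vertex $v$ of $G'$ with a ``chain'' of vertices: one per breakpoint of $g_v$, encoding ``$\tau_v$ reaches at least this level'', and one per blocking non-matching edge at $v$, encoding ``kill that edge at $v$ by an individual decrease''. For every blocking non-matching edge $uv$ of $G'$ one inserts an edge of $H$ between the appropriate chain-vertices of $u$ and of $v$, so that vertex covers of $H$ correspond to feasible (threshold, assignment) pairs. The cost of a vertex subset of $H$ is defined to split as $c_A(\cdot)+c_B(\cdot)$, where $c_A$ (resp. $c_B$) sums, over $u\in A$ (resp. $B$), the $g_u$-value of the highest selected threshold level plus the individual-decrease costs incurred above it. One then checks that a minimum cost vertex cover of $H$ has the same cost as $\OPT$, and --- this is the crux --- that $c_A$ and $c_B$ are submodular. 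Convexity of $g_v$ (equivalently, that the marginal cost of pushing $\tau_v$ past successive non-matching values is nondecreasing) is precisely what is used; in the unit-capacity case the per-vertex contribution is literally $v\mapsto\gamma_v(\max\{\text{selected levels of }v\})$, where $\gamma_v$ lists the sorted non-matching values at $v$, and $I\mapsto\gamma(\max I)$ is submodular for every nondecreasing $\gamma$. Finally, Theorem~\ref{thm:submod} applied to the bipartite-submodular instance $(H,c)$ solves the vertex cover problem in polynomial time, and unrolling the reduction produces an optimal $p'$.

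\emph{Expected main obstacle.} The delicate point should be the design of the chain gadget for capacities larger than one, and the verification that $c_A,c_B$ are submodular: there the cost of ``killing $uv$ at $u$ by an individual decrease'' must be measured relative to the chosen threshold $\tau_u$ rather than to $\mu_u$, because combining a threshold raise with individual decreases of the few remaining high-value non-matching edges can be strictly cheaper than either pure strategy, and the gadget must encode this threshold-dependent quantity by a fixed, submodular vertex-cost anyway (monotonicity of $c_A$, for instance, relies on the fact that a matched partner sits at value exactly $\mu_u$, so $g_u(t)\ge t-\mu_u$). A minor nuisance is that domination requires a strict inequality, so the optimum may only be attained in the limit; this is handled by a perturbation argument, or by observing that an optimal cover of $H$ realizes the infimum. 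The unit-capacity case, where killing an edge at $u$ by an individual decrease is always dominated by raising $u$'s single matched value and the gadget collapses to a plain chain, is a clean illustration of the whole scheme.
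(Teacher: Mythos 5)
Your overall strategy is the paper's: normalize so that matching values only increase and non-matching values only decrease, note that maximality of $M$ guarantees a saturated endpoint for every blocking edge, encode for each vertex the minimum cost of dominating a chosen subset of its incident blocking edges, and reduce the choice ``at which endpoint is each blocking edge dominated'' to a minimum-cost vertex cover in an auxiliary bipartite graph with a bipartite-submodular cost, solved via Theorem~\ref{thm:submod}. However, as written your argument has a genuine gap exactly at the step you yourself flag as the ``crux'': you never prove that the per-vertex cost functions are submodular when capacities exceed one. Your submodularity argument ($I\mapsto\gamma(\max I)$ for nondecreasing $\gamma$) only covers the unit-capacity case, where the combined strategy degenerates; in the capacitated case the cost of a selected set $S$ at $v$ is $\min_{\tau}\bigl[g_v(\tau)+\sum_{e\in S}(p_v(e)-\tau)_+\bigr]$, and asserting that ``convexity of $g_v$ is precisely what is used'' is not a proof. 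This verification is the technical heart of the paper's proof (Claim~\ref{cl:submod}): there each $f_v(S)$ is defined as the optimum of an explicit LP, and submodularity is established by an exchange argument that takes optimal solutions $z_1,z_2$ for $S+x$ and $S+y$ and swaps values on the two edges (two cases according to whether $z_1(e)\le z_2(f)$ or not) to produce feasible solutions for $S$ and $S+x+y$ of no larger total cost. Without this (or an equivalent) argument your reduction is not known to satisfy the hypothesis of Theorem~\ref{thm:submod}, so the proof is incomplete.

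Two further remarks. First, your extra ``chain'' vertices encoding threshold levels are unnecessary and make the construction harder to pin down: since your cost already minimizes over the threshold given the selected per-edge vertices, you can drop the threshold vertices entirely and let the vertex cost of a set of per-edge copies be that minimum — this is exactly the paper's construction, in which the auxiliary graph is simply a perfect matching with one edge per blocking edge and one copy-vertex per (vertex, incident blocking edge) pair, all copies of a vertex lying on its side of the bipartition (which is what makes the cost bipartite-submodular). As stated, your edge insertion ``between the appropriate chain-vertices'' leaves unclear which vertices an $H$-edge joins and why covers correspond exactly to feasible modifications. Second, the strictness worry is moot: the target is weak stability, so domination only requires $p'_v(u)\le\min_{w\in M(v)}p'_v(w)$; the paper's LPs use this non-strict constraint and attain their optima, so no perturbation or limiting argument is needed.
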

\begin{proof}
We reduce $\ell_1$-\textsc{min-pm} to a minimum weight vertex cover problem in a bipartite graph with a polynomial-time computable bipartite-submodular cost function. Recall that, by Theorem~\ref{thm:submod}, the latter problem is solvable in polynomial time. 

Let $G=(A,B;E)$ be the bipartite graph and $M$ be a fixed $q$-matching. We first start with a series of preprocessing steps. If the $q$-matching $M$ is not maximal, then the problem is infeasible and we stop. Otherwise, we iterate through all edges and check whether they are already dominated. If an edge is already dominated, then we delete it. This can be done as there always exists an optimal solution that does not decrease the values on $M$ and does not increase the values on $E\setminus M$. Therefore, we may assume that the $q$-matching $M$ is maximal and all edges outside $M$ are blocking edges.

We construct an an auxiliary bipartite graph $H=(U,W;F)$ as follows. For each vertex $v$, we add $\ell_v$ copies $v_1,\dots,v_{\ell_v}$ of $v$, where $\ell_v$ is the number of edges in $E(v)\setminus M(v)$. These vertices are added to $U$ if $v\in A$ and to $W$ if $v \in B$. For each vertex $v$, consider an arbitrary ordering $e_1^v,\dots,e_{\ell_v}^v$ of the non-matching edges incident to $v$. Then we define the edges of $H$ as follows: $u_iv_j\in F$ if and only if $uv\in E\setminus M$, $uv=e_i^u$, and $uv=e_j^v$.

Now we define an LP for each vertex $v$ saturated by $M$ and each subset $\{ e^v_{j_1},\dots,e^v_{j_k}\}$ of $\{ e_1^v,\dots,e_{\ell_v}^v\}$:
\begin{equation} \label{eq:lp}
\tag{$LP_v(v_{j_1},\dots,v_{j_k})$}
 \begin{array}{rlr}
  \min & \displaystyle \sum_{e\in E(v)}|x(e)-p_v(e)| \\[15pt]
  \text{s.\,t.} & 0\le x(e^v_{j_i})\le \min_{e\in M(v)}  x(e) & \qquad \text{for $1\leq i\leq k$}
 \end{array}
\end{equation}
Though the above formulation is not an LP, both the objective function and the constraints can be linearized using standard techniques. It is not difficult to see that the optimum value of the LP is exactly the minimum cost of dominating all of $\{ e_{j_1},\dots,e_{j_k}\}$ at $v$ by changing the values of the edges. We denote the optimum value by $\OPT_v(v_{j_1},\dots,v_{j_k})$. If $v$ is not saturated by $M$, then $\OPT_v(v_{j_1},\dots,v_{j_k})$ is defined to be $+\infty$, since then no edge can be dominated at $v$. By the maximality of $M$, at least one endpoint of each blocking edge is saturated, so we know that there exists a solution.

For each vertex $v\in A\cup B$, we define a function $f_v$. Let $X\subseteq U\cup W$. If $X\cap \{ v_1,\dots,v_{\ell_v}\} =\emptyset$, then define $f_v(X)\coloneqq 0$. Otherwise, if $X\cap \{ v_1,\dots,v_{\ell_v}\}=\{ v_{j_1},\dots,v_{j_k}\}$, then define $f_v(X)=f_v(\{ v_{j_1},\dots,v_{j_k}\} )\coloneqq OPT_v(v_{j_1},\dots,v_{j_k})$.

\begin{Claim}\label{cl:submod}
For each $v\in A\cup B$, $f_v$ is submodular. 
\end{Claim}
\begin{claimproof}
Our goal is to show that for any set $X\subseteq U\cup W$ and any two distinct elements $x,y\in (U\cup W)\setminus X$, we have $f_v(X+x+y)+f_v(X)\le f_v(X+x)+f_v(X+y)$. If the left hand side is $+\infty$, then $v$ must be unsaturated by $M$ and $X\cup\{x,y\}$ must contain a vertex from $\{ v_1,\dots,v_{\ell_v} \}$, hence the right hand side is also $+\infty$. Thus, we may assume that the left hand side is finite.

If $x\notin \{ v_1,\dots,v_{\ell_v}\}$, then the inequality holds by $f_v(X+x+y)=f_v(X+y)$ and $f_v(X+x)=f_v(X)$, while if $y\notin \{ v_1,\dots,v_{\ell_v}\}$, then the inequality holds by $f_v(X+x+y)=f_v(X+x)$ and $f_v(X+y)=f_v(X)$. 

It only remains to consider the case when $x,y\in \{ v_1,\dots,v_{\ell_v}\}$. Let $e$ and $f$ be the edges of $H$ incident to $x$ and $y$, respectively. By symmetry, we may assume that $p_v(e)\le p_v(f)$. Let $z_1$ be an optimal solution for $LP_v((X+x)\cap \{ v_1,\dots,v_{\ell_v}\} )$ and $z_2$ be an optimal solution for $LP_v((X+y)\cap \{ v_1,\dots,v_{\ell_v}\})$. If one of them is a feasible solution for $LP_v((X+x+y)\cap \{ v_1,\dots,v_{\ell_v}\})$, then since the other is a feasible solution of $LP_v(X\cap \{ v_1,\dots,v_{\ell_v}\})$, we get $f_v(X+x+y)+f_v(X)\le f_v(X+x)+f_v(X+y)$.

Otherwise, $f$ is not dominated when the edge values are set to $z_1$, and $e$ is not dominated when the edge values are set to $z_2$. Since $p_v(e)\le p_v(f)$, this means that $z_2(f)=\min_{g\in M(v)} z_2(g)< z_2(e)=p_v(e)\le p_v(f)$. To see this, observe that $p_v(f)\geq\min_{g\in M(v)} p_v(g)$ as non-blocking edges were deleted during the preprocessing step, and that setting the value of $z_2(f)$ below $\min_{g\in M(v)} z_2(g)$ or setting $z_2(e)\neq p_v(e)$ would be suboptimal. An analogous reasoning shows $z_1(e)=\min_{g\in M(v)} z_1(g)<z_1(f)=p_v(f)$. We distinguish two cases.

If $z_1(e)\le z_2(f)$, then $z_1(e)\le z_2(f)=\min_{g\in M(v)} z_2(g)<z_2(e)=p_v(e)\le z_1(f)$. Consider $z_1'$ that is obtained from $z_1$ by changing $z_1'(e)$ to $p_v(e)$ and $z_2'$ that is obtained from $z_2$ by changing $z_2'(e)$ to $z_2(f)$. Then $z_1'$ is a feasible solution for $LP_v(X\cap \{ v_1,\dots,v_{\ell_v}\})$, while $z_2'$ is a feasible solution for $LP_v((X+x+y)\cap \{ v_1,\dots,v_{\ell_v}\})$. As the sum of the objective values is at most as much as for $z_1$ and $z_2$, we get that $f_v(X+x+y)+f_v(X)\le f_v(X+x)+f_v(X+y)$.

If $z_2(f)\le z_1(e)$, then $z_2(f)\le z_1(e)\le z_2(e)=p_v(e)\le z_1(f)=p_v(f)$. Consider $z_1'$ that is obtained from $z_1$ by changing $z_1'(f)$ to $z_1(e)$ and $z_2'$ that is obtained from $z_2$ by changing $z_2'(f)$ to $p_v(f)$. Then $z_1'$ is a feasible solution for $LP_v((X+x+y)\cap \{ v_1,\dots,v_{\ell_v}\})$, while $z_2'$ is a feasible solution for $LP_v(X\cap \{ v_1,\dots,v_{\ell_v}\})$. As the sum of the objective values is at most as much as for $z_1$ and $z_2$, we get that $f_v(X+x+y)+f_v(X)\le f_v(X+x)+f_v(X+y)$.
\end{claimproof}

Let us define a cost function $c$ on the subsets of $U\cup W$ by $c(X)\coloneqq \sum_{v\in U}f_v(X)+\sum_{v\in W}f_v(X)$. Clearly, since each  function $f_v$ is submodular and the copies of the same vertex are on the same side, $c$ is bipartite-submodular. Note that the value of $f_v(X)$ can be computed in polynomial time by solving the corresponding LP, hence $c$ is also computable in polynomial time. By Theorem~\ref{thm:submod}, we can find a vertex cover of $H$ of minimum cost efficiently. 

By the nonnegativity of $c$ and the fact that each vertex has degree one in $H$, there exists an optimal vertex cover that covers exactly one endpoint of each edge. Such a vertex cover has the same cost as dominating each edge at the chosen endpoint in an optimal way. This implies that the cost of an optimum vertex cover is equal to the cost of an optimal preference modification in the original \textsc{$\ell_1$-min-pm} instance. Furthermore, given an optimal vertex cover, the optimal modified preference values can be determined by solving the corresponding LPs, concluding the proof of the theorem.
\end{proof}

The same reduction can be applied even if the original graph is not bipartite. However, then it is not possible to put each copy of a vertex on the same side of the auxiliary graph $H$ (which is a perfect matching), therefore the cost function $c=\sum f_v$ cannot be written as the sum of two submodular cost functions, which means that, though it is submodular, it is not bipartite-submodular. Therefore, by Theorem~\ref{thm:submod}, we get the following.

\begin{Th}\label{thm:2approx}
There is a polynomial-time 2-approximation algorithm for the \textsc{$\ell_1$-min-pm} problem. \hfill\qed
\end{Th}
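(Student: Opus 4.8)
The plan is to recycle, essentially word for word, the reduction from the proof of Theorem~\ref{thm:bipartite}, and to point out that the only place where bipartiteness of the input graph $G$ was used is in the final appeal to Theorem~\ref{thm:submod}; dropping that hypothesis costs us exactly the factor $2$.

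First I would run the same preprocessing: if $M$ is not maximal, report infeasibility; otherwise delete every non-matching edge that is already dominated, so that afterwards every edge outside $M$ is blocking and, by maximality, has at least one $M$-saturated endpoint. Then build the auxiliary graph $H$ exactly as before --- one copy $v_1,\dots,v_{\ell_v}$ of each vertex $v$, one per non-matching edge at $v$, with $u_iv_j$ an edge of $H$ precisely when the corresponding blocking edge is the $i$th one at $u$ and the $j$th one at $v$ --- together with the functions $f_v$ defined via the optimum values of the programs $LP_v$. The key observation is that the proof of Claim~\ref{cl:submod} never uses that $G$ is bipartite, so each $f_v$ is submodular, and hence $c\coloneqq\sum_v f_v$ is submodular and polynomial-time computable (each $f_v$ is evaluated by solving one linearized LP). What fails is bipartite-submodularity: $H$ is a perfect matching and therefore bipartite, but in a $2$-colouring of $H$ the copies of a single vertex $v$ of $G$ may land on both sides, so $c$ cannot be written as $g_U(X\cap U)+g_W(X\cap W)$ the way it could when $G$ was bipartite.

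Next I would apply the first half of Theorem~\ref{thm:submod} to the minimum cost vertex cover instance $(H,c)$, obtaining in polynomial time a vertex cover $X$ of $H$ with $c(X)\le 2\cdot\OPT_{\mathrm{VC}}$, where $\OPT_{\mathrm{VC}}$ denotes the minimum cost of a vertex cover of $H$. From $X$ I would read off a preference matrix $p'$ by setting, for every vertex $v$ of $G$, the $v$-th row of $p'$ to an optimal solution of the program $LP_v$ associated with the copies of $v$ contained in $X$; since distinct rows of the preference matrix are independent, this is well defined, and its total cost is exactly $\sum_v f_v(X)=c(X)$. Because $X$ is a vertex cover of $H$, every blocking edge is dominated at (at least) one of its endpoints under $p'$, so $M$ is weakly stable with respect to $p'$. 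In the other direction, exactly as in the proof of Theorem~\ref{thm:bipartite}, any feasible $p^*$ yields a vertex cover $X^*$ of $H$ --- for each blocking edge pick an $M$-saturated endpoint at which it is dominated, which exists by the preprocessing --- with $c(X^*)\le\cost(p^*)$; hence $\OPT_{\mathrm{VC}}$ equals the optimum of the $\ell_1$-\textsc{min-pm} instance, and therefore $\cost(p')=c(X)\le 2\cdot\OPT_{\mathrm{VC}}=2\cdot\OPT$.

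I do not expect a genuine obstacle here, since the substantive work (submodularity of the $f_v$ and the vertex-cover correspondence) is already contained in the proof of Theorem~\ref{thm:bipartite}. The one point that deserves a sentence of care is that Hochbaum's algorithm may return a vertex cover $X$ containing both copies associated with some blocking edge; this is harmless, because reconstructing $p'$ row by row as above still produces a feasible modification of cost precisely $c(X)$ (dominating an edge at both of its endpoints is allowed, merely possibly wasteful), so no pruning of $X$ is needed and the $2$-approximation guarantee is preserved verbatim.
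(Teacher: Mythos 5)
Your proposal is correct and matches the paper's argument: the paper likewise reuses the reduction of Theorem~\ref{thm:bipartite} verbatim, observes that $c=\sum_v f_v$ remains submodular (and polynomial-time computable) even though the copies of a vertex may now lie on both sides of the perfect matching $H$, so bipartite-submodularity fails, and then invokes the $2$-approximation half of Theorem~\ref{thm:submod}. Your extra remark that a cover using both endpoints of an edge is merely wasteful, not infeasible, is a fine (and harmless) addition to what the paper leaves implicit.
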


Note that, assuming the Unique Games Conjecture, the approximation given by Theorem~\ref{thm:2approx} is best possible by Theorem~\ref{thm:hardness}.

\begin{Rem}
If an arbitrary (not necessarily maximal) $q$-matching $M$ is given and the goal is to modify the preferences in such a way that there exists a stable $q$-matching containing $M$, then the problem becomes difficult. Indeed, deciding the existence of a stable matching containing a given edge in the stable marriage problem with ties is NP-complete, as was proved by Cseh and Heeger~\cite{cseh2020stable}.
\end{Rem}

\begin{Rem}
If the values are not allowed to be changed on the edges of $M$, then the problem becomes very simple even for hypergraphs. Indeed, in such a case one only has to go through the hyperedges $f$ not in $M$ and decrease the value of $p_v(f)$ to $\min_{e\in M(v)} p_v(e)$ at a vertex $v$ where the cost of such a change is minimum, that is, $v\in\argmin_{u \in f} |p_u(f)-\min_{e\in M(u)} p_u(e)|$.
\end{Rem}

The technique described in the proof of Theorem~\ref{thm:bipartite} can be extended to a more general setting. Assume that for each vertex-edge pair $v\in V$ and $e\in E(v)$, nonnegative lower and upper bounds $\ell_v(e),u_v(e)$ are given for the target value of $e$ at $v$. We call this problem $\ell_1$-\textsc{min-pmri} (minimum preference modification with restricted intervals).

\searchprob{$\ell_1$-min-pmri}
{
A maximal $q$-matching $M$, a matrix $p$ describing the preferences of the agents, and nonnegative lower and upper bounds $\ell,u$ on the values of the modified preferences.
}
{
Preferences $p'$ of minimum cost such that $M$ is weakly stable with respect to $p'$ and $\ell_v(e)\leq p'_v(e)\leq u_v(e)$ for each $v\in V$, $e\in E(v)$.
}

Note that this extra condition does not make the problem harder. First, one can apply the following preprocessing step: if $p_v(e)\leq \ell_v(e)$ for some vertex $v$ and $e\in E(v)$, then change the value of $p_v(e)$ to $\ell_v(e)$, and if $p_v(e)\geq u_v(e)$ for some vertex $v$ and $e\in E(v)$, then change the value of $p_v(e)$ to $u_v(e)$. It is not difficult to see that this results in an equivalent problem. After the preprocessing step, the problem can be simply encoded into LPs as before by changing each $LP_v(v_{j_1},\dots,v_{j_k})$ as follows:
\begin{equation*}
 \begin{array}{rll}
  \min & \displaystyle \sum_{e\in E(v)}|x(e)-p_v(e)| \\[15pt]
  \text{s.\,t.} & 0\le x(e^v_{j_i})\le \min_{e\in M(v)}  x(e) & \qquad \text{for $1\leq i\leq k$}\\
     & 0 \le x(e)\le u_v(e) & \qquad \text{for $e\in M$}\\
     & \ell_v(e) \le x(e) & \qquad \text{for $e\in E\setminus M$}
 \end{array}
\end{equation*}

As a consequence, we have the following result. 
\begin{Cor}
 $\ell_1$-\textsc{min-pmri} can be solved in polynomial time in bipartite graphs, and admits a polynomial-time 2-approximation in general graphs.
\end{Cor}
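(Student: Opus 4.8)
The plan is to reuse the reduction from the proof of Theorem~\ref{thm:bipartite} word for word, the only modification being that every program $LP_v(v_{j_1},\dots,v_{j_k})$ is replaced by the box-constrained variant displayed above. First I would perform the obvious preprocessing: for every $v\in V$ and $e\in E(v)$ with $p_v(e)<\ell_v(e)$ raise $p_v(e)$ to $\ell_v(e)$, and for every such pair with $p_v(e)>u_v(e)$ lower $p_v(e)$ to $u_v(e)$. Any admissible target matrix $p'$ satisfies $\ell_v(e)\le p'_v(e)\le u_v(e)$, so sliding $p_v(e)$ towards $p'_v(e)$ until it enters the interval only decreases $|p_v(e)-p'_v(e)|$ while leaving the set of admissible matrices untouched; hence the clamped instance is equivalent, and from now on I may assume $\ell_v(e)\le p_v(e)\le u_v(e)$ for all $v,e$.

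Next I would build the auxiliary bipartite graph $H=(U,W;F)$ exactly as before, set $f_v(X):=\OPT_v(v_{j_1},\dots,v_{j_k})$ where now $\OPT_v$ is the optimum of the box-constrained LP (and $\OPT_v:=+\infty$ whenever $v$ is unsaturated by $M$, or whenever the program is infeasible because every matching edge at $v$ has upper bound below the lower bound of some edge to be dominated), and define $c:=\sum_v f_v$. As in the proof of Theorem~\ref{thm:bipartite}, each $f_v(X)$ is computable in polynomial time after linearising the program, the copies of a vertex lie on the same side of $H$ in the bipartite case, and the reductions back and forth are value-preserving provided each $f_v$ is submodular. Thus the whole argument reduces to re-establishing Claim~\ref{cl:submod} for the box-constrained $f_v$.

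This re-proof of Claim~\ref{cl:submod} is the only place where care is needed, and I expect it to be the main (though modest) obstacle. I would run the same exchange argument: given optimal solutions $z_1,z_2$ of the ``$+x$'' and ``$+y$'' programs, the modified solutions $z_1',z_2'$ are obtained by resetting a single non-matching coordinate ($e$ or $f$) to a value squeezed between its preprocessed entry $p_v(\cdot)$ and $\min_{g\in M(v)}z(g)$, with all matching coordinates left untouched. Feasibility for the extra constraints is then automatic: the reset coordinate stays $\ge\ell_v(\cdot)$ because after preprocessing $p_v(\cdot)\ge\ell_v(\cdot)$ and because the case hypothesis ($z_1(e)\le z_2(f)$ or $z_2(f)\le z_1(e)$), together with feasibility of $z_1$ or $z_2$ for the neighbouring program, forces the other candidate value to be $\ge\ell_v(\cdot)$ as well; the bounds $x(g)\le u_v(g)$ on matching edges survive since those coordinates do not move; and the $+\infty$ case propagates from the left-hand side to the right-hand side exactly as in the original proof. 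With this, every line of Claim~\ref{cl:submod} goes through and each $f_v$ is submodular.

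Finally I would invoke Theorem~\ref{thm:submod}. In the bipartite case $c$ is bipartite-submodular and polynomial-time computable, so a minimum-cost vertex cover of $H$ can be found in polynomial time; as before there is an optimal cover meeting exactly one endpoint of each edge of $H$, its cost equals that of an optimal admissible modification, and the modified preferences themselves are recovered by solving the corresponding LPs. In a general graph $c$ is still submodular but, the copies of a vertex no longer lying on one side, is no longer bipartite-submodular, so Theorem~\ref{thm:submod} instead yields a polynomial-time $2$-approximation. This establishes the corollary.
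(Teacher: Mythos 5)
Your proposal is correct and follows essentially the same route as the paper: clamp the given values into the intervals as a preprocessing step, add the box constraints to each $LP_v(v_{j_1},\dots,v_{j_k})$, and rerun the reduction of Theorem~\ref{thm:bipartite} together with Theorem~\ref{thm:submod}. The only difference is that you explicitly re-verify Claim~\ref{cl:submod} (including the $+\infty$/infeasibility cases) for the box-constrained programs, a check the paper leaves implicit, and your exchange argument for it is sound.
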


\begin{Rem}
The proposed framework can also be extended to more general cases, for example when the cost for each individual agent might depend not only of the $\ell_1$ norm of the changes, but on the agent itself - e.g., it might be a lot more difficult to get some agents to cooperate than others. This can be modeled by a cost function of the form $\cost(p')=\sum_{v\in V}\sum_{u\in N(v)}\lambda_v|p_v(u)-p'_v(u)|$, where $\lambda_v$ is a constant that depends on the agent $v$. Such a cost function can be encoded in the LP by multiplying the objective of each $LP_v(v_{j_1},\dots,v_{j_k})$ by $\lambda_v$.
\end{Rem}

%%%%%%%%%%%%%%%%
\section{Preference extension problems}
\label{sec:ext}
%%%%%%%%%%%%%%%%

%%%%%%%%%%%%%%%%
\subsection{Fixed and arbitrary preferences}
\label{sec:arb}
%%%%%%%%%%%%%%%%

Now we turn our attention to problems where the preference list of each agent is either fixed or can be set arbitrarily. Such problems may arise, for example, when dealing with cheating strategies, where some of the agents might submit falsified preferences in order to get a desired stable matching. 

Similar problems have been considered before, see e.g.~\cite{kobayashi2010cheating,gupta2018stable}. Kobayashi et al.~\cite{kobayashi2010cheating} investigated problems where each preference list has to be complete. They showed that if the preferences are fixed for each man, then deciding whether there are preference lists of the women such that a partial matching $M$ is included in the men-optimal stable matching is NP-complete. On the positive side, they showed that if the preferences of all the women unmatched by $M$ are fixed too, then the problem becomes polynomial-time solvable. 

The problems that we consider differ from previous results in several aspects: the completeness of the preference lists is not always assumed, the underlying graph might be non-bipartite, and some of the edges might be forced or forbidden to be contained in a stable matching. It should be emphasized that we are interested not only in men-optimal stable matchings, but in the existence of a stable matching satisfying the constraints in general. 

The problems that are featured in this section are the followings.

\decprob{K-forced-sm-strat}{
A bipartite graph $G=(A,B;E)$, strict preferences $>_a$ for each man $a\in A$, strict preferences $>_b$ for some women $b\in B$, and a partial matching $\mu$ with $|\mu|=K$.
}
{
Are there preferences for the rest of the women such that there exists a stable matching containing $\mu$?
}

\decprob{K-forbidden-sm-strat}{
A bipartite graph $G=(A,B;E)$, strict preferences $>_a$ for each man $a\in A$, strict preferences $>_b$ for some women $b\in B$, and a set of forbidden edges $\tau$ with $|\tau|=K$.
}
{
Are there preferences for the rest of the women such that there exists a stable matching disjoint from $\tau$?
}

\decprob{sr-strat}{
A graph $G=(V,E)$ and strict preferences $>_v$ for some agents $v\in V$.
}{
Are there preferences for the rest of the agents such that there exists a stable matching $M$?
}

First we consider the complexities of \textsc{1-forced-sm-strat} and \textsc{1-forbidden-sm-strat}.

\begin{Th}\label{thm:forced1}
\textsc{1-forced-sm-strat} and \textsc{1-forbidden-sm-strat} are NP-complete even if the preference list is fixed for none of the women.
\end{Th}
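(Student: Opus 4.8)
The plan is to reduce from the two NP-complete problems \textsc{1-forced-smti} and \textsc{1-forbidden-smti} of Cseh and Heeger, which are stated in the preliminaries. The key observation driving the reduction is that a tie in a woman's preference list can be "simulated" by leaving that woman's preferences unspecified: if $b$ is a woman with exactly two neighbors $a$ and $a'$ tied in the original \textsc{smti} instance, then in the strategic instance we simply do not fix $>_b$, so the adversary setting $b$'s preferences may choose either $a>_b a'$ or $a'>_b b$. The point is that, with only two neighbors, whichever strict order is chosen, an edge $ab$ or $a'b$ is blocking the matching with respect to the strict order exactly when it strongly blocks with respect to the tie — so a stable matching in the strategic instance corresponds precisely to a weakly stable matching in the \textsc{smti} instance.

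More precisely, I would first take an instance of \textsc{1-forced-smti} (resp. \textsc{1-forbidden-smti}), which by definition has ties only at women, and reduce it to an instance of \textsc{1-forced-sm-strat} (resp. \textsc{1-forbidden-sm-strat}) on the same bipartite graph $G=(A,B;E)$ with the same forced edge $e$ (resp. forbidden edge $f$), the same fixed strict preferences $>_a$ for every man, and with the preferences of every woman who has a tie declared to be \emph{not fixed}. Since \textsc{com-smti} — and hence the hard instances used by Cseh and Heeger — can be assumed to have every tie occurring at a woman with exactly two neighbors, I would verify (or arrange, by a small preliminary gadget replacing longer ties, if needed) that every unfixed woman has exactly two neighbors; this is the crucial structural requirement. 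The remaining women, whose lists are already strict, keep their fixed preferences. Note this establishes the "fixed for none of the women" strengthening only if \emph{every} woman has a tie in the source instance; to get that, I would additionally modify the source instance by replacing each strict woman $b$ with a small gadget that forces her original list while formally presenting her as untied (e.g.\ attach two dummy men to $b$ and exploit that in any stable matching $b$ must be matched within her original neighborhood), so that in the reduced instance no woman's list is prescribed.

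For correctness I would argue both directions. If the \textsc{smti} instance has a weakly stable matching $M$ containing $e$ (resp.\ avoiding $f$), then for each unfixed two-neighbor woman $b$ I set $>_b$ to rank her $M$-partner first (or either order if $b$ is unmatched by $M$); any edge $ab\notin M$ at such a $b$ that blocks $M$ under $>_b$ would, since $b$ strictly prefers $a$ to her $M$-partner only if she already weakly preferred him, be a strong blocking edge of $M$ in the \textsc{smti} instance, a contradiction — and edges at women with already-fixed strict lists are handled identically to the original instance. Conversely, any stable matching $M'$ for some choice of the unfixed preferences is in particular weakly stable in the \textsc{smti} instance, because a strong blocking edge $ab$ there would block $M'$ regardless of which of the two strict orders was chosen at $b$. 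Membership in NP is clear: a certificate is the assignment of preferences to the unfixed women together with the stable matching. The main obstacle I anticipate is the "none of the women" clause: one must take care that the dummy-men gadget used to hide the already-strict women does not itself introduce new stable matchings or destroy the forced/forbidden edge structure, so the gadget's blocking-pair analysis has to be checked carefully; but this is a routine local argument, and the conceptual heart of the reduction is the two-neighbor tie simulation described above.
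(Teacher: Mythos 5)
Your core reduction (break each two-neighbor tie by leaving that woman unfixed) is sound as far as it goes, but it only proves hardness of \textsc{1-forced-sm-strat}/\textsc{1-forbidden-sm-strat} in the form where the women with strict lists \emph{keep} their fixed preferences. The theorem claims more: hardness even when \emph{no} woman's list is fixed, and this is exactly where your argument has a genuine gap. Your converse direction ("a strong blocking edge would block $M'$ regardless of which of the two strict orders was chosen at $b$") is valid only for the tied two-neighbor women; for a woman $b$ whose list was originally strict but is now left unfixed, the adversary can simply rank $b$'s partner in $M'$ first, so no edge at $b$ ever blocks as long as $b$ is matched, and a strongly blocking edge of the SMTI instance at $b$ need not block $M'$. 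Consequently the strategic instance can be a yes-instance while the SMTI instance is a no-instance.

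The "dummy-men gadget" you propose to repair this is not a routine local argument and, as sketched, cannot work: attaching dummy men to $b$ can at most force $b$ to be matched (inside her original neighborhood), but it cannot force the adversary's choice of $b$'s order to agree with her prescribed strict ranking, because a fully free woman who is matched imposes no constraint whatsoever — all the stability information carried by her original ranking is lost. Indeed, when no woman is fixed, stability of a matching $M$ (for the best choice of women's lists) collapses to a purely one-sided condition on the men: $M$ must be maximal and no man may prefer an unmatched woman to his partner. This is why the paper proves the theorem by a reduction from \textsc{min-pom} (Pareto-optimal matchings in the one-sided model), using selector women and a pendant forced/forbidden edge to encode the size bound; the hardness has to come from a one-sided problem precisely because the women's rankings cannot be enforced. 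To complete your route you would have to show how to simulate two-sided blocking constraints using only the men's fixed lists, which is the actual content of the theorem and is not supplied by your gadget.
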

\begin{proof}
We start with the proof for \textsc{1-forbidden-sm-strat}; the problem is clearly in NP. To show NP-hardness, we reduce from \textsc{min-pom}, see the definition in Section~\ref{sec:prelim}. 

Given an instance $(G=(A,B;E),>),k$ of \textsc{min-pom}, we construct another bipartite graph $H=(U,W;F)$, where $U$ is the set of men and $W$ is the set of women as follows. Let $n\coloneqq |A|$. We extend the original bipartite graph by adding $n-k$ women $s_1,\dots ,s_{n-k}$ to $B$ that will be called \emph{selectors}. Finally, we add two men $u$ and $u'$ to $A$ and a woman $w$ to $B$. So $U=A\cup \{ u,u'\}$ is the set of men and $W=B\cup \{ w\} \cup \{ s_1,\dots,s_{n-k}\}$ is the set of women in the new instance. The edges are $F=E\cup \{ s_ia_j \mid  i\in [n-k], a_j\in A\} \cup \{ u's_i \mid  i\in [n-k]\}\cup \{ uw,u'w\}$. The preferences are fixed only for the men:
\begin{itemize} \itemsep0em
    \item $u$: $w$,
    \item $u'$: $s_1>\dots >s_{n-k}>w$,
    \item $a_i$: $[N(a_i)]>s_1>\dots >s_{n-k}$,
\end{itemize}
where $[N(a_i)]$ denotes the original neighbors of $a_i$ as in $G$ ranked in their original order, see Figure~\ref{fig:forced_strat}. Finally, let $\tau$ consists of the single forbidden edge $uw$. We claim that there is a Pareto-optimal matching of size at most $k$ in $G$ if and only if the preferences of the women can be set in a way such that there is a stable matching not containing $uv$ in $G'$.

\begin{figure}[t!]
\centering
\begin{subfigure}[t]{0.49\textwidth}
  \centering
  \includegraphics[width=.9\linewidth]{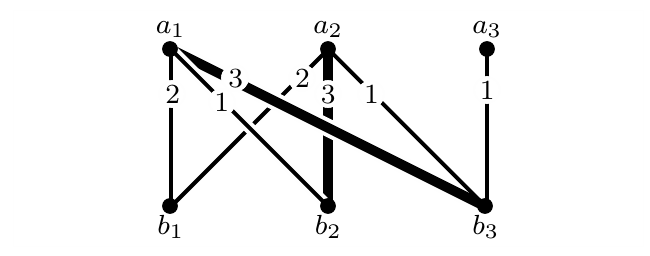}
  \caption{An instance of \textsc{min-pom}. Thick edges form a Pareto-optimal matching $M$.}
  \label{fig:forced_strat1}
\end{subfigure}\hfill
\begin{subfigure}[t]{0.49\textwidth}
  \centering
  \includegraphics[width=.9\linewidth]{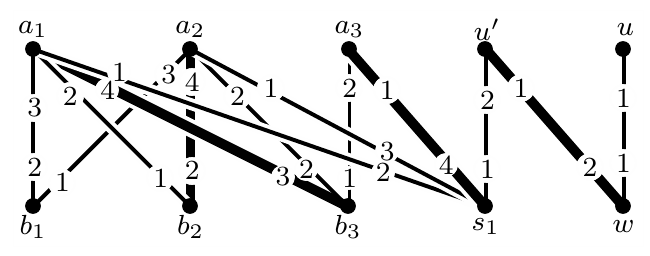}
  \caption{The corresponding \textsc{1-forbidden-sm-strat} instance. Thick edges form a stable matching $M'$ not containing $uw$.}
  \label{fig:forced_strat2}
\end{subfigure}
\caption{An illustration of Theorem~\ref{thm:forced1}. Higher values correspond to higher preferences.}
\label{fig:forced_strat}
\end{figure}

We claim that the original \textsc{min-pom} instance has a solution if and only if the \textsc{1-forbidden-sm-strat} instance thus obtained does. First suppose there exists a Pareto-optimal matching $M$ in $G$ of size $k$. Then extend this matching by assigning $s_1,\dots,s_{n-k}$ to $n-k$ agents in $A$ who are unassigned in $M$ arbitrarily, and assign $u'$ to $w$. Denote this matching by $M'$. Set the preferences of each woman such that they prefer their partner the most if they have any, and arbitrarily otherwise. Clearly, $M'$ does not contain $uw$. Suppose there is a blocking edge $xy$ where $x\in U$ and $y\in W$. Then $y$ can only be an unassigned woman, that is, $y\in B$. This implies that $x\in A$ and either $x$ is unmatched, $x$ is matched with a selector, or $x$ is matched with a $b_i$ in $M'$ that is worse than $y$. In the first two cases $M$ is not maximal, while the third option means that $M$ is not non-wasteful, a contradiction.

For the other direction, suppose there is a preference extension for the women and a matching $M'$ not containing $uw$ that is stable with respect to those preferences. This is equivalent to the case where each woman's preference list is just a tie and there is a weakly stable matching $M'$ not containing $uw$. Since $uw$ does not block, we have $u'w\in M'$. Therefore, each selector woman $s_i$ is assigned, since otherwise they would block with $u'$. Let $M$ be the restriction of $M'$ to $(A,B;E)$. The size of $M$ is at most $k$, because each selector is matched to $A$ and $|A|=n$. We claim that $M$ is non-wasteful and maximal. If it is not maximal, then there is an edge $ab$ that can be added, but $ab$ would block $M'$. Similarly, if an agent $a$ would prefer an unassigned agent $b$ to $M(a)$, then again $ab$ would block $M$.

We conclude that there is a maximal and non-wasteful matching $M$ for the original problem of size at most $k$. If $M$ is coalition-free, then we are done. Otherwise, if there is a coalition $(a_1,b_1,\dots ,a_r,b_r)$, then by changing $M$ to $(M\setminus \{ a_ib_i\mid i\in [r]\})\cup\{ a_ib_{i+1}\mid i\in [r] \}$, we get another matching that covers the same set of vertices, each agent weakly improves, and some agents strictly improve, so it is still maximal and non-wasteful. By iterating this step we get a matching $M$ covering the same set of vertices that is maximal, coalition-free, and non-wasteful, so it is a  Pareto-optimal matching of size at most $k$ by theorem \ref{thm:pareto}.

The forced edge case \textsc{1-forced-sm-strat} can be proved in a similar way, the only difference being that $u$ is dismissed from the construction and we set $\mu$ to contain the single forced edge $u'w$.
\end{proof}

\begin{Rem}
Kobayashi et al.~\cite{kobayashi2010cheating} showed hardness of the problem whether there exist preferences for the women such that the man-optimal stable matching contains a given matching $M$. Our results imply that if the graph is not complete, then the problem remains hard even if the matching consists of a single edge. To see this, observe that in our construction this problem also reduces to deciding whether there are preferences for women such that the man-optimal stable matching matches a given set of women. Indeed, the additional man $u'$ proposes to all of them before proposing to his forced pair $w$, so they all have to reject him, which is only possible if each of them get a partner that can be set to be the best on the preference list of the given woman. Since each stable matching covers the same set of agents, this is equivalent to deciding whether there is a stable matching that matches all the original women.
\end{Rem}

In the proof of Theorem~\ref{thm:forced1}, the reduction was based on a non-complete bipartite graph, e.g., it contained a pendant edge $uw$. We show that the problems remain NP-complete even in complete bipartite graphs. However, in this case the preference lists of some women might be fixed.

\begin{Th}\label{thm:forced2}
\textsc{1-forced-sm-strat} and \textsc{1-forbidden-sm-strat} are NP-complete even if the graph is a complete bipartite graph.
\end{Th}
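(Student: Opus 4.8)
The plan is to reduce again from \textsc{min-pom}, as in the proof of Theorem~\ref{thm:forced1}, but to embed the construction into a complete bipartite graph by adding all missing edges and ranking them low enough that they never interfere. The obstacle, of course, is that in Theorem~\ref{thm:forced1} we crucially used the pendant edge $uw$ (respectively the forced edge $u'w$) and the fact that the selector women had restricted neighbourhoods; once the graph is complete, a stray blocking edge between, say, a selector $s_i$ and some $a_j$ that was not an original neighbour could destroy the correspondence with Pareto-optimal matchings. So the first step is to decide which preference lists we fix: we will fix the lists of the women outside the copy of $B$ — namely $w$ and the selectors $s_1,\dots,s_{n-k}$ — while leaving the women in $B$ with arbitrary (to-be-chosen) preferences, exactly the women whose assignment encodes the matching $M$.

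Concretely, I would keep the men's preferences essentially as before: $u$ ranks $w$ first and then all other women arbitrarily at the bottom; $u'$ ranks $s_1>\dots>s_{n-k}>w$ and then the rest arbitrarily; each $a_i$ ranks its original neighbours $[N(a_i)]$ first, then $s_1>\dots>s_{n-k}$, then all remaining women at the bottom. For the fixed women: each selector $s_i$ ranks $u'$ first, then $a_1,\dots,a_n$ in an arbitrary fixed order, then everyone else last; and $w$ ranks $u$ first, then $u'$, then the rest last. The point of putting $u'$ first on every selector's list is that in any stable matching $u'$ must be matched to a selector or to $w$, and if $u'w\notin M'$ then $u'$ together with the best unmatched selector would block — so, as before, forbidding $uw$ (or forcing $u'w$) forces $u'w\in M'$ and hence forces every selector to be matched, necessarily to a man in $A$. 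The new low-ranked "completion" edges cannot block: an edge $xy$ with $y$ ranking $x$ in the bottom block is only blocking if $y$ is unmatched, but every woman in $W\setminus B$ is matched, and an unmatched woman in $B$ blocking with some $a_j$ via a completion edge would mean $a_j$ prefers a non-original-neighbour to $M'(a_j)$, which cannot happen because $a_j$'s list puts original neighbours and selectors above all completion edges and $a_j$ is matched to an original neighbour or a selector.

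With this setup the two directions go through just as in Theorem~\ref{thm:forced1}. Given a Pareto-optimal $M$ of size $\le k$ in $G$, extend it by matching the $n-k$ (or more) $A$-vertices unmatched by $M$ to the selectors, put $u'w$ in, set each woman in $B$ to prefer her $M'$-partner most (arbitrary otherwise); stability then reduces to maximality and non-wastefulness of $M$, which hold. Conversely, a stable $M'$ avoiding $uw$ forces $u'w\in M'$, forces all selectors matched into $A$, so the restriction $M$ of $M'$ to $G$ has size $\le k$ and is maximal and non-wasteful by the same argument; then the coalition-elimination step (rotating along coalitions, which preserves the covered vertex set and only improves agents) turns $M$ into a maximal, non-wasteful, coalition-free — hence Pareto-optimal, by Theorem~\ref{thm:pareto} — matching of size $\le k$. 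The forced-edge variant is identical with $u$ deleted and $\mu=\{u'w\}$. The main thing to check carefully, and the only real subtlety compared to Theorem~\ref{thm:forced1}, is the no-new-blocking-edge argument for the completion edges; everything else is a verbatim adaptation.
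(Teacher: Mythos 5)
Your plan has a genuine gap, and it sits exactly at the point you flagged as ``the only real subtlety.'' First, the concrete fixed lists you chose make both reduced instances trivially infeasible. You let $w$ rank $u$ first; then for any matching $M'$ with $uw\notin M'$, the edge $uw$ blocks ($u$ has $w$ at the top of his list, and $w$ prefers $u$ to any partner), so the \textsc{1-forbidden-sm-strat} instance is always a NO-instance regardless of the \textsc{min-pom} input. Similarly, you let every selector rank $u'$ first while $u'$ ranks $s_1$ first; then $u'$ and $s_1$ rank each other first, so $u's_1$ lies in every stable matching, which contradicts the forced edge $u'w$ (and contradicts the step ``$uw\notin M'$ forces $u'w\in M'$'' in the forbidden variant). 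The mechanism of Theorem~\ref{thm:forced1} relied on the selectors' lists being \emph{free}, so that each selector could be made to prefer her assigned partner in $A$ to $u'$; fixing $u'$ on top destroys it.

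Second, and more fundamentally, completing the graph is not innocuous for a reduction from \textsc{min-pom}: in a complete bipartite graph with strict preferences, every stable matching saturates an entire side, since an unmatched man and an unmatched woman always form a blocking pair whatever the free preferences are. Your intended correspondence needs a stable $M'$ in which $u$ is unmatched and the women of $B$ not covered by the size-$\le k$ matching are unmatched; any such pair $u,b$ blocks, so the forward direction fails whenever $|B|>k$. Your no-new-blocking check only considered an uncovered $b\in B$ blocking with a matched $a_j$, and overlooked $u$ and the unmatched men of $A$ (when $|M|<k$). The converse direction leaks for the same reason: in the complete graph $M'$ may match $a_i$'s and selectors along completion edges (e.g.\ a selector to $u$, or $a_i$ to a non-neighbour $b_j$ whose free list puts $a_i$ on top), and then the restriction of $M'$ to $E$ need not be maximal, non-wasteful, or of size $\le k$. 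The paper avoids all of this by reducing from \textsc{com-smti} instead: there $|A|=|B|$ and the target is a \emph{perfect} weakly stable matching, so saturation of both sides is exactly what one wants; the gadget $u,u',w,w'$ with forbidden edge $uw'$ (where $w'$ ranks $u$ first and $w$ ranks $u$ last) forces $uw,u'w'\in M'$ and forces $M'$ to be perfect, after which the completion edges are genuinely harmless. If you want to keep a \textsc{min-pom}-style reduction you would have to pad the instance so that the encoded matchings become perfect on one side, which essentially rebuilds that balanced construction.
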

\begin{proof}
We start with the proof for \textsc{1-forbidden-sm-strat}; the problem is clearly in NP. To show NP-hardness, we reduce from \textsc{com-smti}, see the definition in Section~\ref{sec:prelim}. 

Consider an instance $I=(G=(A,B;E), >)$ of \textsc{com-smti}, where $A=\{a_1,\dots,a_n\}$, $B=\{b_1,\dots,b_n\}$, and ties are restricted to a subset $b_1,\dots,b_{\ell}$, each of which have two neighbors. We construct a \textsc{1-forbidden-sm-strat} instance $I'$ as follows.
We extend the original bipartite graph by adding two men $u$ and $u'$ to $A$ and two women $w$ and $w'$ to $B$. That is, $U=A\cup\{u,u'\}$ and $W=B\cup\{w,w'\}$. Finally, $H$ is a complete graph, that is, $F=\{ xy \mid x\in U, y\in W\}$, see Figure~\ref{fig:forced_complete} for an example. 

Let $W_1=\{ b_i\mid i\in[\ell] \}$ and $W_2=\{ b_i\mid \ell < i\le n\}$. The fixed preferences are the following. Each $a_i$ has the same preferences as in $I$ as the top of its preference list, followed by $w$, and then by the remaining women in any order. Each $b_i\in W_2$ has the same preferences as in $I$ as the top of her preference list, followed by the remaining men in any order. The other fixed preference lists are the following:
\begin{itemize} \itemsep0em
    \item $w$: $u$ is the least preferred, otherwise arbitrary,
    \item $w'$: $u$, followed by $u'$, then the rest in arbitrary order,
    \item $u$: $w$, followed by $w'$, then the rest in arbitrary order,
    \item $u'$: $w'$, then the rest in arbitrary order.
\end{itemize}
Finally, let $\tau$ consists of the single forbidden edge $uw'$. 

\begin{figure}[t!]
\centering
\begin{subfigure}[t]{0.49\textwidth}
  \centering
  \includegraphics[width=.75\linewidth]{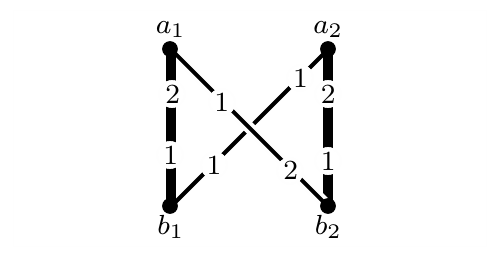}
  \caption{An instance of \textsc{com-smti} with ties at $b_1$. Thick edges form a weakly stable matching $M$.}
  \label{fig:forced_complete1}
\end{subfigure}\hfill
\begin{subfigure}[t]{0.49\textwidth}
  \centering
  \includegraphics[width=.75\linewidth]{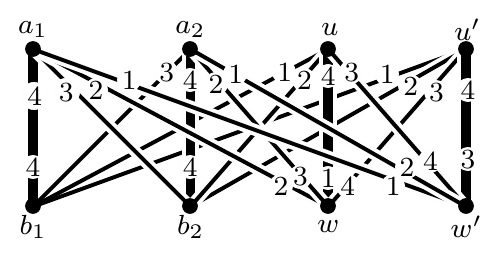}
  \caption{The corresponding \textsc{1-forbidden-sm-strat} instance. Thick edges form a stable matching $M'$, independently from the missing preference values.}
  \label{fig:forced_complete2}
\end{subfigure}
\caption{An illustration of Theorem~\ref{thm:forced2}. Higher values correspond to higher preferences.}
\label{fig:forced_complete}
\end{figure}

We claim that there is a perfect weakly stable matching $M$ in $G$ if and only if the preferences of the remaining women in $H$ can be set in a way such that there is a stable matching $M'$ avoiding $uw'$. First suppose there is a perfect weakly stable matching $M$ in $I$. Let $M'$ be the matching obtained from $M$ by adding $wu$ and $w'u'$. We set the preferences of the women in $W_1$ such that the best man for them is their partner in $M$. We claim that $M'$ is stable. Suppose indirectly that there is a blocking edge $xy$. Then $y$ cannot be $w$ since each man $a_i$ is with an original (and hence better) partner, and $u$ is with a better partner, too. Furthermore, $y$ cannot be $w'$ either, since her only better partner $u$ is with his best choice. Similarly, $y$ cannot be from $W_2$, since they get their most preferred choices. We conclude that $y=b_j \in W_2$ for some $\ell<j\leq n$, so her partner is the same as in $M$. Thus $x=a_i$ for some $1\leq i\leq n$, and $a_i$ must be a man in $A$ who prefers $b_j$ to his partner in $M$. But then $a_ib_j$ would block $M$, a contradiction.

For the other direction, suppose that we can set the preferences of the women in $W_1$ such that there is a stable matching $M'$ not containing $uw'$. Since $uw'$ does not block, we have $uw\in M'$. This implies that $u'w'$ must be in $M'$, too. So the agents corresponding to the original ones are matched among each other. Also, $M'$ has to be a perfect matching, since a man not covered by $M'$ would block with $w$. Let $M$ be the perfect matching obtained by restricting $M'$ to $H-\{u,u',w,w'\}$. If $M$ is not weakly stable in the original instance, then there exists a blocking edge between a woman $b_j$ where $j>\ell$ and a man $a_i$, but then $a_ib_j$ would block $M'$ too, a contradiction.

The forced edge case \textsc{1-forced-sm-strat} can be proved in a similar way, the only difference being that we set $\mu$ to contain the single forced edge $uw$.
\end{proof}

\begin{Rem}
Note that if the graph is a complete bipartite graph and no woman's preference list is fixed, then \textsc{k-forced-sm-strat} and \textsc{k-forbidden-sm-strat} become easily solvable: we just have to find a perfect matching satisfying the edge restrictions if one exists, which is an assignment problem.
\end{Rem}

\begin{Rem}
Theorem~\ref{thm:forced2} also implies NP-hardness of the stable allocation problem with forced or forbidden edges, where only one side has preferences.
\end{Rem}

Our results have an interesting consequence for the stable marriage problem with ties and the incomplete preference lists problem with forced and forbidden edges. 

\begin{Cor}
\textsc{1-forced-smti} and \textsc{1-forbidden-smti} are NP-complete even if each man's preference list is strict and each woman's preference list is just a single tie. The same holds if each man's preference list is strict, each woman's preference list is either strict or a single tie, and all preferences are complete.
\end{Cor}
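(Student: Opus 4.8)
The plan is to derive this corollary directly from the reductions constructed in Theorems~\ref{thm:forced1} and~\ref{thm:forced2}. The key observation, already exploited inside the proof of Theorem~\ref{thm:forced1}, is that asking whether the women's preferences can be extended so that a stable matching with the required edge restriction exists is \emph{equivalent} to asking whether the instance obtained by making every (previously unspecified) woman's list a single tie over all her neighbors admits a \emph{weakly} stable matching with that restriction. This is because a weakly stable matching of the all-ties instance is exactly a matching that is stable for \emph{some} strict refinement of those ties (orient each tie so each woman's assigned partner, if any, comes first), and conversely any strict refinement only creates more blocking edges. Hence \textsc{1-forced-sm-strat} with no woman's list fixed is literally the same problem as \textsc{1-forced-smti} with every man strict and every woman a single tie, and similarly for the forbidden variant.

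First I would state this equivalence explicitly as the bridge: the instance $I'$ produced in the proof of Theorem~\ref{thm:forced1} is a bipartite graph where the men have strict lists, the women's lists are unspecified, and there is one forced (resp.\ forbidden) edge; interpreting the women's lists as single ties turns $I'$ into a \textsc{1-forced-smti} (resp.\ \textsc{1-forbidden-smti}) instance, and the reduction from \textsc{min-pom} shows this is NP-complete. That gives the first sentence of the corollary. Membership in NP is immediate since a weakly stable matching is a polynomial-size certificate.

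For the second sentence — complete preference lists, women's lists either strict or a single tie — I would invoke the reduction in the proof of Theorem~\ref{thm:forced2} instead. There the underlying graph $H$ is complete bipartite, the men have strict complete lists, the women in $W_2\cup\{w,w'\}$ have fixed strict complete lists, and the women in $W_1$ have unspecified lists — but each $b_i\in W_1$ has only two ``relevant'' neighbors $a_{i_1},a_{i_2}$ at the top of what would be her list, matching the structure of \textsc{com-smti}. Making each $b_i\in W_1$ a single tie among her (two) original neighbors and then appending the remaining men in an arbitrary fixed strict order after the tie yields a complete list that is ``strict or a single tie plus a strict tail''; more cleanly, one can take the tie to be just over the two neighbors and note the tail is irrelevant because in any weakly stable matching those women are matched within $\{a_{i_1},a_{i_2}\}$ (otherwise a blocking pair arises exactly as in the proof of Theorem~\ref{thm:forced2}). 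So the \textsc{com-smti} reduction carries over verbatim and establishes NP-completeness in the complete-list setting with the stated restricted tie pattern.

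The only real subtlety — and the step I expect to need the most care — is checking that appending a strict tail to the single-tie women's lists in the complete-graph case does not change the set of weakly stable matchings relevant to the reduction; one must verify that the tail edges never participate in a strongly blocking pair, which follows because the men on those edges always prefer their assigned (original or $w$/$w'$) partners, exactly the argument already made in Theorem~\ref{thm:forced2}. Once that is observed, both halves of the corollary are immediate consequences of the two theorems, so no new construction is needed — it is purely a matter of reinterpreting the existing reductions through the strat/ties equivalence.

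\begin{proof}
Both problems are in NP, a weakly stable matching serving as a certificate. For the first statement, recall from the proof of Theorem~\ref{thm:forced1} that asking for a preference extension of the women making a stable matching with the prescribed edge restriction exist is equivalent to asking, in the instance where every unspecified woman's list is a single tie over all her neighbors, for a \emph{weakly} stable matching with that restriction: any such weakly stable matching is stable for the strict refinement putting each woman's partner first, and conversely refining a tie only adds blocking edges. Thus the NP-hardness reduction from \textsc{min-pom} in Theorem~\ref{thm:forced1}, in which the men have strict lists and no woman's list is fixed, is directly a reduction to \textsc{1-forced-smti} (resp.\ \textsc{1-forbidden-smti}) with every man's list strict and every woman's list a single tie.

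For the second statement, apply the reduction from \textsc{com-smti} in the proof of Theorem~\ref{thm:forced2}: the graph $H$ is complete bipartite, the men have strict complete lists, and the women in $W_2\cup\{w,w'\}$ have fixed strict complete lists, while each $b_i\in W_1$ has exactly two relevant neighbors $a_{i_1},a_{i_2}$. Interpret each such $b_i$ as having a single tie over $\{a_{i_1},a_{i_2}\}$ followed by the remaining men in an arbitrary fixed strict order, giving complete lists. As argued in the proof of Theorem~\ref{thm:forced2}, in any weakly stable matching of the resulting instance these women are matched within $\{a_{i_1},a_{i_2}\}$ and the appended strict tails never yield a strongly blocking pair, since the men on those edges are always with a more preferred partner. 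Hence the correspondence with perfect weakly stable matchings of the \textsc{com-smti} instance is unchanged, and \textsc{1-forced-smti} and \textsc{1-forbidden-smti} remain NP-complete when each man's list is strict, each woman's list is strict or a single tie, and all lists are complete.
\end{proof}
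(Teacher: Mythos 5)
Your first paragraph is exactly the paper's intended argument: the instances of Theorem~\ref{thm:forced1} have no woman's list fixed, so via the extension/weak-stability equivalence (already spelled out inside that proof) they \emph{are} \textsc{1-forced-smti}/\textsc{1-forbidden-smti} instances with strict men and each woman's whole list a single tie. That part is fine.

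The second half has a genuine mismatch with the statement you are proving. You give each woman $b_i\in W_1$ a tie over her two original neighbours \emph{followed by a strict tail} over the remaining men. Such a list is neither strict nor ``a single tie'' in the sense of the corollary (in the first sentence ``just a single tie'' clearly means the entire list is one tie, and the second sentence is parallel), so the instances your reduction produces lie outside the class for which you claim hardness; hardness for your class neither contains nor is contained in the claimed one. Your proposed alternative --- a tie over only the two neighbours --- fails the completeness requirement instead. The intended (and simpler) fix is the one you already used in part one: since the graph $H$ of Theorem~\ref{thm:forced2} is complete bipartite and the lists of the women in $W_1$ are not fixed, the extension/weak-stability equivalence turns that instance directly into a complete SMTI instance in which every man and every woman in $W_2\cup\{w,w'\}$ has a strict complete list and every woman in $W_1$ has her \emph{entire} list as one tie over all men. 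No tail, and hence none of the extra verification you flag as the ``subtle step'', is needed; with that substitution the corollary follows verbatim from Theorems~\ref{thm:forced1} and~\ref{thm:forced2}. (Incidentally, your auxiliary claim that in your modified instance the $W_1$ women must be matched inside their top tie is correct, but the justification is the placement of $w$ in the men's lists---a man matched below $w$ would form a blocking pair with $w$---not merely that ``the men on those edges always prefer their assigned partners''.)
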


Now we turn to the complexity of \textsc{sr-strat}.

\begin{Th}
\textsc{sr-strat} is NP-complete even for complete graphs.
\label{sr-strat}
\end{Th}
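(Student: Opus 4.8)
\textbf{Proof proposal for Theorem~\ref{sr-strat}.}

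The plan is to reduce from a hard stable-matching problem where not all preferences are given, most naturally \textsc{com-smti} (as in Theorems~\ref{thm:subset-del},~\ref{thm:cap-del}, and~\ref{thm:forced2}), since the ``free'' preference lists in \textsc{sr-strat} can be used to simulate ties: setting an agent's list to rank a group of neighbors arbitrarily but strictly is, from the point of view of which matchings are stable, essentially a choice of tie-breaking, and existence of a stable matching for \emph{some} tie-breaking corresponds to existence of a weakly stable matching. First I would take an instance $(G=(A,B;E),>)$ of \textsc{com-smti} with ties confined to $b_1,\dots,b_\ell$, each having exactly two neighbors, and build a stable roommates instance on vertex set $A\cup B$ plus a small constant-size gadget per agent (analogous to the $a_i',x_i,x_i'$ triangles used in Section~\ref{sec:del}) to force every original agent to be matched; the agents in $A$ and the non-tied agents in $B$ keep their original strict preferences (extended by the gadget vertices at the bottom), while the tied women $b_i$ and any newly added vertices are exactly the agents whose preferences are left free.

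The key steps, in order, are: (i) describe the gadget that makes saturation of all original agents mandatory in any stable matching of the roommates instance — a triangle on fresh vertices attached to each original agent so that leaving that agent unmatched creates an odd cycle with no stable matching, exactly as in the proof of Theorem~\ref{thm:subset-del}; (ii) argue the forward direction: given a perfect weakly stable matching $M$ of the \textsc{com-smti} instance, choose, for each free-preference vertex, a strict order that places its $M$-partner first (and is otherwise arbitrary), match the gadget edges in the unique stable way, and verify no blocking edge survives — original blocking edges would block $M$ in the weak-stability sense, gadget edges are handled by construction, and mixed edges are dominated because the gadget vertices sit at the bottom of the original agents' lists; (iii) argue the reverse direction: from a stable matching $M'$ of the roommates instance, the gadgets force every original agent to be matched to an original partner, so $M'$ restricts to a perfect matching $M$ of $G$, and since each free vertex could have broken its tie in any way, $M$ must be weakly stable in $(G,>)$ — any strongly blocking edge of $M$ would be a blocking edge of $M'$ regardless of how the free preferences were set. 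Finally, to get completeness of the graph, I would invoke the same trick as in the Remark after Theorem~\ref{thm:subset-del}: add all missing edges at the bottom of every agent's preference list (for the free-preference agents, simply require their chosen order to rank all original/gadget neighbors above the new dummy edges), which cannot create new blocking edges; membership in NP is immediate since a certificate is a choice of preferences together with a stable matching.

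The main obstacle I anticipate is making the reverse direction airtight: I need the gadgets to guarantee that \emph{no} choice of the free preferences can rescue an instance with no perfect weakly stable matching, i.e., that the only way the roommates instance becomes solvable is the ``honest'' one. Concretely, I must ensure that the free-preference vertices cannot, by a clever strict order, both avoid creating a blocking edge with their gadget and simultaneously dodge an original blocking edge that would have been strong under every tie-break — this is where the restriction that each tied $b_i$ has exactly two neighbors (inherited from \textsc{com-smti}) is doing real work, since with only two options the ``tie'' has only two tie-breaks and one checks both. A secondary technical point is confirming that attaching triangles does not introduce spurious stable matchings of the roommates instance in which a gadget vertex ``steals'' an original agent; this is avoided by ranking the gadget vertices strictly below all original neighbors, so such an edge is always dominated at the original agent once that agent is matched within $G$. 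I expect the gadget-based forcing and the two-neighbor case analysis to be routine but slightly fiddly, and everything else to follow the template already established in the paper.
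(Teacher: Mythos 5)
Your route is genuinely different from the paper's: the paper proves Theorem~\ref{sr-strat} by a reduction from \textsc{x3c}, building a seven-agent gadget with fixed preferences for each $3$-set plus $2n$ selector agents, and leaving the element and selector agents' lists free; you instead reduce from \textsc{com-smti}, exploiting that a free list at a tied woman acts as a tie-break and using triangle gadgets to force saturation. The core of your plan is sound and can be completed (the equivalence ``weakly stable $=$ stable under some tie-break'' is correct, and a reduction along your lines does work), but as written it has a real gap in the reverse direction and in the completeness step.

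The gap is that in \textsc{sr-strat} you cannot impose \emph{any} constraint on a free agent's list, yet your argument does so twice. First, you attach forcing triangles ``per agent'' and then claim ``the gadgets force every original agent to be matched to an original partner''; for a tied woman whose list is free this is false — she may rank her gadget vertices on top and stably park there, so her gadget forces nothing. Second, for the complete-graph strengthening you ``require their chosen order to rank all original/gadget neighbors above the new dummy edges'', which is exactly the kind of restriction the problem forbids (it is fine in the forward direction, where you choose the lists, but the reverse direction must handle arbitrary free lists over \emph{all} other agents). Both issues are repaired by routing all forcing through the fixed-preference side only: attach triangles $x_i,x_i'$ just to the men $a_i$ (with $a_i\colon$ original list $>x_i>x_i'$, $x_i\colon x_i'>a_i$, $x_i'\colon a_i>x_i$, and in the complete graph all dummy edges strictly below these in every fixed list), check that no choice of free lists lets a gadget vertex be ``stolen'' or lets $a_i$ sit below $x_i$ without creating a blocking edge, and then conclude from $|A|=|B|=n$ by counting that every woman — in particular every free one — is matched to an original neighbor in any stable matching. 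Weak stability of the restricted matching then follows because non-tied agents carry identical strict lists and a tied woman, whose original list is a single two-element tie, can never strongly block once matched to either neighbor; note this last point, rather than ``checking both tie-breaks'', is what the two-neighbor restriction buys you.
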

\begin{proof}
The problem is clearly in NP. To show NP-hardness, we reduce from \textsc{x3c}, see the definition in Section~\ref{sec:prelim}.

Let $I$ be an instance of the \textsc{x3c}. We construct an instance $I'$ of \textsc{sr-strat} with $G=(V,E)$ as follows. For each set $C_j$, we add seven agents $c_j^1,c^2_j,c_j^3,d_j^1,d_j^2,d_j^3,x_j$ called the \emph{gadget for $C_j$}. For each element $a_i$, we add an agent also denoted by $a_i$. Finally, we add $2n$ selector agents $s_1,\dots s_{2n}$, see Figure~\ref{fig:st_strat}.

The preference lists are defined as follows:
\begin{itemize} \itemsep0em
    \item the preference lists of agents $a_i$ ($1\leq i\leq 3n$) and $s_j$ ($1\leq j\leq 2n$) are not fixed,
    \item if $C_j=(a_{j_1},a_{j_2},a_{j_3})$ is the $j$-th set in $I$, then:\\
\begin{minipage}{0.48\textwidth}
\vspace{-0.3cm}
\begin{align*}
    x_j\colon{}&{}  [S]>c_j^1>c_j^2>c_j^3>d_j^1>\dots\\
    c_j^1\colon{}&{} a_{j_1}>x_j>c_j^2>d_j^1>c_j^3>\dots\\
    c_j^2\colon{}&{} a_{j_2}>x_j>c_j^3>c_j^1>\dots \\
    c_j^3\colon{}&{} a_{j_3}>x_j>c_j^1>c_j^2>\dots
\end{align*}
\end{minipage}
\hfill
\begin{minipage}{0.48\textwidth}
\vspace{-0.3cm}
\begin{align*}
    d_j^1\colon{}&{} x_j>d_j^2>c_j^1>d_j^3>\dots \\
    d_j^2\colon{}&{} d_j^3>d_j^1>\dots\\
    d_j^3\colon{}&{} d_j^1>d_j^2>\dots
\end{align*}
\end{minipage}

where $[S]$ denotes $s_1>\dots>s_{2n}$ and `$>\dots$' means that the rest of the agents are ranked in an arbitrary order.
\end{itemize}

\begin{figure}[t!]
    \centering
    \includegraphics[width=0.4\textwidth]{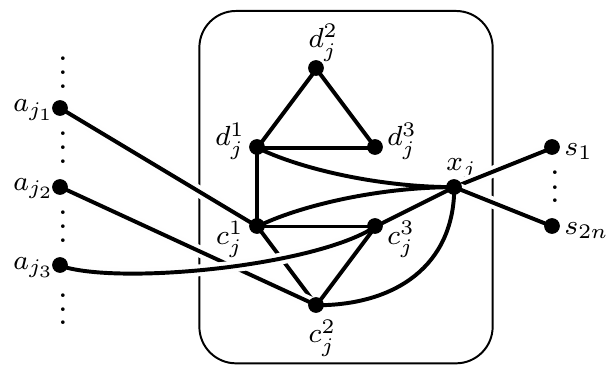}
    \caption{Gadget corresponding to set $C_j=\{a_{j_1},a_{j_2},a_{j_3}\}$ in Theorem~\ref{sr-strat} with the first few edges on the preference lists of the vertices.}
    \label{fig:st_strat}
\end{figure}

We claim that there is an exact 3-cover in $I$ if and only if the preferences of the remaining agents can be set in a way such that there is a stable matching $M$ in $G$. Suppose there is an exact 3-cover $\mathcal{C}'\subseteq\mathcal{C}$. Then, we make construct a matching $M$ as follows. We match the agents in $S=\{ s_1,\dots,s_{2n}\}$ to the agents $x_j$ that correspond to the 3-sets not in $\mathcal{C}'$. In the gadgets for these sets, we add the edges $c_j^2c_j^3, d_j^2d_j^3$ and $c_j^1d_j^1$ to the matching. In the gadgets of the sets in $\mathcal{C}'$, we add the edges $a_{j_1}c_j^1$, $a_{j_2}c_j^2$, $a_{j_3}c_j^3$, $xd_j^1$, and $d_j^2d_j^3$ to the matching. Finally, we set the preferences of the $a_i$s and $s_j$s such that each prefers its partner in $M$ the most. 

Suppose there is a blocking edge $uv$. It is clear that both $u$ and $v$ must be from the same set gadget, since each $a_i$ and $s_j$ is with their best partner, everyone is matched, and each agent in a set gadget prefers everyone in their gadget to anyone from another gadget. If $u$ and $v$ are in a gadget for a set $C_j \in \mathcal{C}'$, then each $c_j^i$ and $d_j^1$ is with their best partner, so they cannot block. Moreover, $x_j$ is with the best partner from the gadget that is not a $c_j^i$. Therefore, only $d_j^2d_j^3$ could block, but it is in $M$. If $u$ and $v$ are in a gadget for a set $C_j \notin \mathcal{C}'$, then $x_j$ is matched with its best partner, so it cannot block with anyone. It is straightforward to check that the edges $c_j^1c_j^2$, $c_j^3c_j^1$, $d_j^1d_j^2$, and $d_j^3d_j^1$ are all dominated, so no edge can block. 

Now suppose that there are preferences for the elements and selectors such that there exists a stable matching $M$ in $G$. We can assume that the elements and selector agents like their partner the most if there exists one. If a gadget for a set $C_j$ would get no selector and no element agent from outside, then $c_j^1$ and $x_j$ must be matched together. However, then there is no stable matching among the vertices $d_j^1$, $d_j^2$ and $d_j^3$. If the gadget for $C_j$ gets a selector, then we may assume that it is matched with $x_j$. In this case, there is a stable matching inside the gadget consisting of the edges $c_j^2c_j^3$, $c_j^1d_j^1$, and $d_j^2d_j^3$. If the gadget for $C_j$ gets no selector, then it must get all three of the corresponding element agents, as otherwise neither $x_j$ nor $c_j^1$ is matched to $d_j^1$, implying that there is no stable matching among the vertices $d_j^1$, $d_j^2$ and $,d_j^3$. Indeed, if only one or two element agents would get matched, then the preferences of $x_j$ imply that it must be matched to a remaining $c_j^i$ since the remaining $c_j^i$ prefers $x_j$ the most, and $x_j$ does not get a selector. If $x_j$ was matched to $c_j^2$ or $c_j^3$, then $x_jc_j^1$ would block, so it is matched to $c_j^1$.

In summary, for each set gadget, we must match either at least one selector agent or all three element agents to it in any stable matching. Therefore, if there is a stable matching $M$, then all set gadgets in $M$ must get either a selector or three element agents. But there are $2n$ selectors only, so there must be at least $n$ set gadgets that can only get element agents. By the above observation, each must get exactly three of those, so these $n$ sets must form an exact cover.
\end{proof}

The following corollary provides a different interpretation of Theorem~\ref{sr-strat}.

\begin{Cor}
\textsc{srt} is NP complete even if the graph is complete and each preference list is either strict or a single tie.
\end{Cor}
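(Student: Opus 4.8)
The plan is to read off the corollary from Theorem~\ref{sr-strat} by observing that \textsc{sr-strat} on complete graphs is, in disguise, exactly \textsc{srt} on complete graphs where every preference list is either strict or a single tie. Membership in NP is clear, since weak stability of a candidate matching can be verified in polynomial time. For hardness, I would take an arbitrary instance of \textsc{sr-strat} on a complete graph $G=(V,E)$ with strict preferences fixed on some subset of agents and build an \textsc{srt} instance on the same graph $G$ by keeping the fixed strict lists and giving every agent whose list is \emph{not} fixed the single tie that ranks all other vertices as equally good. The claim to establish is that the \textsc{sr-strat} instance is a yes-instance if and only if the resulting \textsc{srt} instance admits a weakly stable matching. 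Since the complete-graph instance constructed in the proof of Theorem~\ref{sr-strat} already has precisely this shape---only the gadget vertices $c_j^i$, $d_j^i$, $x_j$ carry fixed strict lists, while the element agents $a_i$ and the selectors $s_j$ are free and therefore become single-tie vertices---this equivalence transfers NP-hardness and proves the corollary.

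For the forward implication I would argue that if the free lists can be completed to strict orders making some matching $M$ stable in the resulting strict roommates instance, then $M$ is weakly stable in the \textsc{srt} instance: a strongly blocking edge $uv\notin M$ there would force each of $u,v$ to strictly prefer the other to its $M$-partner or to be unmatched; a single-tie agent strictly prefers nobody, so any single-tie endpoint of such an edge must be unmatched; but then the condition ``unmatched, or prefers the other'' still holds for both endpoints in the completed strict instance (fixed lists are unchanged, single-tie endpoints stay unmatched), so $uv$ blocks $M$ there too---contradicting the assumed stability.

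For the reverse implication, starting from a weakly stable matching $M$ of the \textsc{srt} instance, I would complete each free agent $v$'s list to a strict order by putting $M(v)$ on top when $v$ is matched and ordering the remaining neighbours arbitrarily, and then check that no edge $uv\notin M$ blocks $M$ in this strict instance by a short case analysis on whether each endpoint is fixed or free. A free matched endpoint cannot prefer the other to its top choice $M(v)$; if an endpoint is free and unmatched, then---together with the absence of a strongly blocking edge in the \textsc{srt} instance, and using that in any \textsc{srt} instance an edge between two unmatched vertices is strongly blocking---the other endpoint must be matched and must (weakly, hence after the completion strictly or at its top) prefer its partner to this endpoint; and if both endpoints are fixed, a blocking edge in the strict instance is literally a strongly blocking edge in the \textsc{srt} instance. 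In every case one reaches a contradiction, so $M$ is stable after the completion and the \textsc{sr-strat} instance is a yes-instance.

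I expect the only delicate point to be the bookkeeping in the reverse implication: keeping straight which of $u,v$ is fixed versus free, and invoking weak stability of the \textsc{srt} instance to eliminate the ``both endpoints unmatched'' branch. The rest is a routine translation between ``free preference list'' and ``single tie,'' so I anticipate no real obstacle beyond writing the case analysis cleanly.
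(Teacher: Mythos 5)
Your proposal is correct and is essentially the paper's intended argument: the corollary is just a reinterpretation of Theorem~\ref{sr-strat}, obtained by identifying an agent with a free preference list with an agent whose list is a single tie, so that stable preference extensions correspond exactly to weakly stable matchings (an equivalence the paper itself uses implicitly, e.g.\ in the proof of Theorem~\ref{thm:forced1}). Your two-directional case analysis fills in this equivalence correctly, so nothing is missing.
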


As a positive result, we show that the problem becomes tractable when the preferences are fixed for agents forming an independent set in the graph.

\begin{Th}
If the preferences are only fixed on an independent set, then \textsc{sr-strat} is polynomial-time solvable. 
\end{Th}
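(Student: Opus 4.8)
The plan is to reduce the problem to the ordinary stable roommates problem (without any fixed preferences) on a modified instance, which can then be solved via Tan's algorithm (Theorem~\ref{thm:tan}). Let $I\subseteq V$ be the independent set on which preferences are fixed, and let $J\coloneqq V\setminus I$ be the set of agents whose preferences may be chosen freely. The key observation is that since $I$ is independent, every edge of $G$ has at least one endpoint in $J$, so whether an edge blocks a matching is governed — on at least one side — by a freely choosable preference list. Intuitively, this should let us force the matching to behave like a "one-sided" stability problem on the $I$-side while giving full flexibility on the $J$-side.

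Concretely, I would first argue the following normalization: setting the preferences freely is equivalent to declaring each free agent's preference list to be a single tie (as was already used in the proof of Theorem~\ref{thm:forced1}), since a weakly stable matching with respect to the all-ties extension can always be refined to a strict extension making it stable (order each free agent's list so its partner, if any, is on top). Hence \textsc{sr-strat} with the free set $J$ is equivalent to asking: is there a matching $M$ that is weakly stable in the instance where agents in $J$ have an all-ties list and agents in $I$ keep their fixed strict lists? An edge $uv$ with $u\in I$, $v\in J$ strongly blocks only if $u$ strictly prefers $v$ to $M(u)$ (or is unmatched) and $v$ is unmatched (a matched $v$ is indifferent, so cannot strictly prefer $u$). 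An edge $uv$ with both endpoints in $J$ never strongly blocks. So the only obstruction is: an agent $u\in I$ and an unmatched $J$-agent $v\in N(u)$ with $u$ unmatched or $u$ preferring $v$ to $M(u)$.

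Thus the problem reduces to: find a matching $M$ such that for every $u\in I$, either $u$ is matched to its top available neighbor in a suitable sense, or all of its neighbors in $J$ that it prefers to $M(u)$ are themselves matched. This is exactly the structure of a "greedy matching from the $I$-side" and I would solve it as follows. Process the instance by iterating: if there is an $I$-agent $u$ that is currently unmatched and has an unmatched neighbor, it must be matched; match it to its most-preferred unmatched neighbor (this is forced, since otherwise $u$ together with that neighbor blocks). Deleting matched pairs and repeating, we either saturate all $I$-agents whose neighborhoods are exhausted, or get stuck — and one shows by an exchange argument that this greedy choice is without loss of generality, i.e. if any valid $M$ exists then the greedy one works. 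Equivalently, and perhaps cleaner to write up, one can model it as a stable matching instance: keep the fixed lists on $I$, and replace each $J$-agent $v$ by a gadget (or simply put $v$'s list as an arbitrary fixed order — since $v\in J$ is indifferent, any strict order works and produces no new blocking pairs among $J$–$J$ edges as long as we only ever match such edges when forced), then run Tan's algorithm; a stable matching of the resulting fixed instance exists iff the answer is yes.

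The main obstacle I anticipate is the correctness of the "greedy is optimal" / exchange argument: one must show that the forced matching decisions never over-commit, i.e. matching $u\in I$ to its best unmatched neighbor $v$ cannot destroy a solution that matched $v$ elsewhere. Here the independence of $I$ is essential — it guarantees that reassigning $v$ away from some $I$-neighbor $u'$ and giving $u'$ a different partner (or leaving $u'$ unmatched with all its preferred neighbors matched) can be carried out without creating a blocking pair among $J$-agents, since $J$ is indifferent. I would formalize this with an augmenting/alternating-path exchange between two matchings $M$ (greedy) and $M^*$ (hypothetical valid one) restricted to the bipartite-like structure between $I$ and $J$, showing the symmetric difference can be rerouted to match at least as many $I$-agents as $M^*$ while preserving weak stability. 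Once that lemma is in place, polynomial running time is immediate, since each iteration matches at least one vertex and the stability check is local.
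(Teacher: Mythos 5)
There is a genuine gap, and it starts already in your obstruction analysis. You claim that an edge with both endpoints in $J$ never strongly blocks, but by the blocking definition an edge whose two endpoints are \emph{both unmatched} blocks under every preference extension (an unmatched agent always "prefers" any neighbor). So the output matching must in particular be maximal, and your greedy procedure, which only ever matches $I$-agents to unmatched neighbors, cannot produce one: on the instance consisting of a single edge between two free agents it outputs the empty matching, which no choice of preferences makes stable, even though the instance is trivially a YES-instance. The paper closes exactly this hole with a second phase: after the greedy pass over the fixed agents, take an \emph{inclusionwise maximal} matching on the vertices that are still unmatched, and let every free agent rank its partner first. Your fallback suggestion of fixing arbitrary strict orders on $J$ and running Tan's algorithm is also not sound: for a triangle of three free agents with a cyclic order there is no stable matching for that particular extension, yet the \textsc{sr-strat} answer is YES.

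The second gap is that your main lemma (the "greedy is optimal" exchange argument comparing the greedy matching with a hypothetical solution $M^*$) is only sketched, and in fact it is the wrong thing to aim for: when the fixed agents form an independent set the answer is \emph{always} YES, so no comparison with a hypothetical solution is needed. The paper's argument is purely constructive: process the fixed agents one by one, match each to its best still-unmatched neighbor $u_i$ and make $v_i$ the top choice of $u_i$; then add a maximal matching on the remaining vertices with partner-on-top preferences. Stability is a short case check -- a free agent that is matched never prefers anyone to its partner; a fixed agent $v_i$ cannot block with an unmatched free neighbor $v$, because $v$ was available when $v_i$ chose, so $v_i$'s partner is strictly better; and two adjacent unmatched free agents cannot exist by maximality. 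Your write-up anticipates a nontrivial alternating-path argument where none is required, and omits the maximality step where one \emph{is} required.
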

\begin{proof}
Let the agents with fixed preference lists be $v_1,\dots,v_k$. For $i\in[k]$, we do the following. Let $v_i$ choose its best available partner $u_i$ if there is any, add $u_iv_i$ to $M$, and set $u_i$'s preference such that $v_i$ is the best. Then find an inclusionwise maximal matching among the rest of the vertices, add it to $M$, and set the preference of each remaining vertex in a way that it prefers its partner the most. It follows from the construction that the matching thus obtained is stable. 
\end{proof}

%%%%%%%%%%%%%%%%
\subsection{Fixed entries in the preference lists}
\label{sec:fixed}
%%%%%%%%%%%%%%%%

We move on to problems where the preference lists of the agents have some fixed places that cannot be modified. The setting we concentrate on is the following. We are given a graph $G=(V, E)$ with vertex capacities $q(v)$ for $v\in V$ and an inclusionwise maximal feasible $q$-matching $M\subseteq E$. For a saturated vertex $v$ and edge $e\in M(v)$, we have a fixed number $p_v(e)$ denoting that $e$ is the $p_v(e)$-th worst edge in $v$'s ranking on the incident edges. Furthermore, a similar $p_v(e)$ value might be given for other vertex-edge pairs as well. Note that the $p_v(e)$ values for $e\in M(v)$ where $v$ is non-saturated are not interesting in the sense that no edge can be dominated at such a vertex. The task is to determine whether there exists an extension of these partial preference lists such that the given $q$-matching $M$ is stable. Such an extension is called a \emph{stable preference extension}. In a sense, this problem is the inverse of the stable $q$-matching problem, where the preferences are set in advance and the task is to find a stable $q$-matching $M$. 

\decprob{inc-max-spe}
{
A graph $G=(V,E)$, vertex capacities $q\in\mathbb{Z}_+^V$, an inclusionwise maximal feasible $q$-matching $M\subseteq E$, preference values $p_v(e)$ for each saturated vertex $v\in V$ and $e\in M(v)$, and preference values for some other vertex-edge pairs.
}
{
Does there exist a preference extension for which $M$ is stable?
}

We show that the problem of finding a stable preference extension can be reduced to computing a maximum stable matching in a bipartite graph. 

\begin{Th}\label{thm:ext1}
For any graph $G=(V,E)$, $q$-matching $M$ and partial preference $p$ which fixes $p_v(e)$ for each saturated vertex $v$ and each edge $e \in M$ containing $v$, \textsc{inc-max-spe} an be solved in polynomial time.  
\end{Th}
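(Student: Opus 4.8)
The plan is to reduce \textsc{inc-max-spe} to a bipartite degree-constrained matching problem (equivalently, a single maximum-flow computation). The starting point is the domination characterization of stability recalled in Section~\ref{sec:prelim}: $M$ is stable exactly when every $uv\in E\setminus M$ is dominated at $u$ or at $v$. Since the hypothesis fixes $p_v(e)$ for \emph{every} saturated vertex $v$ and \emph{every} $e\in M(v)$, the value $m_v\coloneqq\min_{e\in M(v)}p_v(e)$ is determined by the input at each saturated $v$. A non-matching edge $uv$ is dominated at $u$ in an extension if and only if $u$ is saturated and $p_u(uv)<m_u$; moreover, positions $1,\dots,m_v-1$ at $v$ are exactly the ranks not occupied by a matching edge (as $m_v$ is the smallest matching rank), so in any extension exactly $m_v-1$ non-matching edges incident to $v$ are ranked below $m_v$. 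This counting fact, together with the inequality $m_v-1\le\deg(v)-q(v)$ (which holds because $m_v$ is the minimum of $q(v)$ distinct ranks), is what makes the reduction work.

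First I would run a preprocessing phase. Reject if the partial input is inconsistent (a fixed non-matching rank at $v$ coinciding with a matching rank, non-distinct or out-of-range fixed ranks, or more than $m_v-1$ non-matching edges of $v$ fixed below $m_v$). Then discard every $uv\in E\setminus M$ that is \emph{already} dominated, i.e.\ $p_u(uv)$ is fixed with $p_u(uv)<m_u$ (or symmetrically at $v$): such an edge is dominated in every extension. For each saturated $v$ let $f_v$ be the number of discarded non-matching edges at $v$ whose rank at $v$ was fixed below $m_v$, and set the spare capacity $c_v\coloneqq(m_v-1)-f_v\ge 0$. Call $v$ a \emph{feasible dominator} of a surviving edge $uv$ if $v$ is saturated and $p_v(uv)$ is unfixed (if it were fixed it is either $<m_v$, contradicting survival, or $\ge m_v$, so $v$ cannot dominate $uv$). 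If some surviving edge has no feasible dominator, reject; by inclusionwise maximality of $M$ every surviving edge has at least one saturated endpoint, so this is the only edge-level obstruction.

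The reduction itself: build a bipartite graph $H$ with one class the surviving non-matching edges and the other class the saturated vertices, joining a surviving edge $uv$ to each of its feasible dominators. Give each saturated vertex $v$ the capacity $c_v$ and each surviving-edge node the capacity $1$. I claim $M$ admits a stable preference extension if and only if $H$ has a subgraph in which every surviving-edge node has degree $1$ and every saturated $v$ has degree at most $c_v$ — a bipartite $b$-matching feasibility question, decidable in polynomial time by a max-flow computation or a Hall-type condition (equivalently, split each $v$ into $c_v$ copies and ask for a matching saturating the edge side). For the "if" direction, given such a subgraph, at each saturated $v$ place below rank $m_v$ the $f_v$ edges forced there together with the surviving edges assigned to $v$, fill the remaining below-$m_v$ positions arbitrarily with other non-matching edges of $v$ (the counting fact guarantees enough of them exist), keep the matching edges at their prescribed ranks, and order the rest arbitrarily; every non-matching edge then ends up dominated at one of its endpoints. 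For the "only if" direction, a stable extension assigns each surviving edge to an endpoint where it is dominated, and the counting fact bounds the total load below $m_v$ by $m_v-1$, hence the surviving-edge load at $v$ by $c_v$.

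The main obstacle I expect is not the reduction but the bookkeeping that justifies it: one must enumerate carefully every way the partial input can force or forbid a domination (fixed non-matching ranks below or above $m_v$, unsaturated endpoints, collisions with matching ranks) so that the preprocessing is exhaustive, and one must prove the completion lemma used in the "if" direction — that respecting the below-$m_v$ quota suffices to complete each partial preference list — which is exactly where the inequality $m_v-1\le\deg(v)-q(v)$ is needed. Once these are in place, polynomial solvability is a routine appeal to bipartite matching / max-flow.
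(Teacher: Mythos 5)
Your proposal is correct and follows essentially the same route as the paper: after discarding already-dominated non-matching edges and checking consistency, both reduce to a bipartite matching feasibility question in which each surviving non-matching edge must be matched to a saturated endpoint with a free position below its worst matching rank. Your capacity-$c_v$ $b$-matching formulation is just the paper's construction (which creates $p_v(f)-1$ copies of each saturated vertex and deletes the copies whose positions are already fixed) before splitting vertices into copies, and your explicit completion argument via $m_v-1\le\deg(v)-q(v)$ fills in a detail the paper leaves implicit.
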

\begin{proof}
We reduce the problem to deciding if there exists a matching that covers a given side of a bipartite graph $H=(U,W;F)$ entirely. We may assume that there is no edge $e\in E \setminus M$ whose rank is fixed on both of its vertices in such a way that $e$ necessarily blocks $M$, as otherwise the answer is clearly `NO'. Also, we may assume that no edge $e\in E \setminus M$ has a fixed rank at a vertex where all edges in $M$ have better rank, because then $e$ is dominated and can be removed.

We construct a bipartite graph $H=(U,W;F)$ as follows. For each saturated vertex $v\in V$, we add vertices $v_1,\dots,v_{p_v(f)-1}$ to $U$ where $f$ is the worst edge for $v$ in $M$. If $p_v(f)=1$, then we do not include any copies of $v$. Also, if there is an index $1\leq j\leq p_v(f)-1$ such that $j$ is already assigned to an edge, then we delete that copy of $v$ from the graph. For each edge $e \notin M$ we add a vertex $v_{e}$ to $W$. Finally, we add an edge between a copy $v_j$ of a vertex $v\in V$ and a vertex $v_{e}$ corresponding to edge $e\in E(v)$ if and only if $p_v(e)$ is not yet fixed. 

We claim that the preferences can be extended in such a way that $M$ is a stable $q$-matching if and only if $H$ admits a matching covering the vertices in $W$. First suppose there is a matching $N$ in $H$ covering all vertices in $W$. We allocate the unassigned preferences of the saturated vertices that are worse than its worst edge in any order to the edges that are matched to its copies in $N$. The remaining preferences are chosen arbitrarily from the free positions. Then $M$ will become a stable $q$-matching of $G$ with respect to those preferences, since any edge contains a vertex that is saturated with strictly better edges. 

For the other direction, suppose that it is possible to extend the preferences in $G$ such that $M$ becomes stable. If we take the matching $M$ in $H$ that matches $v_j$ to the vertex corresponding to its $j$'th worst edge, then $M$ covers all vertices in $W$ as an uncovered vertex would correspond to a blocking edge.

As a matching covering the vertices in $W$ can be found in polynomial time, the theorem follows.
\end{proof}

As a slightly more general problem, suppose that lower and upper bounds $\ell_v(e)$ and $u_v(e)$ are given for each vertex $v\in V$ and edge incident to $v$. The goal is to extend the preferences in such a way that $M$ is stable and $\ell_v(e)\leq p_v(e)\leq u_v(e)$ holds for each vertex $v$ and each edge $e$ adjacent to $v$. 

\begin{Th}\label{thm:low}
If only lower bounds are given, then \textsc{inc-max-spe} can be solved in polynomial time.  
\end{Th}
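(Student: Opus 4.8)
The plan is to extend the bipartite-matching reduction behind Theorem~\ref{thm:ext1}, the new twist being that the ranks at which an edge may legally be placed now depend on its lower bound. For a saturated vertex $v$ set $\theta_v:=\min_{f\in M(v)}p_v(f)$; then $e\in E(v)\setminus M(v)$ is dominated at $v$ exactly when its rank at $v$ is one of the \emph{low ranks} $1,\dots,\theta_v-1$. First I would preprocess: reject if some fixed $p_v(e)$ violates $\ell_v(e)\le p_v(e)$, if two edges incident to a common vertex carry equal fixed ranks, or if a fixed rank is out of range; then discard every non-matching edge that is already dominated by a fixed rank. What remains is a set $\mathcal E$ of non-matching edges, each of which must be made dominated by lowering one of its still-unfixed ranks; call an endpoint $v$ of $e\in\mathcal E$ \emph{usable} if $v$ is saturated and $p_v(e)$ is not fixed, noting that $M$ being maximal guarantees that each $e\in\mathcal E$ has a saturated endpoint.

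Next I would isolate a purely local necessary condition. At every vertex $v$ the final ranking must be a permutation of $\{1,\dots,\deg v\}$ compatible with the fixed entries and the lower bounds; this amounts to a perfect matching in the bipartite graph joining each free edge at $v$ to all free ranks $\ge$ its lower bound at $v$. Since that graph has a staircase structure, such a matching exists iff, listing the free ranks increasingly as $q_1<\dots<q_n$ and the free edges in nondecreasing order of lower bound as $e_1,\dots,e_n$, one has $q_i\ge\ell_v(e_i)$ for all $i$; call this condition (A), and fail if it is violated anywhere.

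The heart of the argument, and the step I expect to be the main obstacle, is the following lemma: if $v$ is saturated and satisfies (A) and $D$ is a set of free edges at $v$ that can be matched into low ranks respecting the lower bounds, then there is a full valid ranking at $v$ in which every edge of $D$ receives a low rank. I would prove it by matching $D$ greedily, in nondecreasing order of lower bound, into the smallest available low ranks, and then matching the remaining free edges, again by increasing lower bound, into the remaining free ranks in increasing order; the second stage succeeds because the $j$-th leftover edge lies among the $|D|+j$ free edges of smallest lower bound, so by (A) its lower bound is at most the $(|D|+j)$-th free rank. This is exactly the place where allowing upper bounds would break the argument, and it is why the problem collapses to a single matching rather than, say, a matroid-union instance.

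Finally I would build an auxiliary bipartite graph $H'$: one side is $\mathcal E$, the other is the disjoint union over saturated $v$ of the free low ranks of $v$, and $e\in\mathcal E$ is joined to the copy of a free low rank $q$ at its endpoint $v$ iff $v$ is usable for $e$ and $q\ge\ell_v(e)$. The claim is that the instance is feasible iff (A) holds everywhere and $H'$ has a matching $N$ saturating $\mathcal E$. Given such an $N$, for each saturated $v$ the edges of $\mathcal E$ that $N$ sends to $v$ form a partial matching of free edges into low ranks, which the lemma completes to a full valid ranking at $v$; choosing any valid ranking (possible by (A)) at the remaining vertices yields an extension in which $M$ is stable, since every edge of $\mathcal E$ is dominated at its $N$-endpoint and every discarded edge at the endpoint where its fixed rank dominates it. Conversely, from a stable extension pick for each $e\in\mathcal E$ one endpoint where it is dominated; grouping these choices by vertex gives, at each saturated $v$, a partial matching of free edges into low ranks respecting the lower bounds, and since the low-rank copies of distinct vertices are disjoint in $H'$ these paste together into an $\mathcal E$-saturating matching. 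Both condition (A) and the existence of $N$ are checkable in polynomial time and the extension is read off from $N$, giving the theorem; specialising to $\ell\equiv 1$ recovers exactly the reduction of Theorem~\ref{thm:ext1}.
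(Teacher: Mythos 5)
Your overall route is sound and genuinely different in organization from the paper's: you factor the problem into (i) one bipartite matching $N$ between the still-blocking edges and the free ``low-rank'' slots of saturated vertices, (ii) a purely local per-vertex feasibility condition (A), and (iii) a merging lemma saying that at each vertex a partial placement of dominated edges into low ranks extends to a full ranking respecting the lower bounds. The paper instead interleaves these concerns in a four-phase construction: it first matches the blocking edges to low-rank copies, then enlarges the auxiliary graph and grows the matching with the Hungarian method so that all low-rank copies and all forced-high edge copies are covered, and finally completes the assignment by observing that the leftover graph is a disjoint union of balanced complete bipartite graphs. Your decomposition is arguably cleaner (a single maximum-matching computation plus easy local checks, and condition (A) isolates exactly where upper bounds would break things), and your equivalence between feasibility and ``(A) everywhere plus an $\mathcal{E}$-saturating matching in $H'$'' is correct in both directions, including the preprocessing that handles edges with fixed ranks at both endpoints.

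There is, however, a gap exactly at the step you flagged: the justification of the second stage of the greedy in the merging lemma. From ``the $j$-th leftover edge is among the $|D|+j$ free edges of smallest lower bound'' and (A) you get that its bound is at most $q_{|D|+j}$, the $(|D|+j)$-th free rank; but what you need is that its bound is at most $r_j$, the $j$-th smallest rank \emph{remaining} after $D$ has been placed, and $r_j$ can be strictly smaller than $q_{|D|+j}$ whenever some edge of $D$ (because of a large lower bound) occupies a low free rank above $q_{|D|+j}$ while smaller free ranks are left over. So the displayed inequality chain does not close as written. The lemma itself is true and your algorithm is correct, but it needs a different argument: either verify Hall's condition for the restricted bipartite graph directly, using that every high free rank exceeds every low free rank, so each neighborhood is an interval anchored at $q_m$ (for $D$-edges) or at $q_n$ (for the rest), and the resulting threshold inequalities follow by adding (A) at one threshold to the matchability-of-$D$ condition at the other; or prove that your stage-one greedy uses a componentwise-minimal set of ranks among all placements of $D$ into low ranks, and combine this with the existence of \emph{some} valid completion (which the Hall argument supplies) to conclude that the canonical increasing assignment of the leftover edges succeeds. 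With either repair your proof is complete and, in my view, a tidier alternative to the paper's Phases 2--4.
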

\begin{proof}
We give an algorithm consisting of four  phases.
\smallskip

\noindent \textit{Phase 1:} 
We construct a bipartite graph $H=(U,W;F)$ similarly as in the proof of the unbounded case (Theorem~\ref{thm:ext1}). For each saturated vertex $u\in V$, we add $p_u(f)-k_u$ copies of $u$ to $U$, where $f$ is the worst edge for $u$ in $M$ and $k_u$ is the number of fixed values in the first $p_u(f)$ places. Also, for each edge $e\in E\setminus M$ that is not dominated already, we add a vertex $v_e$ to $W$. We add an edge between a copy $u_j$ of a vertex $u\in V$ and a vertex $v_{e}$ corresponding to an edge $e\in E(u)$ if and only if $p_u(e)$ is not yet fixed and $j\geq\ell_u(e)$. 

We check whether there exists a matching that covers all vertices of $W$; such a matching corresponds to the domination of the edges in $E\setminus M$ not yet dominated. If no such matching exists, then the algorithm stops with the answer `NO', the preferences cannot be extended in such a way that $M$ is a stable $q$-matching and the lower bounds are met. Otherwise, let $N_1$ be a matching covering $W$. 
\smallskip

\noindent \textit{Phase 2:}
We extend the bipartite graph $H$ by adding another vertex to $W$ for each edge that was not dominated originally, and two vertices for every edge that was not considered in Phase 1. That is, for each edge $e=xy\in E$, there are two corresponding vertices $v^x_e$ and $v^y_e$ in $W$. We modify the edges of $H$ in such a way that $v^x_e$ is only connected to the copies $x_i$ of $x$ and $v^y_e$ is only connected to the copies $y_j$ of $y$ where $i\geq\ell_x(e)$ an $j\geq\ell_y(e)$, respectively. Finally, we construct a matching $N_2$ in the extended bipartite graph by adding $v^u_eu_j$ to $N_2$ for each $v_eu_j\in N_1$.
\smallskip

\noindent \textit{Phase 3:}
Starting from the matching $N_2$, we check whether there exists a matching that covers all vertices of $U$ using the Hungarian method \cite{kuhn1955hungarian}. If no such matching exists, then the preferences cannot be extended in such a way that $M$ is a stable $q$-matching and the lower bounds are met, and the algorithm stops with the answer `NO'. If there exists such a matching, then a nice feature of the algorithm is that once a vertex becomes covered, it remains so throughout. Hence, for each edge $e=xy$, one of the vertices $v^x_e$ and $v^y_e$ is covered. We denote the resulting matching by $N_3$.
\smallskip

\noindent \textit{Phase 4:}
We further extend $H$ by adding all the copies of each vertex $u\in V$ to $U$ that have not been added in Phase 1. Also, we extend the set of edges such that there is an edge $u_jv^u_{e}$ for each $u\in V$, $e\in E(u)\setminus M(u)$ and $j\geq \ell_u(e)$. Consider the vertices corresponding to edges $e$ that were isolated until the addition of the new edges, that is, satisfied $\ell_u(e)\geq \min \{ p_u(f) \mid f\in M(u)\}$. We check if these vertices can all be matched to vertices not covered by $N_3$ by running the Hungarian algorithm. If no such matching exists, then there is no preference extension satisfying the lower bounds at all, and the algorithm outputs `NO'. 

Otherwise, let $N_4$ denote the matching obtained. We claim that the matching $N_3\cup N_4$ can be extended to a matching in such a way that it covers all vertices in $H$. To see this, observe that only lower bounds are given and for each remaining copy $v^u_{e}$ of an edge, the lower bound $\ell_u(e)$ is smaller than the indices of the uncovered copies of $u$. Indeed, all $v^u_e$ vertices with $l_u(e)\ge \min \{ p_u(f) \mid f\in M(u)\}$ are matched by $N_4$, and for each $u\in V$ the copies $u_j$ with $j\le \min \{ p_u(f) \mid f\in M(u)\}$ are matched by $N_3$. Therefore, the graph induced by the unmatched vertices consists of a union of complete bipartite graphs with equal sized color classes, so it admits a perfect matching $N_5$.
\smallskip

Setting the preferences according to $N_3\cup N_4\cup N_5$ gives a preference extension that satisfies all conditions.
\end{proof}

\begin{Rem}
Theorems~\ref{thm:ext1} and~\ref{thm:low} can be straightforwardly extended to hypergraphs. That is, given a hypergraph $H=(V, \mathcal{E})$, vertex capacities $q\in\mathbb{Z}_+^V$, a feasible $q$-matching $M\subseteq\mathcal{E}$, preference values $p_v(f)$ for $v\in V$ and $f\in M(v)$, preference values for some other vertex-hyperedge pairs, and lower bounds $\ell_v(f)$ for $v\in V$, $f\in E(v)$, one can decide in polynomial time if there exists a preference extension for which $M$ is stable.
\end{Rem}

A further extension would be to consider $q$-matchings $M$ that are not necessarily inclusionwise maximal. Let $M$ be an arbitrary $q$-matching and suppose that the places in the preference lists of the agents are fixed for the edges of $M$ for each saturated vertex and maybe on some other vertex-edge pairs, too. The goal is to set the remaining preferences in such a way that there exists a stable $q$-matching $M'$ containing $M$. 

\decprob{forced-spe}
{
A graph $G=(V,E)$, vertex capacities $q\in\mathbb{Z}_+^V$, a feasible $q$-matching $\mu \subseteq E$, preference values $p_v(e)$ for each saturated vertex $v\in V$ and $e\in \mu (v)$, and preference values for some other vertex-edge pairs.
}
{
Does there exist a preference extension together with a stable $q$-matching $M$ for which $\mu \subseteq M$?
}

Surprisingly, this problem becomes NP-complete even for the stable marriage case.

\begin{Th}\label{thm:forced-spe}
 \textsc{forced-spe} is NP complete even for bipartite graphs with capacity one on each vertex.
\end{Th}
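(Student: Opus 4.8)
The plan is to reduce from \textsc{1-forced-smti}, which is NP-complete by Cseh and Heeger~\cite{cseh2020stable}, exploiting the principle that an \emph{unspecified} block of positions in a partial preference list behaves exactly like a \emph{tie}. Membership in NP is immediate: a complete strict preference extension has polynomial size, and given it together with a candidate $q$-matching $M$ one checks in polynomial time that the extension respects every fixed value $p_v(e)$, that $\mu\subseteq M$, and that $M$ is a stable matching.

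For the reduction I would take an instance $(G=(A,B;E),>)$ of \textsc{1-forced-smti} with forced edge $e_0=a^*b^*$ and build a \textsc{forced-spe} instance on the same graph $G$ with all capacities equal to $1$ and $\mu\coloneqq\{e_0\}$. The partial preferences encode the \emph{strict} part of each weak order: at every vertex $v$ each tie-class of $>_v$ is pinned to occupy its block of consecutive ranks (so all cross-class comparisons are fixed), while the internal order of each tie-class is left free; equivalently $p_v(e)$ is fixed exactly when $e$ is not tied with any other edge at $v$. The one extra demand of \textsc{forced-spe} is that the forced edge must have a fixed rank at $a^*$ and $b^*$: I fix $p_{a^*}(e_0)$ (resp.\ $p_{b^*}(e_0)$) to be the \emph{top} rank of the tie-class of $e_0$ at $a^*$ (resp.\ $b^*$), and give the other members of that class the ranks just below.

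Correctness rests on the classical fact that a matching $M$ is weakly stable in $(G,>)$ if and only if it is stable with respect to some tie-breaking of $>$; concretely, the tie-breaking $>'$ that places $M(v)$ first inside its tie-class for every $v$ certifies stability of any weakly stable $M$ (any strong blocker of $M$ would be strictly preferred on both sides already in $>$). Now a preference extension of the constructed instance is precisely a strict refinement of $>$ in which, additionally, $b^*$ is first inside its class at $a^*$ and $a^*$ first inside its class at $b^*$. Hence, if $(G,>)$ has a weakly stable $M$ with $e_0\in M$, then $M(a^*)=b^*$ and $M(b^*)=a^*$, so the favouring tie-breaking $>'$ above already puts $e_0$'s endpoints on top of their classes and is a legal extension; under $>'$ the matching $M$ is strictly stable and $\mu\subseteq M$. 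Conversely, any extension of the constructed instance is a refinement of $>$, so a strictly stable $M\supseteq\mu$ under such an extension is weakly stable in $(G,>)$ and contains $e_0$. The two instances are therefore equivalent, which gives NP-hardness; since the construction already uses $|\mu|=1$, a bipartite graph, and unit capacities, the stated strengthening is automatic.

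The only delicate point — the main obstacle — is reconciling the mandatory fixed ranks of the forced edge at $a^*,b^*$ with the freedom needed for the tie-breaking argument: pinning $e_0$ to the top of its tie-class at both endpoints is exactly what makes the favouring tie-breaking compatible with the prescribed data, and one must verify this choice discards no solution, which holds because in every weakly stable matching containing $e_0$ the endpoints $a^*,b^*$ are matched to each other anyway. It is worth contrasting this with Theorems~\ref{thm:ext1} and~\ref{thm:low}: requiring $M$ (there, $\mu$) to be inclusionwise maximal made preference extension polynomial, whereas allowing an arbitrary forced $q$-matching makes it NP-complete already in the simplest, bipartite unit-capacity setting.
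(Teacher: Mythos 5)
Your overall strategy (simulate ties by leaving blocks of positions free, then reduce from \textsc{1-forced-smti}) is genuinely different from the paper, which gives a direct reduction from \textsc{x3c} with an explicit gadget. But as written there is a real gap in the key claim that ``a preference extension of the constructed instance is precisely a strict refinement of $>$''. The \textsc{forced-spe} formalism only lets you fix the exact rank $p_v(e)$ of an edge or leave it completely free; it has no way to confine a set of free edges to a prescribed block of positions. Your ``equivalently $p_v(e)$ is fixed exactly when $e$ is not tied with any other edge at $v$'' therefore does \emph{not} pin each tie-class to its block once a vertex has two or more tie-classes of size at least two: e.g.\ with classes $\{e_1,e_2\}>_v\{e_3,e_4\}$ nothing at $v$ is fixed, so the extension may interleave the classes and is not a refinement of $>_v$. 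Then the converse direction fails: a matching that is stable under such a scrambled extension need not be weakly stable in $(G,>)$ (its strong blockers may be hidden by the inverted cross-class comparisons), so your constructed instance can be a YES instance while the \textsc{1-forced-smti} instance is a NO instance.

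The argument can be repaired by reducing not from general \textsc{1-forced-smti} but from the restricted version in which every man's list is strict and every woman's list is a single tie (this restricted version is exactly what the paper's corollary after Theorem~\ref{thm:forced2} establishes as NP-complete, and it must be invoked explicitly). There every vertex has at most one non-singleton tie-class, so fixing all of the men's ranks, leaving the women's ranks free except for the mandatory value $p_{b^*}(e_0)$, makes every extension a refinement, and your favouring tie-breaking argument — including the observation that pinning $e_0$ at the top of $b^*$'s class discards no solution because $a^*b^*\in M$ forces $M(b^*)=a^*$ — then goes through. With that restriction stated and justified, your reduction is a valid and arguably more economical alternative to the paper's X3C construction; without it, the claimed equivalence is false in general.
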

\begin{proof}
The problem is clearly in NP. To show NP-hardness, we reduce from \textsc{x3c}, see the definition in Section~\ref{sec:prelim}.

Let $I$ be an instance of \textsc{x3c}. We construct an instance $I'$ with $H=(U,W;F)$ of \textsc{forced-spe} as follows. Let $P=\{ p_1,\dots,p_{3n}\}$ be a set of agents corresponding to the ground set $S$, and $Q=\{ q_1,\dots ,q_m\}$ be a set of agents corresponding to the 3-sets. These two sets of agents form one color class of the bipartite graph which will be referred to as men. Agents in $X=\{ x_1,\dots ,x_m\}$, $Y=\{ y_1,\dots ,y_m\}$ and $Z=\{ z_1,\dots ,z_{3n}\}$ form the other color class of the bipartite graph which will be referred to as women. Finally, we add a special agent $u$ to the set of men and a special agent $w$ to the set of women. So $U=P \cup Q \cup \{ u\}$, $W=X\cup Y\cup Z \cup \{ w\}$.

The forced matching is $\mu=\{uw, p_iz_i \mid  i\in[3n]\}$. The fixed preferences of the agents on the edges of $\mu$ are the following: 
\begin{itemize} \itemsep0em
    \item for $u$, the rank of $M(u)$ is $m-n+1$,
    \item for a man $p_i$, the rank of $M(p_i)$ is $2$,
    \item for a woman $z_i$, the rank of $M(z_i)$ is set arbitrarily.
\end{itemize}
We further add the edges $\{ ux_i, x_iq_i, q_iy_i \mid i\in[m]\}$ to $F$, and for each 3-set $C_j=\{ s_{i_1},s_{i_2},s_{i_3}\}$ of $I $ we add the edges $y_jp_{i_1}$, $y_jp_{i_2}$ and $y_jp_{i_3}$. 

We claim that there is an exact 3-cover in $I$ if and only if the preferences can be extended in a way that there is a stable matching $M$ containing $\mu$. Let us first suppose that $C_{i_1},\dots ,C_{i_n}$ is a solution of the \textsc{x3c} instance $I$. Let $M$ be the matching obtained by adding the edges $\{x_{i_j}q_{i_j}\mid j\in[n]\}$ and the edges $\{q_ky_k\mid k\notin \{ i_1,\dots ,i_n\}\}$ to $\mu$. The preferences are set such that the $m-n$ places worse than that of $uw$  for $u$ are given to the edges $\{ux_k\mid k\notin \{ i_1,\dots ,i_n\}\}$. For $k\notin \{ i_1,\dots ,i_n\}$, we set $q_k$ to prefer $y_k$ to $x_k$.  For $j\in[n]$, we set $q_{i_j}$ to prefer $x_{i_j}$ to $y_{i_j}$ and $x_{i_j}$ to prefer $q_{i_j}$ to $u$. Also for each $i=1,\dots 3n$, we set the worst position of $p_i$ to be the $y_j$ corresponding to the set $C_j\in \{ C_{i_1},\dots C_{i_n} \}$ containing the element $a_i$.
Since $C_{i_1},\dots ,C_{i_n}$ is a partition, there is no collision when doing this. Then, all edges not in $M$ are dominated: each edge of type $ux_j$ is dominated at either $u$ or $x_j$, each edge $q_jx_j\notin M$ is dominated at $q_j$, each edge $q_jy_j\notin M$ is dominated at $q_j$, and each edge $p_iy_j$ is dominated at either $y_j$ or $p_i$.

For the other direction, suppose there is a preference extension and a matching $M$ with $\mu \subseteq M$ that satisfy the conditions. For any triple $x_i$, $q_i$, and $y_i$, one of these vertices must be uncovered in $M$ since the only possible partner of both $x_i$ and $y_i$ is $q_i$ as the other neighbors are already matched in $\mu$. There are $m-n$ women $x_i$ such that $x_i$ is worse for $u$ than $w$. At the same time, there must be $n$ women $x_{i_1},\dots ,x_{i_n}$ that are better, so for the edges $ux_{i_j}$ to be dominated we must have $x_{i_j}q_{i_j}\in M'$. Then $y_{i_j}$ cannot be matched for $j=1,\dots n$, hence the edges that join these vertices to $P$ must be dominated at the vertices of $P$. But there are at least $n$ such $y_{i_j}$-s with altogether $3n$ edges incident to them and each $p_i\in P$ can dominate at most one such edge, therefore $C_{i_1},\dots ,C_{i_n}$ is necessarily a partition.
\end{proof} 

\begin{Rem}
If the size of the maximal matching $M$ containing $\mu$ is larger than the size of $\mu$ only by a constant, then the problem can be solved in polynomial time.
\end{Rem}

\begin{Rem}
The proof of Theorem~\ref{thm:forced-spe} implies that the bipartite variant of \textsc{forced-spe} when the preferences are set on $M$ only for one class of the agents is also NP-complete.
\end{Rem}

\begin{Rem}
If the graph is a complete bipartite graph, or a complete graph and the preferences are only set on the edges of $\mu$, then the problem can be solved in polynomial time. First, check whether the edges that connect vertices matched by $\mu$ can all be dominated as we did before. If this is possible, then find a perfect matching $M$ containing $\mu$ and set the edges of $M\setminus \mu$ as the best edges for every agent. Then $M$ is clearly a stable matching.  
\end{Rem}

Now suppose that a set $\tau$ of forbidden edges is given instead of forced ones while some preferences of the agents are fixed. The question is whether the preference lists can be extended such that there exists a stable matching $M$ disjoint from $\tau$.

\decprob{forbidden-spe}
{
A graph $G=(V,E)$, vertex capacities $q\in\mathbb{Z}_+^V$, a set $\tau \subseteq E$, and preference values for some other vertex-edge pairs.
}
{
Does there exist a preference extension together with a stable $q$-matching $M$ for which $\tau\cap M=\emptyset$?
}

Interestingly, \textsc{forbidden-spe} turns out to be NP-hard even if there are no predefined preferences, each capacity is one and the graph is bipartite. In that case the problem is equivalent to the following: Given a bipartite graph $G=(A,B;E)$ where $E$ is partitioned into red edges $E_r$ and blue edges $E_b$, decide if there is a matching $M$ in the subgraph $G_r=(A,B;E_r)$ such that the vertices of $M$ cover every blue edge. We call this problem \textsc{red-blue-edge-cover}.
\decprob{red-blue-edge-cover}
{
A bipartite graph $G=(A,B;E)$, with $E=E_b\cup E_r$, $E_b\cap E_r=\emptyset$.
}
{
Is there a matching $M\subseteq E_r$, such that for each $e=uv\in E_b:$ $M(u)\ne \emptyset$ or $M(v)\ne \emptyset$?
}

\begin{Th}
\textsc{red-blue-edge-cover} is NP-complete.
\end{Th}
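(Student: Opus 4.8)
The plan is to reduce from a well-known NP-complete problem that already carries a ``cover every element with limited resources'' flavor; the natural candidate is \textsc{x3c} (exact cover by 3-sets), which has been used repeatedly in this paper, or alternatively \textsc{3sat}. I will sketch the \textsc{x3c} reduction. Given an instance with ground set $\{a_1,\dots,a_{3n}\}$ and 3-sets $\mathcal{C}=\{C_1,\dots,C_m\}$, I build a bipartite graph as follows. For each set $C_j$ I introduce a ``selector'' vertex on one side together with an incident gadget whose purpose is to force a binary choice: either $C_j$ is ``picked'' (in which case the matching $M\subseteq E_r$ saturates a designated vertex that will be responsible for covering the blue edges encoding the three elements of $C_j$) or $C_j$ is ``not picked''. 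For each element $a_i$ I introduce a vertex $e_i$ on the opposite side, and I add a \emph{blue} edge from $e_i$ to a gadget vertex of $C_j$ whenever $a_i\in C_j$; all the ``choice'' edges inside the gadgets are \emph{red}. The red edges must also encode a global budget: I will attach the $m$ selector vertices to a small number of ``budget'' vertices via red edges so that a matching $M\subseteq E_r$ can ``activate'' at most $n$ of the set-gadgets, mirroring the fact that an exact cover uses exactly $n$ sets. The key design constraint is that a blue edge $e_i v$ is covered iff $v$ is saturated by $M$; since each $e_i$ has degree (in blue) equal to the number of sets containing $a_i$, and every such blue edge must be covered, every element must be ``hit'' by at least one activated set, while the budget forces at most $n$ activated sets — and $n$ sets hitting all $3n$ elements is possible only via an exact cover.

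The steps, in order, are: (1) describe the gadget for a single set $C_j$ (I expect this to be a short red path or red ``pendant'' structure so that a matching in $G_r$ has exactly two sensible states, activated / not activated, and in the not-activated state the designated vertex is unsaturated); (2) wire the blue element-edges and verify that ``$M$ covers all blue edges'' translates exactly to ``the activated sets cover all $3n$ elements''; (3) build the budget mechanism forcing $|\{\text{activated sets}\}|\le n$ via red edges to auxiliary vertices, and check it cannot be bypassed; (4) prove the forward direction: an exact cover $\mathcal{C}'$ yields a matching $M\subseteq E_r$ activating exactly the sets in $\mathcal{C}'$ and (by disjointness of an exact cover) covering every blue edge; (5) prove the reverse direction: any valid $M$ activates at most $n$ sets whose union must be all of $\{a_1,\dots,a_{3n}\}$, forcing an exact $3$-cover by a counting argument ($n$ sets, each of size $3$, covering $3n$ elements $\Rightarrow$ partition); (6) observe membership in NP is immediate, since one can check in polynomial time that $M\subseteq E_r$, that $M$ is a matching, and that every blue edge has a saturated endpoint.

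The main obstacle I anticipate is the budget mechanism in step (3): it must be realizable with \emph{red} edges only (so it genuinely constrains the matching $M\subseteq E_r$), it must allow \emph{any} $n$ of the $m$ set-gadgets to be simultaneously activated, and activating a set-gadget must be ``free'' in the sense of not itself consuming budget in a way that depends on which sets are chosen. A clean way to handle this is to let each selector vertex have exactly one red edge into its own gadget (the ``not-picked'' edge) and a shared structure to the gadget's designated vertex; the designated vertex of $C_j$ is then saturated precisely when $C_j$'s selector is matched \emph{inside} the gadget rather than to the ``free'' pendant — so ``activation'' is really a local choice and the only global constraint is the number of element-vertices, which is automatically $3n$. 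If phrased this way, no explicit budget gadget is needed at all: the counting in step (5) does the work, because $m$ can be large but only sets whose designated vertex is saturated can cover blue edges, and there is no way to saturate more than $\ldots$ — here one must double-check that the construction does not accidentally permit saturating all $m$ designated vertices, which would kill the reduction. Ensuring this via the gadget's internal structure (e.g.\ a single red edge whose two states are mutually exclusive, so that saturating the designated vertex of $C_j$ costs one matching edge that is ``paid for'' locally) is the delicate point; once it is set up correctly, both directions of the equivalence and NP membership are routine.
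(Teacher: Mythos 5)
There is a genuine gap: your proposal is a plan whose crucial gadget is never constructed, and you yourself flag the unresolved point (preventing the matching from activating all $m$ set gadgets) and trail off without resolving it. The difficulty is structural, not a detail. In \textsc{red-blue-edge-cover} the feasibility condition is monotone in the set of saturated vertices, so any ``budget'' on how many gadgets can be activated must come from the red graph itself, i.e.\ from designated vertices competing for shared red neighbours; nothing in your sketch creates such competition. Worse, the blue edges are universally quantified constraints: \emph{every} blue edge needs a saturated endpoint. With your wiring (a blue edge from $e_i$ to a gadget vertex of every $C_j$ containing $a_i$, and $e_i$ having no red edges and hence never saturable), every set gadget containing any element is forced to be activated, so the instance is either trivially infeasible or says nothing about exact covers. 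If you instead make $e_i$ saturable by red edges coming from the gadgets of the sets containing it, then without additional all-or-nothing machinery the question degenerates into whether each element can be matched to one red ``slot'' of some containing set --- a plain bipartite matching feasibility question, solvable in polynomial time --- so the equivalence you claim in your step (5) fails: activated sets need not be disjoint and need not number $n$. The counting argument you hope will ``do the work'' has no constraint to count against.

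For comparison, the paper reduces from \textsc{3sat} and obtains the needed mutual exclusion purely from deficiencies in the red graph: each clause becomes a red $K_{2,3}$ whose two $A$-vertices can saturate only two of the three literal-occurrence vertices, so at least one blue edge per clause must be covered at its variable end; each variable gadget consists of $y_i$, $z_i$ and a shared vertex $s_i$ with red edges $y_is_i$ and $z_is_i$, so at most one of $y_i,z_i$ can ever be saturated, encoding a consistent truth assignment. That deficiency-based, locally enforced exclusivity is precisely what your X3C sketch would need (both to forbid activating all gadgets and to force all-or-nothing activation of a set's three elements), and supplying it is the actual content of the proof; as written, your argument does not establish the hardness direction.
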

\begin{proof}
The problem is clearly in NP. To show NP-hardness, we reduce from \textsc{3sat}.

Let $\Phi$ be a \textsc{3sat} instance with clauses $C_1,\dots ,C_m$ and variables $x_1,\dots,x_n$. We construct a bipartite graph $G=(A,B;E)$ as follows. For each clause $C_j$, we add two vertices $c_j'$ and $c_j''$ to $A$. For each literal $x_i$ or $\overline{x}_i\in C_j$, we add a vertex $y_j^i$ or $z_j^i$, respectively, to $B$. We connect these five vertices with red edges, that is, a red $K_{2,3}$ subgraph is added for every clause. For each variable $x_i$, we add two vertices $y_i$ and $z_i$ to $A$, a selector vertex $s_i$ to $B$, and two red edges $y_is_i$ and $z_is_i$. Finally, for $i\in[n]$ we add blue edges between $y_i$ and $y_j^i$ for the values of $j$ for which $x_i\in C_j$, and between $z_i$ and $z_j^i$ for the values of $j$ for which $\overline{x}_i\in C_j$. 

We claim that $\Phi$ has a satisfying assignment if and only if $G$ admits a matching $M$ consisting of red edges whose end vertices cover every blue edge. Suppose there exists a satisfying assignment for $\Phi$. Then for each $i$, we add the red edge $y_is_i$ to $M$ if $x_i$ was true, otherwise the edge $z_is_i$. For each clause $C_j$, we pick a literal that is true and add two red edges from $a_i'$ and $a_i''$ to the vertices corresponding to the other two literals. This way each blue edge will be covered. Indeed, if a blue edge is not covered at its vertex in $A$, then it corresponds to a false literal. However, in such a case the edge is covered in one of the $K_{2,3}$s.

For the other direction, let $M$ be a red matching that covers every blue edge. In every red $K_{2,3}$, there is a blue edge from one of its vertices in $B$ that is not covered by edges of the $K_{2,3}$, hence it has to be covered by an edge $y_is_i$ or $z_is_i$, and only one of these two edges can be in $M$. If we set the values of the variables such that $x_i$ is true if and only if $y_is_i\in M$, then we get a satisfying assignment. 
\end{proof}

\medskip
%%%%%%%%%%%%%%%%%%%%%%%%
\paragraph{Acknowledgement.} The work was supported by the Lend\"ulet Programme of the Hungarian Academy of Sciences -- grant number LP2021-1/2021 and by the Hungarian National Research, Development and Innovation Office -- NKFIH, grant numbers FK128673 and TKP2020-NKA-06.
%%%%%%%%%%%%%%%%%%%%%%%%

%%%%%%%%%%%%%%%%
\bibliographystyle{abbrv}
\bibliography{preference_extensions}
%%%%%%%%%%%%%%%%
 
 \end{document}